\numberwithin{equation}{section}   
\newtheorem{theorem}{Theorem}[section]
\newtheorem{lemma}[theorem]{Lemma}
\newtheorem{proposition}[theorem]{Proposition}
\newtheorem{corollary}[theorem]{Corollary}
\theoremstyle{remark}
\newtheorem{remark}[theorem]{remark}
\DeclareMathOperator{\Corr}{Corr}
\DeclareMathOperator{\Range}{Range}
\newcommand{\E}{\mathbf{E}}
\newcommand{\X}{\mathcal{X}}
\newcommand{\mT}{\mathcal{T}}
\newcommand{\mF}{\mathcal{F}}
\newcommand{\mM}{\mathcal{M}}
\newcommand{\Beta}{\mathrm{Beta}}
\newcommand{\ud}{\mathrm{d}}
\newcommand{\be}{\begin{equation}}
\newcommand{\ee}{\end{equation}}
\newcommand{\bea}{\begin{eqnarray}}
\newcommand{\eea}{\end{eqnarray}}
\title{Adversarial Obstacle Placement with Spatial Point Processes for Optimal Path Disruption}
\author{%
  Li Zhou\thanks{Department of Mathematics and Statistics, Auburn University, Auburn, AL, USA. \texttt{lzz0062@auburn.edu}} \and
  Elvan Ceyhan\thanks{Corresponding author. Department of Mathematics and Statistics, Auburn University, Auburn, AL, USA. \texttt{ceyhan@auburn.edu}} \and
  Polat Charyyev\thanks{MAP Akademi, Istanbul, Turkey. \texttt{polatturkmen@gmail.com}}
}
\date{} 
\newcommand{\funding}{\footnotetext{This work was partially supported by ONR Grant N00014-22-1-2572 and NSF Award \#2319157. 
The authors contributed equally to this work.}}
\begin{document}
\maketitle
\funding

\abstract{
We investigate the Optimal Obstacle Placement (OOP) problem under uncertainty,
framed as the dual of the Optimal Traversal Path problem in the Stochastic
Obstacle Scene paradigm. 
We consider both continuous domains, discretized for analysis, 
and already discrete spatial grids that form weighted geospatial networks using 8-adjacency lattices. 
Our unified framework integrates OOP with stochastic
geometry, modeling obstacle placement via Strauss (regular) and Matérn
(clustered) processes, and evaluates traversal using the Reset Disambiguation algorithm. 
Through extensive Monte Carlo experiments, 
we show that traversal cost increases by up to 40\% 
under strongly regular placements, 
while clustered configurations can decrease traversal
costs by as much as 25\% by leaving navigable corridors
compared to uniform random layouts. 
In mixed (with both true and false obstacles) scenarios,
increasing the proportion of true obstacles from 30\% to 70\% nearly doubles
the traversal cost. These findings are further supported by statistical
analysis and stochastic ordering, providing rigorous insights into how spatial
patterns and obstacle compositions influence navigation under uncertainty.
}

\textbf{Keywords:}
stochastic obstacle scene; optimal obstacle placement; spatial point processes; risk-aware path planning; 
Canadian Traveler's Problem; disambiguation cost; geospatial decision support; stochastic ordering


\section{Introduction} 
\label{sec:intro}
\noindent
Efficient pathfinding in uncertain or dynamic spatial environments is a central problem in 
geographic information science (GIScience), with broad applications in autonomous navigation, urban 
mobility planning, defense logistics, and environmental monitoring.  
Real-world scenarios—such as maritime navigation in mine-infested zones, cities with dynamic 
construction, or landscapes fragmented by environmental hazards—require agents to make routing 
decisions under uncertainty.  
The \emph{Stochastic Obstacle Scene} (SOS) problem and its discrete analogue, 
the \emph{Canadian Traveler's Problem} (CTP), capture this challenge by modeling settings where agents must traverse 
from a source to a destination through regions containing uncertain obstacles 
\citep{papadimitriou:1991,bar-noy:1991}.

In addition to work on the CTP, our study connects with
several broader strands of research:

\textbf{(i) Spatial point process modeling of obstacles.}
The use of stochastic geometry to model spatial uncertainty is well-established in
ecological and environmental planning. Classical references such as \cite{diggle2003spatial},
\cite{moller2004statistical}, and \cite{illian2008statistical} provide comprehensive
treatments of point process models including Strauss, Matérn, and hardcore processes.
These approaches have been applied to urban growth, forestry, and habitat modeling,
and here we adapt them to adversarial obstacle placement in navigation domains.

\textbf{ (ii) Navigation and path planning under uncertainty.}
Beyond CTP, the robotics and motion-planning literature has long addressed
navigation in partially known or dynamic environments. \cite{lavalle2006planning} surveys
motion planning algorithms such as probabilistic roadmaps and RRTs, which account
for uncertainty in obstacle fields. \cite{howard2002risk} introduce
risk-aware planning for mobile robots operating with incomplete information,
highlighting themes similar to our disambiguation-based traversal.

\textbf{(iii) Empirical studies in GIS-based routing.}
In applied GIS, several studies examine how spatial patterning of hazards influences
routing outcomes without invoking the CTP framework. For example, recent work on
flood evacuation \citep{Parajuli2023FloodEvac} and off-road path planning \citep{Lv2024OffRoad}
demonstrates the importance of integrating spatial randomness and hazard clustering
into path evaluation. Our framework contributes to this line of work by explicitly modeling
the influence of obstacle pattern on traversal cost.

While much of the literature focuses on developing effective traversal strategies for a navigating agent (NAVA), 
the inverse problem—how an adversary might strategically place obstacles to hinder 
movement—has received relatively little attention.  
This dual formulation, known as the \emph{Optimal Obstacle Placement} (OOP) problem, 
introduces a  second agent, the obstacle-placing agent (OPA), 
who aims to maximize the expected traversal cost of the NAVA.  
The OOP problem, introduced by \cite{aksakalli:2012} as the Optimal Placement with Disambiguations (OPD) problem, 
generalizes the SOS framework. 
Prior work examined specific layouts or background clutter but did not systematically study 
how spatial pattern (regularity, uniformity, clustering) shapes traversal cost.
\emph{Our contribution lies
in addressing this gap by systematically evaluating the impact of obstacle pattern
geometry and obstacle composition on navigability.
}
The importance of OOP is not limited to theoretical curiosity; 
it varies with the geospatial use case. 
In maritime defense, adversaries may deploy mines in spatially
regular patterns to maximize disruption of naval logistics. In urban environments,
construction zones and artificial blockages may act as strategically placed
obstacles that reroute traffic. 
In environmental settings, clustered hazards such as
flood debris or landslides may mimic Matérn-type patterns, creating narrow corridors for evacuation. 
These prospective applications illustrate
why understanding obstacle placement under uncertainty is useful across multiple
domains, from defense to urban planning to disaster response.
A recent flood-evacuation routing study underscores the practical importance of path planning 
under spatial risk \citep{Parajuli2023FloodEvac}.  

In this work, we develop a unified framework for the OOP problem that incorporates both continuous 
and discretized spatial representations.  
The continuous setting models obstacles as disks with uncertain status; the discrete setting 
uses an 8-adjacency spatial grid to convert the environment into a weighted geospatial network.
We assume two types of obstacles exist in the navigation domain: true obstacles (which are non-traversable
or block the traversal) such as mines and false obstacles (which are traversable) such as mine-like objects
(i.e., objects resembling mines).   
This duality supports GIS-compatible analysis alongside tractable computational methods.  
A NAVA uses a greedy strategy—called the \emph{Reset Disambiguation (RD)} algorithm 
\citep{aksakalli2011}—that re-evaluates the shortest path whenever a true obstacle is encountered, 
incurring a disambiguation cost based on sensor uncertainty.
Sensors provide probability marks (modelled with a beta-distribution), prompting disambiguation actions that influence 
final traversal costs.  
In this setting, 
NAVA can only disambiguate (but not neutralize) the obstacle and thus can determine the actual status of the obstacle as true or false
at an additional cost to traversal cost.
Recent work on grid-based and off-road path planning demonstrates continued interest in coupling 
advanced algorithms with geospatial data \citep{An2024HexAStar, Lv2024OffRoad}.
While motivated by SOS/CTP, 
our focus lies in understanding how spatial obstacle patterns influence traversal cost across a broader class of stochastic environments. 
We do not propose new CTP algorithms but instead explore how underlying spatial processes affect algorithmic performance.

Our obstacle placement strategies leverage spatial point processes to model different spatial 
patterns.  
In particular, we use the Strauss process to represent spatial regularity and the Matérn  
process to represent spatial aggregation/clustering.  
This design enables controlled comparisons of obstacle layout and composition.  
Through extensive Monte Carlo (MC) experiments, we examine how traversal cost varies under different 
spatial configurations and ratios of true to false obstacles.  
To analyze outcomes, we employ robust regression, random forests \citep{breiman2001}, and 
zero-inflated negative binomial models \citep{Zeileis2008}, 
offering both statistical rigor and flexibility.

Theoretical analysis complements these empirical findings by establishing a stochastic ordering among 
the path-weight distributions.  
Configurations consisting solely of false obstacles are stochastically dominated by mixed-obstacle 
configurations, which are in turn dominated by true-only arrangements in terms of induced traversal 
cost.  
Specifically, it shows that when obstacles are all false (i.e., not truly blocking), 
the resulting paths tend to be shorter and less costly to traverse. 
When some of the obstacles are true (i.e., actual obstructions), 
the paths become more uncertain and typically longer. 
In the most obstructive case—when all obstacles are true—the traversal cost is highest. 
This establishes a clear ordering: 
false-only configurations lead to the lowest expected traversal cost, followed by mixed obstacles, 
with true-only configurations resulting in the highest cost. 
These findings demonstrate how spatial structure and obstacle composition jointly influence 
navigability in adversarial settings.  

The OOP problem also bears conceptual similarity to the well-known \emph{network interdiction problem} 
\citep{israeli2002shortest,smith2020survey}, which models a leader–follower game where an 
interdictor disables parts of a network to increase traversal costs for an adversary.  
However, our framework differs significantly: it emphasizes partial information, spatially extended 
obstacles (e.g., disk regions), and dynamic learning (via disambiguation).  
These elements are rarely addressed in classical interdiction literature, although recent extensions 
(e.g., \cite{sundar2021credible,azizi2024shortest,sadeghi2024modified}) begin to incorporate such 
dynamics.
We illustrate the model with a naval logistics scenario and note its relevance for urban mobility, ecology, and flood evacuation.


The main contributions of this paper are as follows:
\begin{enumerate}
    \item We propose a unified OOP framework that couples obstacle placement
    with stochastic geometry via Strauss and Matérn point processes, capturing
    both regular and clustered obstacle layouts.
    \item We extend OOP analysis to compositional settings that include
    false-only, true-only, and mixed obstacle fields, thus accounting for both
    physical blockage and deceptive clutter.
    \item We conduct extensive Monte Carlo experiments across a wide range of
    parameter settings and analyze outcomes using robust regression, random
    forests, and zero-inflated models to quantify the effects of obstacle
    pattern and composition.
    \item We introduce stochastic ordering as a rigorous tool to compare
    traversal cost distributions under alternative obstacle placement strategies.
    \item We present an illustrative geospatial case study
    to demonstrate the real-world applicability of the proposed framework.
\end{enumerate}

The remainder of this paper is organized as follows.  
Section~\ref{sec:OOP-problem} formalizes the OOP problem and our assumptions.  
Section~\ref{sec:gis-implications} discusses GIS-based implications of our findings, 
highlighting applications in urban mobility, environmental modeling, and maritime navigation. 
Section~\ref{sec:meth-exp-setting} outlines the experimental design and statistical modeling approach.  
Results and insights are presented in Section~\ref{sec:MCexp-results-and-analysis}
and an illustrative geospatial case study is provided in Section~\ref{sec:llust-geospat}.
Section~\ref{sec:stoch-order} explores theoretical comparisons using stochastic ordering.  
Finally, Section~\ref{sec:disc-conc} offers conclusions and future research directions.
Proofs of theoretical results and details of the extensive Monte Carlo experiments are deferred to the Appendix.

\section{The Optimal Obstacle Placement Problem}
\label{sec:OOP-problem}
The SOS problem, introduced by \citet{papadimitriou:1991},  
originally focused on computing the \emph{Optimal Traversal Path} (OTP)  
for a NAVA in a stochastic environment with obstacles.  
Its discrete analogue, the CTP,  
has received substantial attention in both theory and applications  
\citep{bar-noy:1991, nikolova:2008, eyerich:2009}.  
A complementary and less-explored formulation considers  
an OPA whose objective is to strategically insert obstacles  
to hinder the NAVA’s movement by maximizing traversal cost.  
This formulation defines the OOP problem,  
introduced by \citep{aksakalli:2012},  
which identifies worst-case obstacle configurations (for NAVA)  
within a designated \emph{insertion window}.  
Both the OTP and OOP problems can be studied in continuous and discrete domains,  
and their interplay underpins a broader class of adversarial path planning problems.

\subsection{The Continuous OOP Problem}
\label{sec:continuous-OOP-problem}
Consider a bounded region $\Omega \subset \mathbb{R}^2$, where an OPA inserts obstacles 
modeled as disks $D_x$ centered at $x \in \mathcal{X}$ with fixed radius $r > 0$. 
Let $\mathcal{X}_\mF$ and $\mathcal{X}_\mT$ denote the centers of false and true obstacles, 
generated from spatial point processes $\mathcal{P}_\mF$ and $\mathcal{P}_\mT$, respectively.

A NAVA traverses from $s$ to $t \in \Omega$, relying on a sensor that assigns 
probabilities $p: \mathcal{X} \to [0,1]$, 
where $p(x)$ indicates the probability that obstacle $x$ is true. 
Sensor outputs are modeled with Beta distributions: $p(x) \equiv \text{Beta}(a,b)$, with $a<b$ for false 
and $a>b$ for true obstacles—ensuring that true obstacles are, on average, assigned higher probabilities. 
Increasing the gap $|a - b|$ models higher sensor discrimination.

The sensor marks are drawn independently as $p_F$ for $x \in \mathcal{X}_\mF$ and $p_T$ for 
$x \in \mathcal{X}_\mT$, according to:
\[
p(x)=\begin{cases}
 p_F, & \text{if } x \in \mathcal{X}_\mF \\ 
 p_T, & \text{if } x \in \mathcal{X}_\mT \\ 
 0, & \text{otherwise}
\end{cases}
\]
Although the NAVA observes the obstacle locations $\mathcal{X} = \mathcal{X}_\mT \cup \mathcal{X}_\mF$, 
their true status remains unknown unless disambiguated. 
Each disambiguation incurs a fixed cost $c > 0$, 
typically interpreted as time, which is added to the overall traversal cost.

The \emph{continuous OTP problem} then seeks the minimum-cost $(s,t)$ path avoiding true obstacles, 
while the \emph{continuous OOP problem} seeks to maximize this cost through strategic obstacle placement. 
To navigate the uncertain environment, the NAVA evaluates paths by balancing Euclidean distance and 
the risk based on from $p(x)$. 
Heuristic strategies such as the \textit{Risk-Aware Greedy Algorithm} 
\citep{missiuro2006adapting, aoude2013probabilistically} select paths based on a composite measure 
of length and estimated risk. These methods may be enhanced through probabilistic planners like 
Rapidly-exploring Random Trees (RRTs) or risk-weighted A* variants, enabling adaptive traversal 
in uncertain and spatially complex environments \citep{Meng2022nrrrt,Chung2019risk}.

\subsection{The Discretized OOP Problem}
\label{sec:dicrete-OOP-problem}
To facilitate computation, the continuous domain $\Omega$ is discretized into an $n \times m$ grid,  
forming an \emph{8-adjacency integer lattice} \citep{aksakalli:2012}.  
Obstacles are modeled as disks of fixed radius \citep{witherspoon1995COBRA},  
and grid resolution is chosen to closely approximate continuous traversal.  
The resulting graph $G = (V, E)$ contains vertices at grid points  
and edges connecting adjacent vertices, including diagonals.  
Each interior vertex links to eight neighbors:  
four unit-length and four diagonal ($\sqrt{2}$-length) edges.  
Additionally, edge connections are added along the grid boundary to preserve connectivity.  
A start vertex $s$ and target vertex $t$ are designated.  
The NAVA seeks a path from $s$ to $t$ while minimizing a traversal cost  
that incorporates both Euclidean distance and the risk associated with uncertain obstacles,  
disambiguated at cost $c > 0$ when necessary.  
This discrete setup corresponds to the CTP  
with spatially dependent stochastic costs  
\citep{nikolova:2008, eyerich:2009, xu:2009CTP}. 

Obstacles are placed within a designated window between source and target,
representing the adversarial region of influence. This window is sized to cover
the main traversal corridor while leaving peripheral areas open, ensuring that
placement decisions are consequential but do not trivially block all routes.

The OPA’s objective in the discretized OOP problem  
is to place obstacles within an insertion window $\Omega_{\mathcal{O}} \subset \Omega$—  
typically a homothetic subregion—  
to maximize expected traversal cost.  
A coastal defense analogy illustrates the setting:  
an OPA delays an intruding vessel (NAVA)  
navigating through a mined nearshore zone (Figure~\ref{fig:cobra-traversal}).  
The annular (i.e. ring-shaped) obstacle-free window ensures that traversal remains feasible but strategically costly.  
The grid is aligned such that $\partial(\Omega_{\mathcal{O}})$  
coincides with grid cell boundaries. 

Obstacle radius is set to $r=4.5$ \citep{witherspoon1995COBRA}. 
This value is chosen to ensure that each obstacle intersects several grid edges
(roughly spanning $9$--$10$ units on a $101 \times 101$ grid), thereby exerting
a nontrivial effect on traversal. At the same time, the radius is small enough
that feasible corridors between source and target remain available. 
This scaling follows prior studies on the SOS framework
\cite{aksakalli2011}, where obstacle radii were selected relative to grid
resolution to balance obstruction with navigability. 

In the discrete setting, traversal is modeled on a weighted graph 
where each edge $e$ has a baseline length $\ell(e)$, 
equal to the Euclidean distance between its endpoints 
($\ell(e)=1$ for horizontal/vertical, $\ell(e)=\sqrt{2}$ for diagonals). 
Obstacle disks intersecting an edge add uncertainty to its cost, 
with disambiguation determining whether the edge is blocked or available.

The NAVA follows the RD algorithm, 
which adaptively recomputes shortest paths based on perceived risk. 
For a path $\pi(s,t)$ from $s$ to $t$, the weight of edge $e$ is defined as
\begin{equation}
\label{eqn:weight-of-edges}
w(e) = \ell(e) + \tfrac{1}{2} F(e), \quad \text{where} \quad
F(e) = \sum_{x \in \X} \mathbf{1}_{{D_r(x) \cap e \neq \emptyset}}
\left( \frac{c(x)}{1 - p(x)} \right),
\end{equation}
with $p(x)$ the sensor-assigned probability that obstacle $x$ is true, $c(x)$ the disambiguation cost, 
and $\mathbf{1}_{\{\cdot\}}$ the indicator function. 
The term $F(e)$ captures the cumulative risk from all uncertain obstacles intersecting edge $e$.

The total path weight is then
\begin{equation}
\label{eqn:Lpi}
W(\pi,\X) = \sum_{e \in \pi(s,t)} w(e),
\end{equation}
used to approximate the expected traversal cost perceived by the NAVA before disambiguation.  
When no obstacles are present, $W(\pi, \X) = L_{\pi} = \sum_{e \in \pi(s,t)} \ell(e)$,  
and NAVA simply follows the shortest path.  
With uncertain obstacles, $W(\pi, \X)$ becomes stochastic (i.e. random) due to Beta-distributed $p(x)$
and also the stochastic nature of obstacle locations.  
The RD algorithm adaptively recomputes paths upon disambiguating true obstacles,  
resetting traversal from the current location.  
It extends Dijkstra’s algorithm to accommodate dynamic edge weights  
derived from spatial uncertainty \citep{Dijkstra1959}.

\begin{figure}[t]
\centering
\captionsetup{width=.95\linewidth}
\begin{minipage}{0.48\textwidth}
    \centering
\includegraphics*[scale=.425]{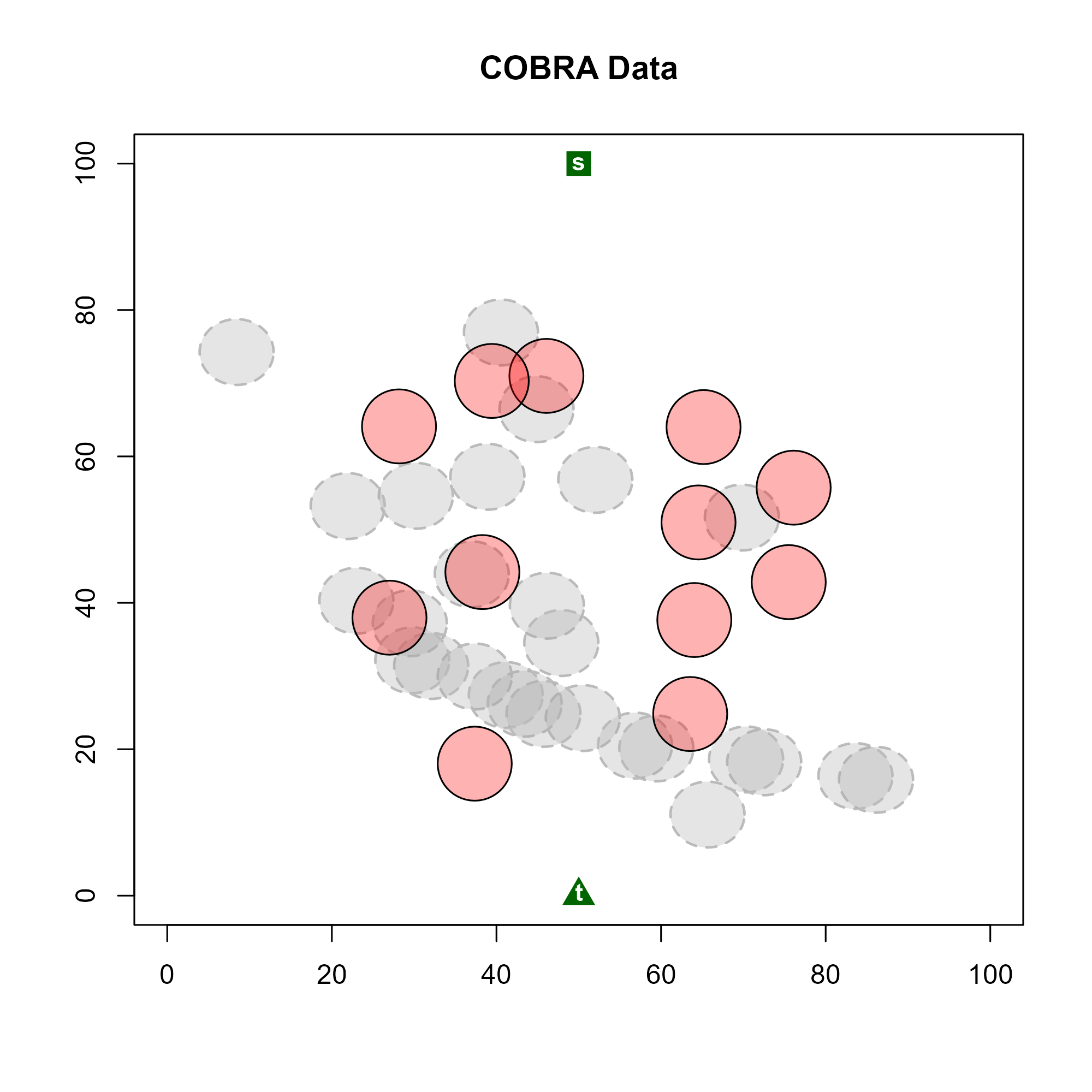}\\
    (a)
\end{minipage}\hfill
\begin{minipage}{0.48\textwidth}
    \centering
\includegraphics*[scale=.425]{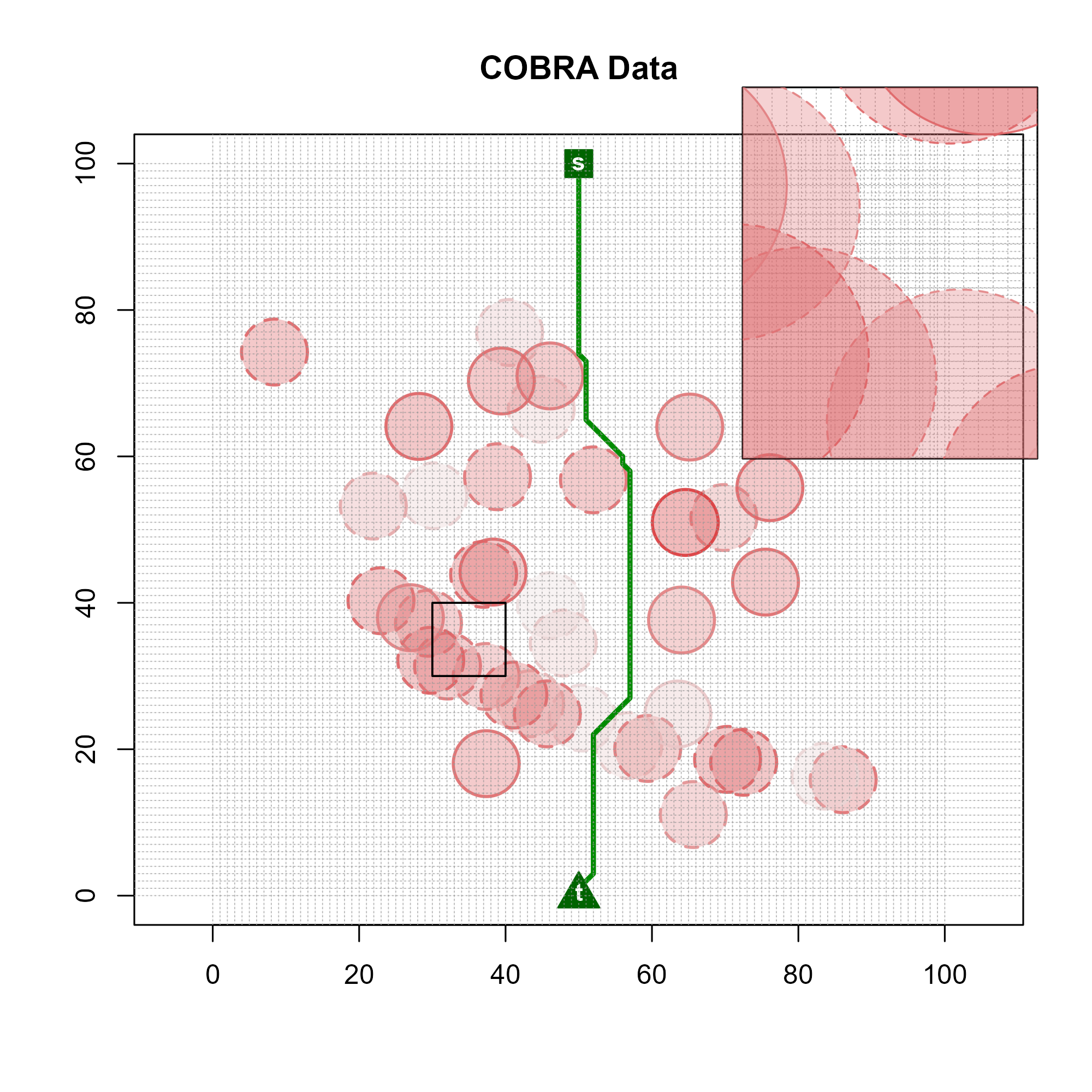}\\
    (b)
\end{minipage}
\caption{
\label{fig:cobra-traversal}
(a) Coastal battlefield reconnaissance and analysis (COBRA) data with 12 mines
(orange circles) and 27 false obstacles (light gray circles) \citep{witherspoon1995COBRA}.
(b) Sensor-derived obstacle probabilities (darker orange indicates higher $p(x)$) and corresponding NAVA traversal path computed using the RD algorithm.
}
\end{figure}

\subsection{The Distinction between Continuous vs. Discrete Traversal}
Two formulations of the OOP problem can be distinguished. 
In the \emph{continuous formulation}, the agent’s trajectory is modeled as an arbitrary curve 
in the plane that avoids true obstacles. 
This description is natural in open domains such as maritime navigation or off-road mobility, 
where paths are not confined to predefined routes. 

In the \emph{discrete formulation}, movement is represented as a walk on a graph. 
This graph may arise either from discretizing the continuous domain (e.g., an $n \times m$ lattice) 
or from a naturally occurring network such as streets, corridors, or utility grids. 
Although the underlying space may be physically continuous and traversable along edges, 
the model treats traversal as \emph{node-to-node steps}: the agent moves only between 
adjacent vertices and never halts at intermediate points on an edge. 
Thus, the discreteness stems from the representation of movement, not from the geometry of the environment itself. 
In our study we employ an 8-adjacency lattice, which permits orthogonal and diagonal moves; 
restricting to 4-adjacency is possible but typically lengthens paths by removing diagonal shortcuts. 

Application domains align naturally with these formulations: 
road networks, building layouts, and infrastructure grids lend themselves to the discrete model, 
while navigation in open terrain or sea is better captured by the continuous one. 
Figure~\ref{fig:cobra-traversal} illustrates this distinction using Coastal battlefield reconnaissance and analysis (COBRA) data with 12 mines
and 27 false obstacles (mine-like objects) \citep{witherspoon1995COBRA}.
The left panel shows the original mine and clutter field, while the right panel overlays an 
8-adjacency spatial grid on the same region and depicts a sample NAVA traversal path using RD algorithm 
computed on the corresponding graph representation.
To better illustrate the spatial grid, we zoom in a rectangular region at top right of the right panel plot.

\subsection{OOP as an Optimization Problem}
\label{sec:OOP-optimization-problem}
Let $C(\pi,\X)$ denote the realized traversal cost from $s$ to $t$ on $G$, and
recall $W(\pi,\X)$ as the perceived (pre-traversal) path weight based on sensor marks. 
Before traversal, $C(\pi,\X)$ and $W(\pi,\X)$ are distinct random
variables: $W$ anticipates cost under uncertainty, whereas $C$ includes the
actual disambiguation outcomes and their costs.

The OTP problem—continuous or discrete—can be written as
\begin{flalign}
\label{eqn:otp-formula}
\min_{\X}\; \E\!\left[C(\pi,\X)\right]
\quad \text{s.t.} \quad
\X \subseteq \Omega_{\mathcal O},\;\;
\X_\mT \cap \pi = \emptyset,\;\;
|\X| = n,
\end{flalign}
where $\E\!\left[C(\pi,\X)\right]$ is the expected traversal cost,
$\Omega_{\mathcal O}\subset \Omega$ is the insertion window, 
and $n$ the number of obstacles placed by the OPA. 
The OOP problem replaces $\min$ with $\max$ in \eqref{eqn:otp-formula}. 
The weight $W(\pi,\X)$ depends on the spatial
configuration $\X=\X_\mT\cup \X_\mF$ and the associated probability marks.

\begin{remark}
\label{rem:path-vs-walk}
\textbf{(Traversal Route: Path or Walk?)}
In the discretized setting, the NAVA may revisit vertices/edges due to
re-planning after disambiguation, so the route is a \emph{walk} in the
graph-theoretic sense \citep{west:2001}. 
For readability and consistency with routing terminology, 
we still use ``path" to refer to the traversed sequence,
which approximates a continuous trajectory in space.
\end{remark}

\section{GIS-Based Implications and Applications}
\label{sec:gis-implications}
\noindent
The proposed OOP framework integrates naturally with GIS-enabled spatial decision-support systems (SDSS) by
linking stochastic obstacle layouts, sensor-informed uncertainty, and re-planning into standard geospatial
workflows (raster surfaces, vector networks, and live sensor layers).

\subsection{Urban mobility and traffic resilience}
Transportation agencies regularly face temporary blockages 
(construction, events, incidents) that disrupt routing. 
RD-based re-planning provides a principled way to stress-test adaptive detour strategies on
partially observable street networks. 
Our finding that moderately regular obstacle patterns raise traversal costs 
complements vulnerability studies showing 
how dispersed link failures degrade performance \citep{Jenelius2015Vulnerability}. 
The setup is directly compatible with mainstream GIS network
datasets for worst-case delay analysis.

\subsection{Landscape ecology and wildlife corridors}
In fragmented habitats, uncertain permeability (roads, fences, land-use transitions) 
can be modeled as probabilistic obstacles. 
Sensor-driven disambiguation mirrors perceptual uncertainty and pairs well with
resistance surfaces and circuit-theory connectivity \citep{McRae2008Circuit}, 
enabling evaluation of corridor
designs under varying obstacle densities and sensing quality.

\subsection{Maritime and defense logistics}
Mine-suspected or cluttered waters mirror our coastal navigation setting. 
Coupling RD with acoustic/optical sensor feeds supports 
estimation of worst-case transit times and assessment of sensor placement strategies,
extending risk-aware routing for autonomous surface vessels \citep{Maidana2023}. 
The representation aligns with
standard electronic navigational charts.

\subsection{Integration with GIS platforms}
The discretized domain corresponds to raster-cell adjacency, 
while the graph abstraction maps to polyline networks. 
Sensor probabilities can be stored as raster values or edge attributes, 
enabling what-if analyses in SDSS tools (e.g., ArcGIS ModelBuilder, QGIS Processing). 
Recent work on off-road routing, flood evacuation, and
hex-grid navigation \citep{Lv2024OffRoad,Parajuli2023FloodEvac,An2024HexAStar} illustrates the utility of
spatially aware, risk-based routing.

By linking stochastic obstacle modeling with GIS analytics, 
our framework bridges theory and practice for adaptive, 
risk-aware routing across transportation, ecology, and defense.

In GIS, the obstacle window \(\Omega_{\mathcal O}\) has direct geographical meaning (e.g., road segments subject
to closure; minefields constraining shipping lanes). RD operates on the corresponding network or grid.

Dynamic settings can be handled by updating \(\Omega_{\mathcal O}\) over time (e.g., shifting flood debris,
time-varying closures). While full temporal modeling is beyond the scope here, our design accommodates such
updates. The urban evacuation case study (Section~\ref{sec:gis-implications}) exemplifies how \(\Omega_{\mathcal O}\)
and RD map to an operational GIS context.

\section{Methodology and Experimental Setting }
\label{sec:meth-exp-setting}

\subsection{Literature on Traversal Algorithms and Prior Work on Optimal Obstacle Placement}
Prior work on stochastic obstacle navigation proposes several heuristic
algorithms for the NAVA, including BAO\* \citep{aksakalli2007}, Simulated Risk Disambiguation
(SRA) \citep{fishkind2007}, Distance to Termination (DT) \citep{aksakalliari2013}, 
and Reset Disambiguation (RD) \citep{fishkind2007}. 
Each has advantages and limitations: BAO\* is exhaustive
but computationally demanding, DT underuses disambiguation, and SRA requires parameter tuning. 
RD offers a practical balance in grid-based SOS settings.
These studies largely focus on \emph{traversal} rather than \emph{obstacle placement}.

The OOP problem has been studied in
settings where an adversary seeks to maximize the traversal cost of a NAVA. 
Early work used grid-based formulations with random
clutter and evaluated heuristic traversal under different
placement strategies \citep{aksakalli2011}. 
Subsequent research explored variants combining placement with traversal heuristics such as RD, DT, SRA,
and BAO\* \citep{aksakalli2007, fishkind2007, aksakalliari2013}, 
but generally relied on fixed or simplified configurations and 
did not systematically assess how \emph{spatial point processes} shape navigability.

Our framework couples OOP with Strauss and Matérn spatial point process models
to analyze how obstacle regularity, clustering, and composition affect traversal
outcomes, and pairs this with statistical modeling for rigorous inference.
For the traversal baseline, RD has complexity
$O\!\left(k \cdot (|E| + |V|\log |V|)\right)$ with Dijkstra’s algorithm, 
where $k$ is the number of disambiguations. 
In our $101 \times 101$ grids,
$|V|\!\approx\!10^4$, $|E|\!\approx\!8|V|$, and typical runs have $k<20$;
empirically, RD completes in under one second per realization on a standard desktop.

\subsection{Proposed Framework}
We propose a unified framework for the OOP problem that couples
(i) spatial point processes (Strauss and Matérn) for obstacle layout,
(ii) stochastic sensor marks (Beta distributions), and
(iii) traversal evaluation using RD.
This enables systematic analysis of how spatial structure and composition of obstacles influence traversal cost.
Figure~\ref{fig:workflow} illustrates  our empirical evaluation approach, from obstacle generation to
statistical modeling of outcomes.

\begin{figure}[!ht]
\centering
\begin{tikzpicture}[node distance=10mm, >=stealth, thick]
\tikzstyle{flow}=[draw, rounded corners=2pt, minimum width=34mm, minimum height=12mm, align=center]
\node[flow] (gen) {Obstacle Field\\Generation\\(Strauss, Matérn)};
\node[flow, right=of gen, yshift=0mm] (sens) {Assign Sensor\\Probabilities};
\node[flow, right =of sens, yshift=0mm] (trav) {Traversal by NAVA\\(RD Algorithm)};
\node[flow, below=of sens, xshift=0mm, yshift=-5mm] (outc) {Traversal Outcomes\\(cost, disambiguations)};
\node[flow, right=of outc, yshift=0mm] (anal) {Analysis\\(Regression \& ML)};

\draw[->] (gen) -- (sens);
\draw[->] (sens) -- (trav);
\draw[->] (trav) -- (outc);
\draw[->] (outc) -- (anal);

\end{tikzpicture}
\caption{
Workflow of the empirical evaluation: 
generate obstacle fields (Strauss, Matérn), assign sensor probabilities, traverse with RD, 
and analyze outcomes via regression and machine learning.}
\label{fig:workflow}
\end{figure}
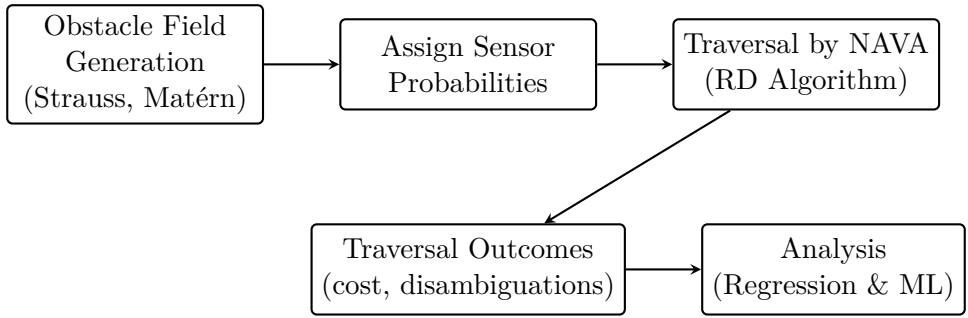

While prior work mainly examined specific layouts or validation scenarios, 
our methodological contribution is threefold.
First, we integrate OOP with stochastic geometry (Strauss, Matérn) 
to evaluate regularity, clustering, and density effects.
Second, we extend OOP to compositional settings—false-only, true-only, 
and mixed—capturing both deceptive clutter and physical blockage.
Finally, we analyze traversal cost distributions using robust regression, 
random forests, and stochastic ordering to rigorously quantify and compare obstacle impacts.


\subsection{Spatial Point Patterns for Obstacle Insertion}
\label{sec:spat-pattern-4-obs}

Obstacle placement is modeled with spatial point processes to assess 
how spatial structure affects traversal cost. 
We analyze two deviations from complete spatial randomness: 
\emph{regularity} via the Strauss process and 
\emph{clustering} via the Matérn cluster process \cite{diggle2003spatial,illian2008statistical}.

For \textbf{regular placement}, 
we use $\text{Strauss}(n, d, \gamma)$, where $n$ is the number of obstacles, $d$ the interaction distance, 
and $\gamma$ the inhibition parameter. 
A \textbf{Strauss} point process is a simple way to generate \emph{regular} (inhibitory) layouts.
It adds a soft “do-not-come-too-close” rule between points within an interaction distance $d$:
the inhibition parameter $\gamma\in[0,1]$ controls how strong that repulsion is 
($\gamma=1$ approximately yielding complete spatial randomness - CSR; 
$\gamma\to 0$ approximately yielding near–hard-core spacing). 
In obstacle terms, Strauss produces evenly spaced mines/roadblocks that blanket a corridor with few large gaps.
Greater regularity is expected to increase traversal cost through more effective corridor coverage.

For \textbf{clustered} placement, 
we use $\text{Matérn}(\kappa, r_0, \mu)$ with parent intensity $\kappa$, cluster radius $r_0$, 
and mean offspring $\mu$. 
A \textbf{Matérn cluster} process produces \emph{aggregated} layouts. 
Parent points occur sparsely (sampled from a Poisson process with intensity $\kappa$); 
each parent generates a Poisson number (with mean $\mu$) of offspring obstacles 
within a cluster radius $r_0$ (parents then discarded). 
The Matérn cluster process generates aggregated patterns with 
pockets of dense obstacles separated by relatively open areas—i.e., 
realistic “debris fields” or localized blockages.
Clustering can leave larger obstacle-free gaps, often decreasing traversal cost.

Our experiments vary $d$, $\gamma$, $r_0$, and $\kappa$, 
as well as the true–false composition ratio $\rho$, to jointly evaluate spatial arrangement and composition.

\subsection{Experimental Setting}
\label{sec:exp-setting}
We consider $\Omega=[0,100]\times[0,100]$ discretized to a $101\times101$ grid, 
yielding an $8$-adjacency graph $G=(V,E)$ with unit and diagonal ($\sqrt{2}$) edges. 
The NAVA starts at $s=(50,100)$ and targets $t=(50,1)$. 
Obstacles are disks of radius $r=4.5$ with centers sampled from the insertion window $\Omega_O=[10,90]\times[10,90]$.
This radius ensures each disk intersects multiple edges without fully blocking the corridor. 
To isolate spatial configuration effects, all disks have equal size (heterogeneous radii are a natural extension). 

Sensor marks follow Beta distributions: 
$p_F\sim\text{Beta}(2,6)$ for false obstacles and $p_T\sim\text{Beta}(6,2)$ for true obstacles. 
Disambiguation incurs a fixed cost $c=5$ \citep{aksakalli:2012,Priebe-NRL:2005}. 
Stronger discrimination can be modeled by $p_F\sim\text{Beta}(a,b)$ and $p_T\sim\text{Beta}(b,a)$ with $a < 2$ and $b > 6$.

\subsection{Key Parameters in the OOP Framework}
Regularity (Strauss $\gamma$) models deliberate spacing (e.g., minefields); 
clustering (Matérn $r_0,\kappa$) captures natural aggregation (e.g., debris); 
and the counts of true/false obstacles drive blockage and disambiguation burden.

Parameter ranges span realistic regimes 
while maintaining feasible traversal.
For Strauss: 
$d$ scales with $r=4.5$ from near overlap ($d\approx r$) to wide separation ($d\gtrsim2r$), 
with $\gamma$ from $0$ (strong inhibition) to $1$ (CSR). 
For Matérn: $r_0\in\{5,\dots,50\}$ and $\kappa\in\{2,\dots,15\}$. 
Sensor-accuracy effects generalize beyond the baseline Beta choices and 
are analyzed via stochastic ordering in Section~\ref{sec:stoch-order}.

\noindent\textbf{Obstacle Placement Strategies:}
\begin{itemize}
    \item \textbf{Regularity (Strauss):} $\gamma\in\{0.0,0.1,\dots,1.0\}$; $d\in\{0.5,1.0,\dots,15.0\}$.
    \item \textbf{Clustering (Matérn):} $\mu=10$ offspring per parent; 
    $\kappa\in\{2,4,\dots,12\}$; $r_0\in\{2.5,5,7.5,10,15,25,50\}$ (larger $r_0$ approaches uniformity).
\end{itemize}

\noindent\textbf{Obstacle Composition:}
We simulate (i) false-only with $n_F\in\{10,20,\dots,100\}$, 
(ii) true-only with $n_T\in\{10,20,\dots,100\}$, 
and (iii) mixed with $n=n_F+n_T\in\{20,30,\dots,100\}$ across varying ratios. 
In mixed settings, true obstacles drive mean cost by blocking edges; 
false obstacles primarily inflate variability via added disambiguations. 
Appendix contains further experiments (30/70, 50/50, 70/30; Strauss and Matérn) corroborate these patterns. 
Section~\ref{sec:stoch-order} formalizes these trends via stochastic ordering.

%

\noindent
\textbf{Analysis Methods:}  
Traversal cost $C(\pi, \X)$ and disambiguation behavior are analyzed via 
(i) \textbf{Robust linear regression} to quantify spatial effects,
(ii) \textbf{Random forest regression} to identify key predictors, 
and (iii) \textbf{Zero-inflated negative binomial regression}  
    for modeling disambiguation counts.
Unlike prior studies that incorporated fixed background clutter \citep{aksakalli:2012},  
our setting removes such clutter  
and allows both true and false obstacle placement by the OPA.  
This design enables a systematic exploration of how spatial structure  
and obstacle composition jointly impact traversal outcomes.

\section{Monte Carlo Experiments and Results}
\label{sec:MCexp-results-and-analysis}

We evaluate how obstacle patterns affect traversal cost via MC simulations using the RD algorithm \citep{aksakalli2011}. 
Obstacle placement follows Strauss (regular) and Matérn (clustered) processes, 
with uniform placement as a baseline \citep{baddeley2010}. 
This tripartite division is rigorous because it
captures the full spectrum of real-world scenarios, from pure decoy placement
to pure obstruction to realistic mixtures of both. Additional variations for
the mixed case, including different ratios of true to false obstacles and varied
clustering strengths, are included in the Appendix.

\begin{figure}[!ht]
\centering
\begin{minipage}{0.5\linewidth}
    \centering
    \includegraphics[width=\linewidth]{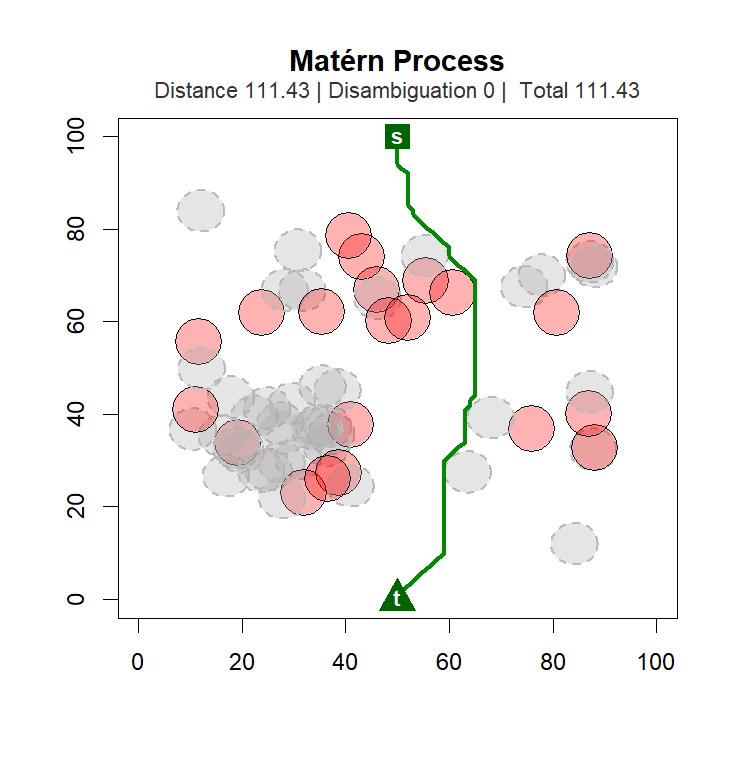}\\
    (a)
\end{minipage}\hfill
\begin{minipage}{0.5\linewidth}
    \centering
    \includegraphics[width=\linewidth]{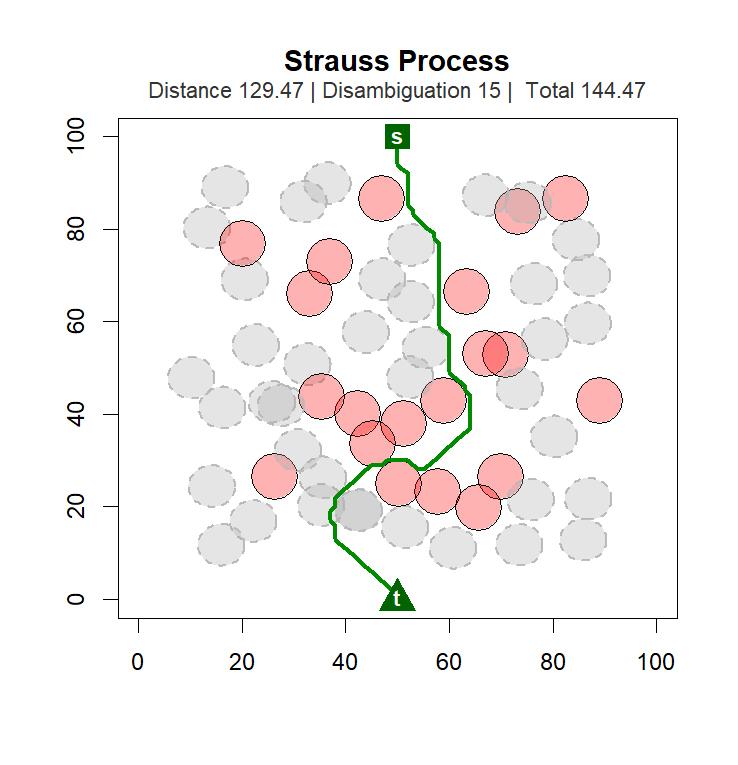}\\
    (b)
\end{minipage}
\caption{Illustrative traversals under different obstacle patterns: 
(a) Strauss (regular),
(b) Matérn (clustered). 
Red = true obstacles, dashed = false obstacles, blue = RD traversal path.
Distance: Euclidean distance, Disambiguation: Cost of disambiguation, and Total: total cost.}
\label{fig:paths}
\end{figure}

\subsection{Obstacle Pattern: Uniformity to Regularity (Strauss) - False-Obstacle Only Case}
We evaluate the effect of increasing spatial regularity on traversal outcomes by varying 
the Strauss$(n,d,\gamma)$ process parameters: $\gamma$ (repulsion strength) and $d$ (interaction distance). 

To ensure comprehensive coverage, we simulate 30 values of $\gamma$ and 22 values of $d$ 
for each selected number of false obstacles $n_F \in \{10, 20, \dots, 100\}$, 
with 100 MC replications per parameter setting. 
This results in $11 \times 30 \times 22 \times 100 = 726{,}000$ total trials. 
For each realization, we record the total traversal cost $C$ and the number of disambiguations incurred. 
The RD algorithm adaptively recomputes the path after each disambiguation based on updated obstacle information. 
A representative simulation outcome is displayed in Figure~\ref{fig:sample-Strauss-false-mix}(a).

\begin{figure}[htb]
\centering
\begin{minipage}{0.45\textwidth}
    \centering
    \includegraphics[trim=10mm 15mm 10mm 15mm, clip,height=5.1cm,width=\linewidth]{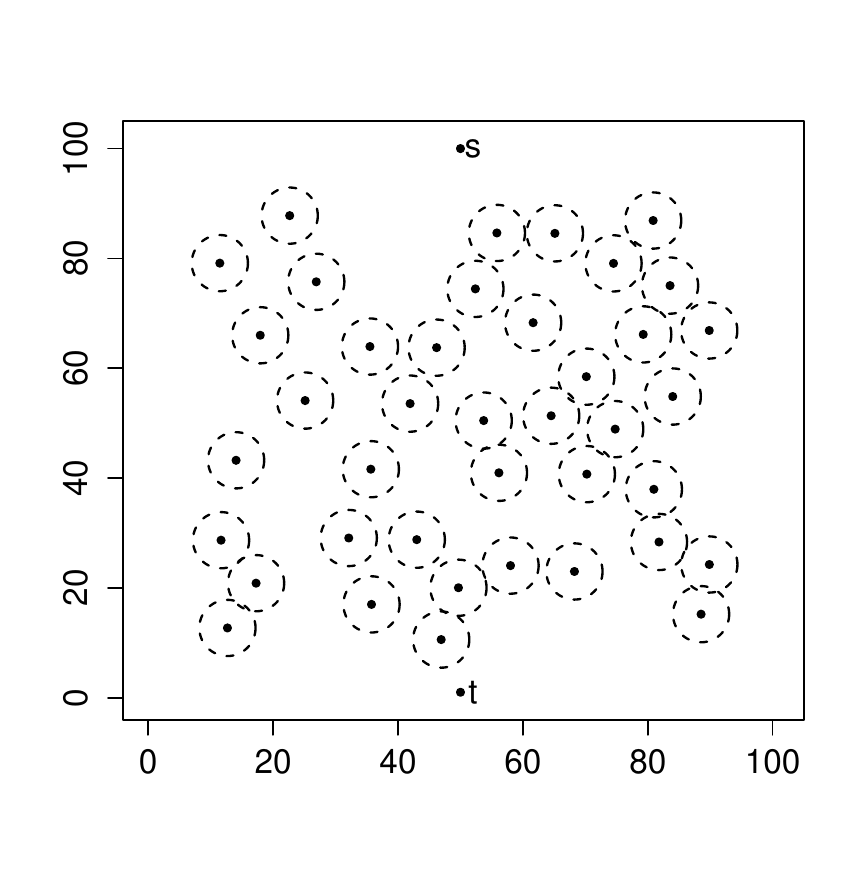}\\
    (a)
\end{minipage}\hfill
\begin{minipage}{0.45\textwidth}
    \centering
    \includegraphics[trim=10mm 15mm 10mm 15mm, clip,height=5.1cm,width=\linewidth]{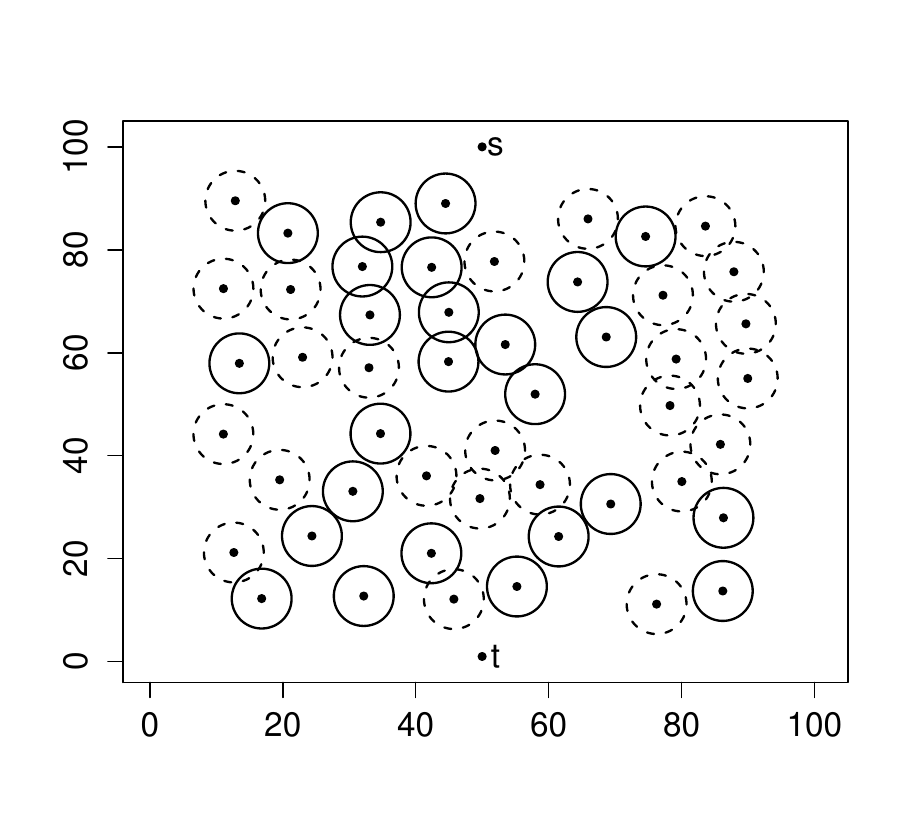}\\
    (b)
\end{minipage}
\caption{
(a) False obstacles from Strauss$(n_F=40,d=9,\gamma=0)$.  
(b) Mixed obstacles from Strauss$(n=50,d=9,\gamma=0)$ with 25 false (dashed) and 25 true (solid) obstacles.}
\label{fig:sample-Strauss-false-mix}
\end{figure}

Figure~\ref{fig:Lbar-vs-Strauss}(a) illustrates the mean traversal cost $\bar{C}$  
as a function of the Strauss inhibition parameter $\gamma$,  
across various interaction distances $d$.  
The correlation between $\bar{C}$ and $\gamma$ for each $d$ value  
is shown in Figure~\ref{fig:Lbar-vs-Strauss}(b).

\begin{figure}[!ht]
\centering
\begin{minipage}{0.45\textwidth}
    \centering
    \includegraphics[width=\linewidth]{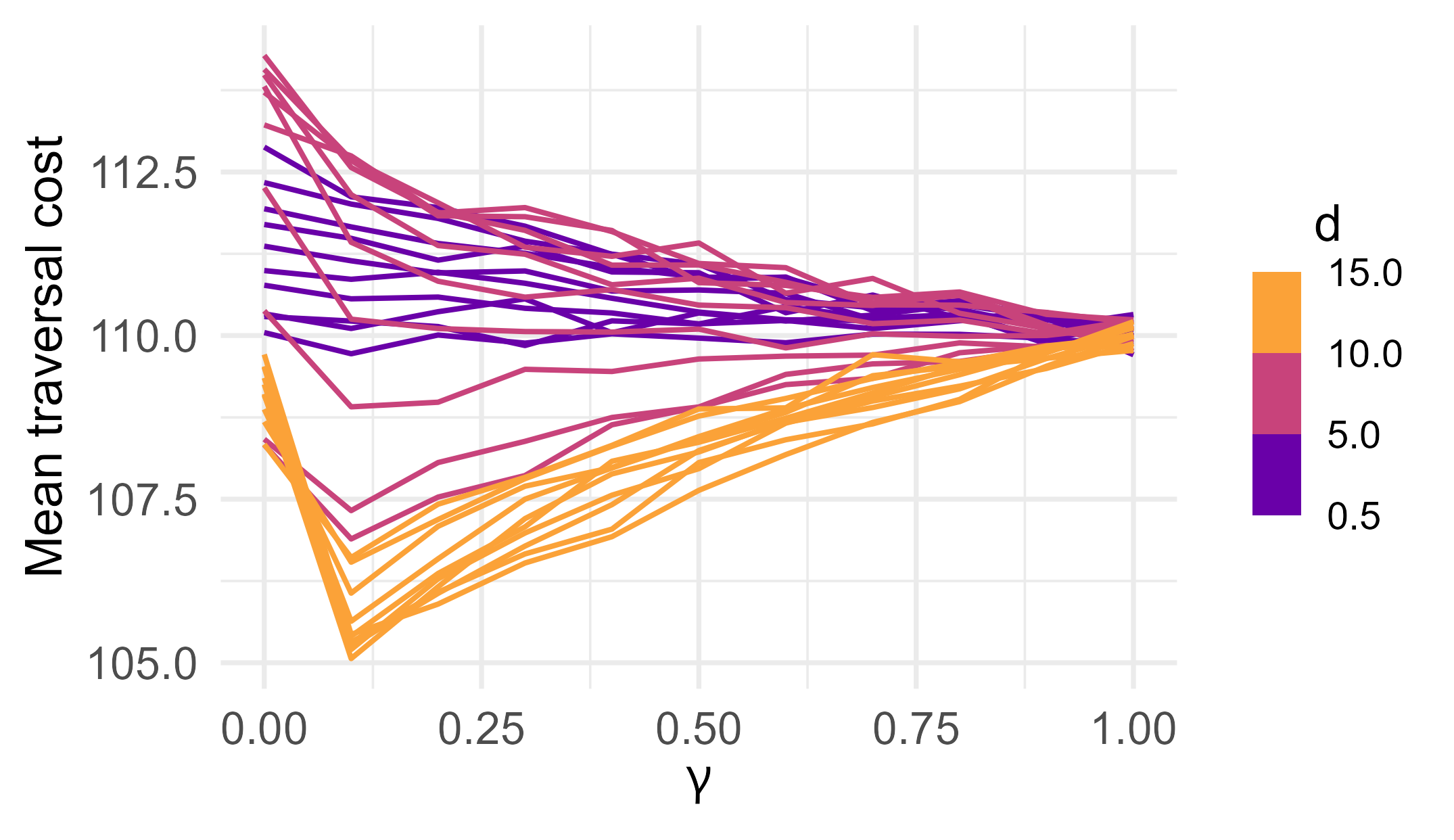}\\
    (a)
\end{minipage}\hfill
\begin{minipage}{0.45\textwidth}
    \centering
    \includegraphics[width=\linewidth]{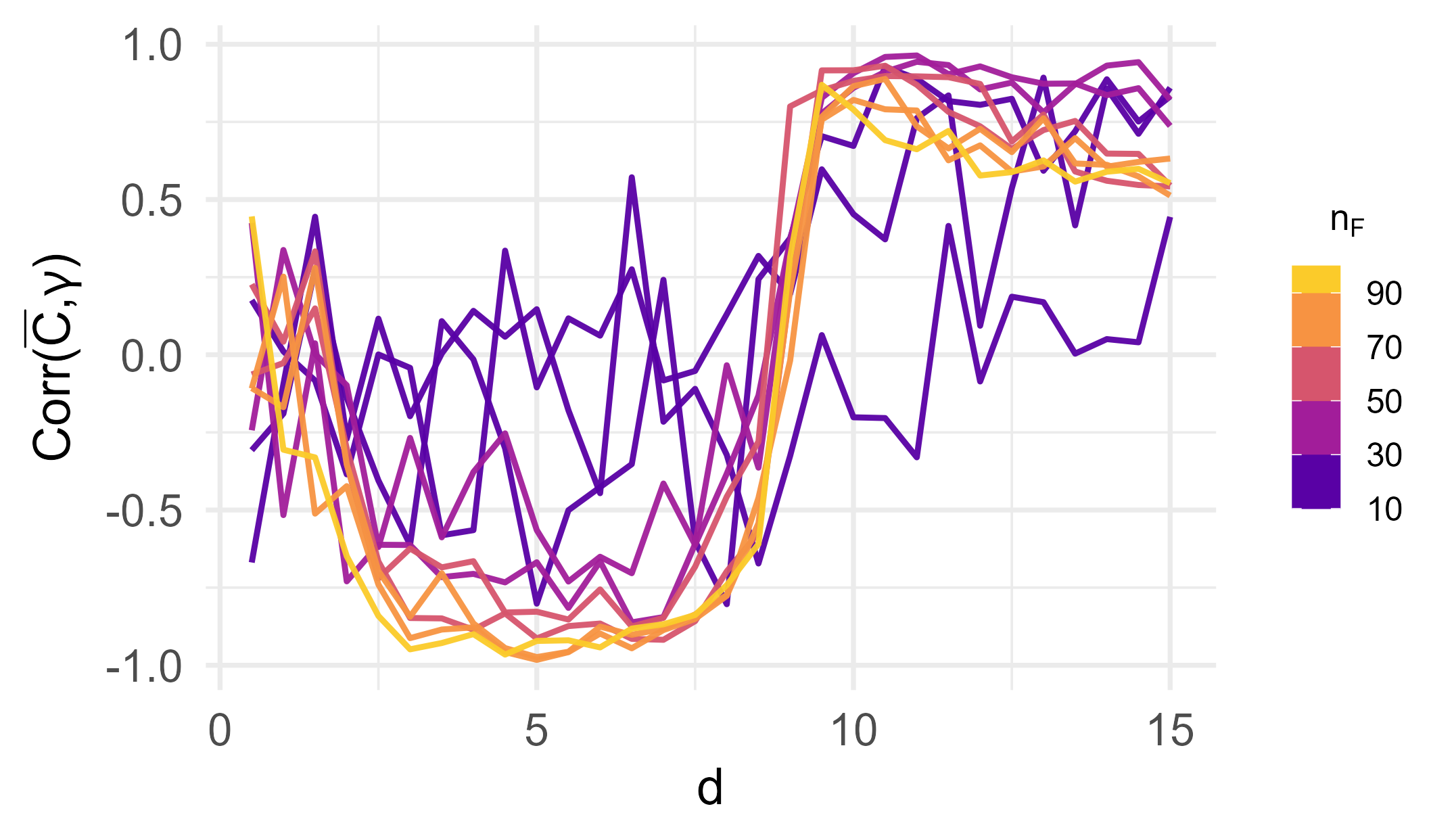}\\
    (b)
\end{minipage}
\caption{
(a) Interaction plot for the false-obstacle-only case with
the mean traversal cost $\bar{C}$ (averaged over obstacle numbers) vs.\ $\gamma$,  
for varying $d$ values under the Strauss$(n,d,\gamma)$ regularity pattern.  
(b) Correlation $\Corr(\bar{C},\gamma)$ vs.\ $d$.}
\label{fig:Lbar-vs-Strauss}
\end{figure}

For small $d$, increased regularity (i.e., decreasing $\gamma$) has little impact,  
as obstacles remain closely spaced and overlapping.  
Consequently, traversal paths do not change significantly.  
In contrast, for intermediate $d$ (around $1.5r$), regularity promotes even spacing,  
which effectively blocks direct traversal routes and increases cost.  
When $d$ exceeds $2r$, obstacles are spaced too widely to obstruct paths effectively,  
leading to lower traversal cost.

Figure~\ref{fig:Lbarvsd-at-gamma} displays the mean traversal cost $\bar{C}$  
as a function of $d$ across various $\gamma$ values.  
A unimodal (concave-down) pattern emerges:  
traversal cost peaks around $d \approx 1.5r$,  
where obstacle spacing is most disruptive.  
For small $d$, overlapping obstacles act as a single obstruction zone,  
and for large $d$, the configuration becomes too sparse  
to significantly hinder navigation.

\begin{figure} [htb]
\centering
\includegraphics[width=0.65\textwidth]{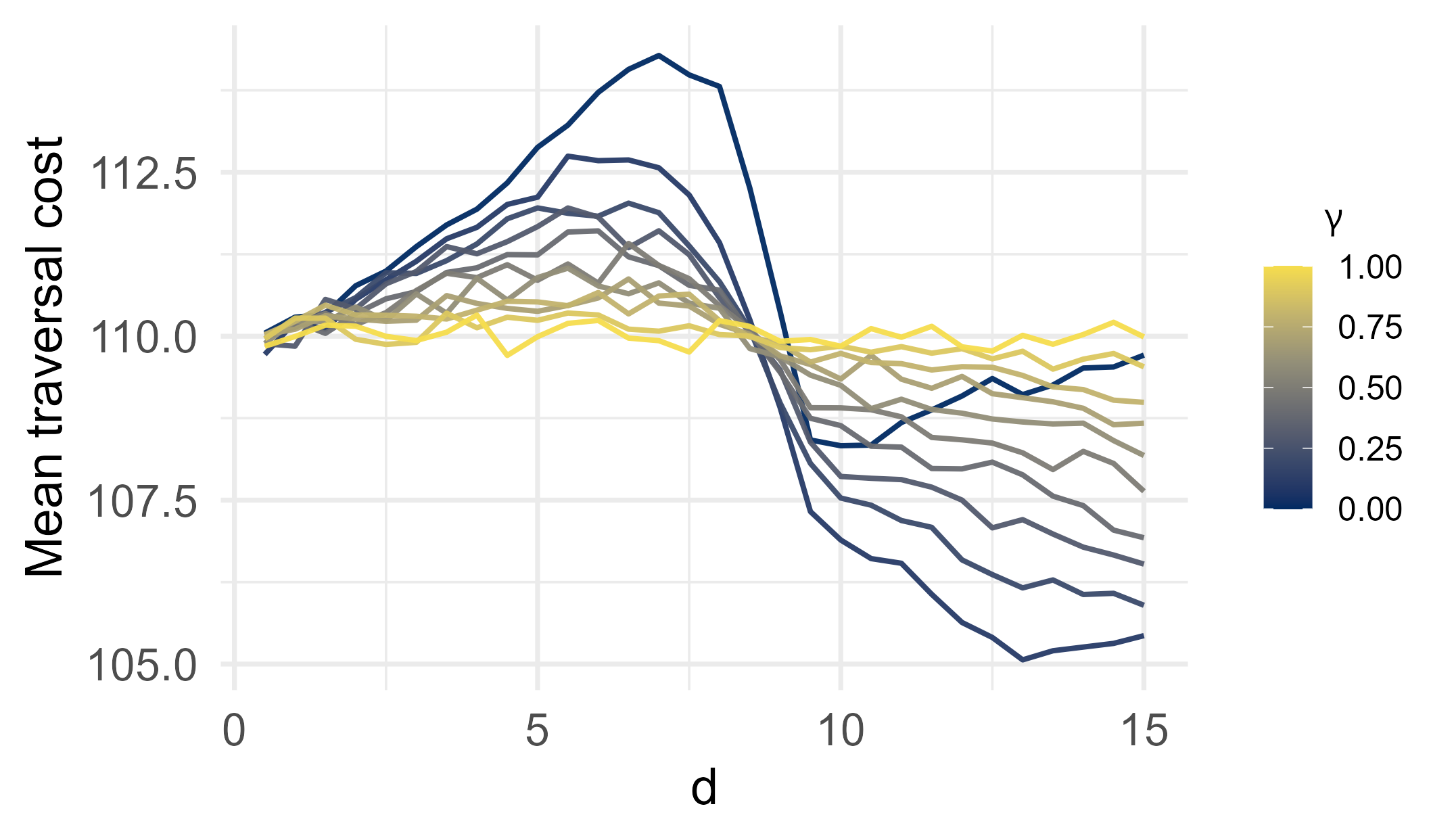}
\caption{Interaction Plot for the false obstacle only case with
the mean traversal cost $\bar{C}$ (averaged over obstacle numbers) vs. interaction distance $d$ values are plotted
for various $\gamma$ values under Strauss$(n,d,\gamma)$ regularity pattern.}
\label{fig:Lbarvsd-at-gamma}
\end{figure}

Figure~\ref{fig:contour-Lbarvsd-gamma} shows a filled contour plot  
of mean traversal cost $\bar{C}$ over the $\gamma$–$d$ space.  
The plot confirms earlier trends:  
traversal cost peaks when regularity is high (small $\gamma$)  
and spacing is moderate ($d \approx 1.5r$).  
For larger $d$ values ($\gtrsim 2r$), $\bar{C}$ becomes nearly insensitive to $\gamma$,  
effectively corresponding to uniform placement.

\begin{figure} [!htb]
\centering
\includegraphics[width=.65\textwidth]{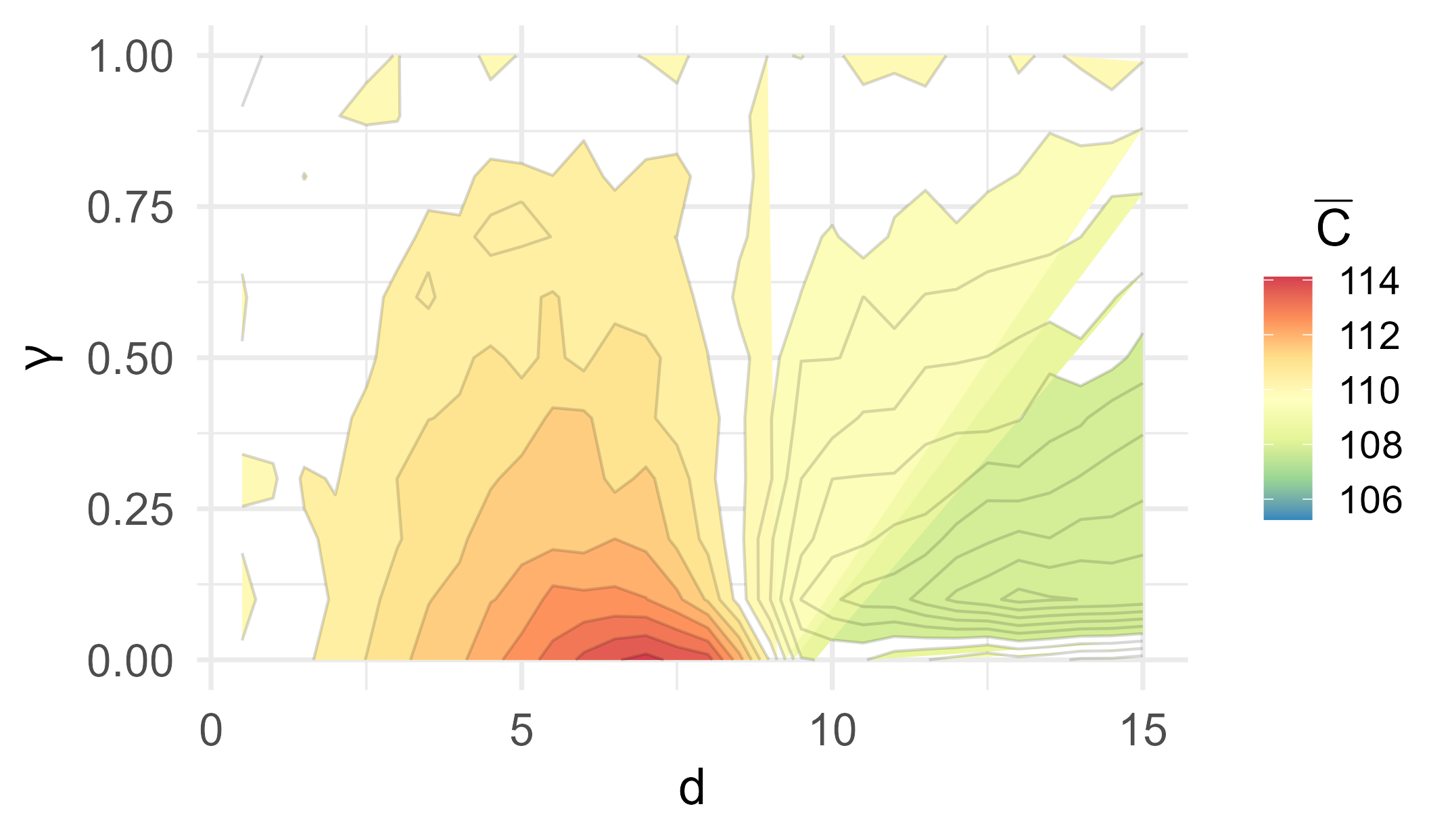}
\caption{Filled contour plot of mean traversal cost $\overline C$  (averaged over obstacle numbers) in the false obstacle case
 for $\gamma$ and $d$ values
under Strauss$(n,d,\gamma)$ regularity pattern.}
\label{fig:contour-Lbarvsd-gamma}
\end{figure}

Below are our \textbf{key findings for false obstacle only case.}
To maximize traversal cost when placing false obstacles, the OPA should:
\begin{itemize}
    \item Select moderate interaction distances ($1.3r \leq d \leq 1.8r$).
    \item Use low inhibition values ($\gamma < 0.1$) to induce strong regularity.
    \item Increase obstacle count $n_F$ to saturate the corridor.
\end{itemize}

\begin{remark}[True-Obstacle Only and Mixed-Obstacle Cases Under Regularity:]
We perform similar MC experiments under true obstacles only and mixed obstacle cases,  
and below we summarize our findings.  
See the Appendix for more details.
\end{remark}

\subsubsection{Summary of Findings for All Obstacle Compositions under Regularity}
Below we summarize the findings for all obstacle compositions under regularity cases:
\begin{itemize}
    \item \textbf{False Obstacles:} Traversal cost peaks for $d \in (6,8)$ and $\gamma < 0.1$.  
    For large $d$, regularity matters less,  
    and $\gamma \approx 1$ (uniformity) can be more effective.
    
    \item \textbf{True Obstacles:} Higher density saturates the environment;  
    regularity has limited added effect,  
    but moderate $d$ and low $\gamma$ still maximize cost.
    
    \item \textbf{Mixed Obstacles:} Increasing $n_T$ raises disambiguation frequency  
    and causes more resets, significantly raising traversal cost.
\end{itemize}

\subsection{Obstacle Pattern: Uniformity to Clustering (Matérn) - False-Obstacle Only Case}
In the false obstacle only case,  
we examine how clustering affects traversal cost  
using the Matérn process with varying parameters $\kappa$ (number of clusters),  
$r_0$ (cluster radius), and $\mu$ (mean obstacles per cluster).  
We vary $\kappa \in \{1, \dots, 10\}$, $r_0 \in \{2.5, 5, \dots, 25\}$,  
and $n_F \in \{10, 20, \dots, 100\}$.  
Each configuration is simulated with 100 MC replications,  
resulting in $11 \times 30 \times 100 = 33{,}000$ measurements per $n_F$.  
A sample realization is shown in Figure~\ref{fig:sample-Matern-false}.

\begin{figure} [!ht]
\centering
\includegraphics[width=.5\textwidth]{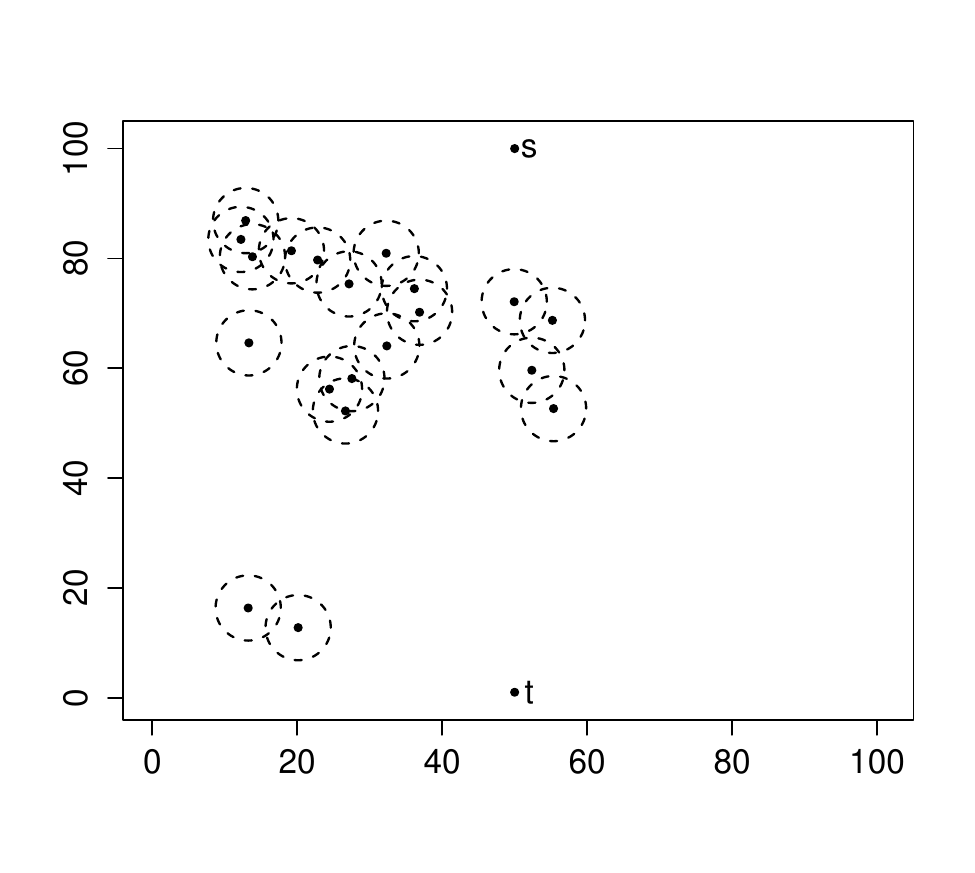}
\caption{Example realization of clustered false obstacles  
generated from Matérn$(\kappa=2,r_0=15,\mu=10)$.}
\label{fig:sample-Matern-false}
\end{figure}

Figure~\ref{fig:L-vs-Matern}(a) shows how mean traversal cost $\bar{C}$  
varies with $r_0$ for different $\kappa$ values.  
For $r_0 \gtrsim 15$, $\bar{C}$ levels off,  
as clustering no longer meaningfully alters the path.  
However, for tighter clusters ($r_0 \lesssim 15$),  
cost drops due to larger obstacle-free gaps.  
This effect intensifies with larger $n_F$.  
For fixed $r_0$, increasing $\kappa$ disperses obstacles,  
raising traversal cost by increasing obstruction in the navigation region.

\begin{figure}[!ht]
\centering
\begin{minipage}{0.45\textwidth}
    \centering
    \includegraphics[width=\linewidth]{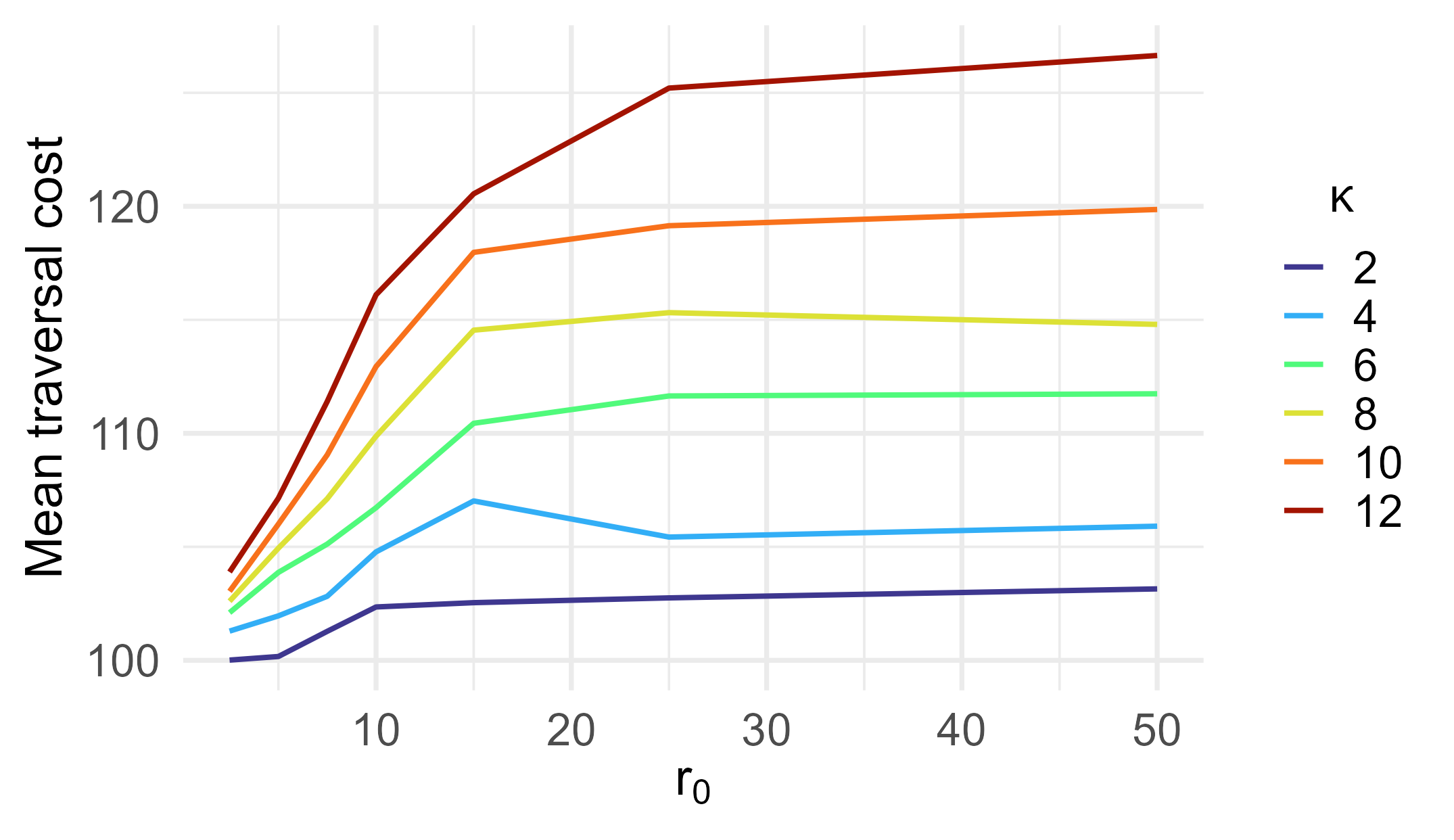}\\
    (a)
\end{minipage}\hfill
\begin{minipage}{0.45\textwidth}
    \centering
    \includegraphics[width=\linewidth]{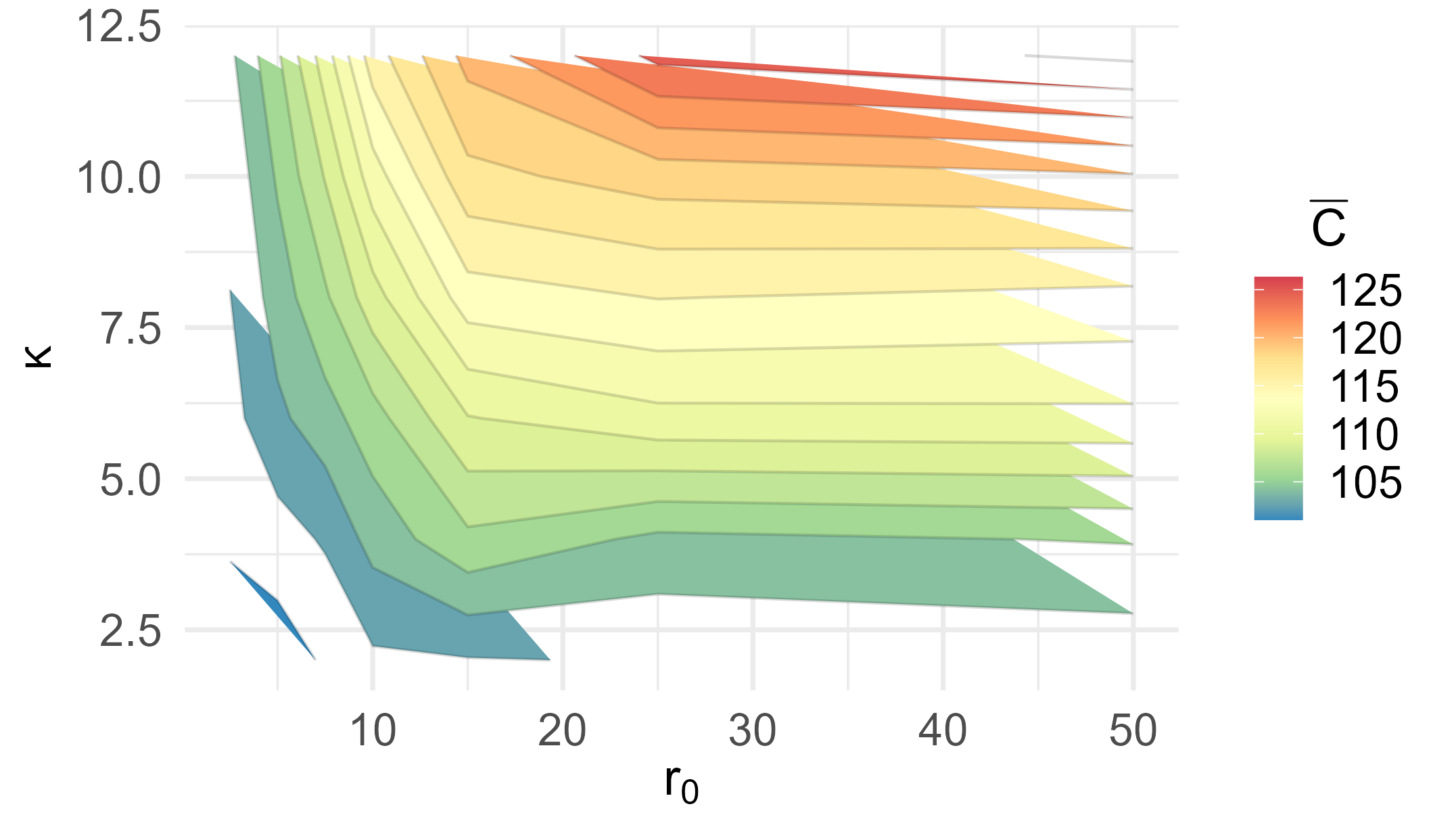}\\
    (b)
\end{minipage}
\caption{
(a) Interaction plot for the false-obstacle-only case with
the mean traversal cost $\bar{C}$ (averaged over obstacle numbers) vs.\ radius $r_0$, 
for varying $\kappa$ values under the Matérn$(\kappa,r_0,\mu)$ clustering pattern.  
(b) Contour plot of $\bar{C}$ as a function of $\kappa$ and $r_0$.}
\label{fig:L-vs-Matern}
\end{figure}

Figure~\ref{fig:L-vs-Matern}(b) indicates that traversal cost is highest  
at $\kappa \approx 12$ and $r_0 \approx 50$,  
where obstacles are most spatially dispersed.  
To maximize traversal difficulty in the clustered setting, the OPA should:
\begin{itemize}
    \item Use a high number of clusters ($\kappa$) to spread obstacles widely.
    \item Select $r_0 \gtrsim 15$ to avoid tight clustering and minimize wide open gaps.
    \item Employ a sufficiently large number of obstacles ($n_F$)  
    to fill the region effectively.
\end{itemize}

\begin{remark}[True-Obstacle Only and Mixed-Obstacle Cases Under Clustering:]
We perform similar MC experiments under true obstacles only and mixed obstacle cases,  
and below we summarize our findings.  
See the Appendix for more details.
\end{remark}

\subsubsection{Summary of Findings for All Obstacle Compositions Under Clustering.}
\begin{itemize}
    \item \textbf{False Obstacles:} Widely spaced clusters with large $\kappa$ and $r_0$  
    hinder traversal more effectively than tight clusters.

    \item \textbf{True Obstacles:} Increased spread heightens cost further  
    due to unavoidable disambiguation events.

    \item \textbf{Mixed Obstacles:} Greater proportion of true obstacles increases resets,  
    compounding the traversal burden.
\end{itemize}

Table~\ref{tab:optimal-strategies} summarizes optimal placement strategies (to maximize the traversal cost of NAVA)
under both Strauss and Matérn settings, across all obstacle types.

\begin{table}[ht]
    \centering
    \caption{Optimal Obstacle Placement Strategies for Increasing the Traversal Cost.  
    Scenario ``False Only" and ``True Only" refer to environments  
    with only false or only true obstacles, respectively;  
    ``Mixed" includes both types.}
    \label{tab:optimal-strategies}
    \begin{tabular}{lcc}
        \toprule
        \textbf{Scenario} & \textbf{Regularity (Strauss Process)} & \textbf{Clustering (Matérn Process)} \\
        \midrule
        False Only & Moderate \(d \in (6,8)\), Low \(\gamma < 0.1\) & Larger \(r_0\), High \(\kappa\) \\
        True Only  & Moderate \(d \in (6,8)\), Low \(\gamma < 0.4\) & Larger \(r_0\), High \(\kappa\) \\
        Mixed      & Same as false obs., prefer \(n_T > n_F\) & Same as true obs., prefer \(n_T > n_F\) \\
        \bottomrule
    \end{tabular}
\end{table}

\subsection{Dependence of Traversal on Obstacle Size}
\label{sec:dependence-obs-size}
We conducted a traversal comparison between obstacles of fixed size and heterogeneous radii. 
Using the same Strauss process parameters ($\gamma=0.1, d=7$) and obstacle count (n=50 with 30 false obstacles), 
we compared fixed radius, $r=4.5$, for all obstacles, against varying radii, $r\in\{3, 4.5, 6, 7.5\}$, 
randomly assigned with corresponding disambiguation costs of $\{3, 5, 7, 9\}$. 
The disambiguation cost is set in a way that disambiguation is encouraged 
when the probability of an obstacle being true is moderate or small. 
The setting with obstacles of different radii resulted in a total traversal cost of 114.63, 
while fixed-radius case led to cost of 109.77. 
Although using the same number of obstacles with similar placement patterns, 
the radius-varying obstacles created more complex edge intersection patterns, 
particularly where larger obstacles blocked critical path segments. 
As demonstrated in previous analysis, the Strauss process proves effective at creating traversal disruption. 
The additional impact observed with heterogeneous obstacle sizes suggests that 
real-world environments with varying obstacle sizes may further 
increase the disruptive effects of spatial point process, 
which presents as a future research direction. 

\begin{figure}[!ht]
\centering
\begin{minipage}{0.48\textwidth}
    \centering
    \includegraphics[width=\textwidth]{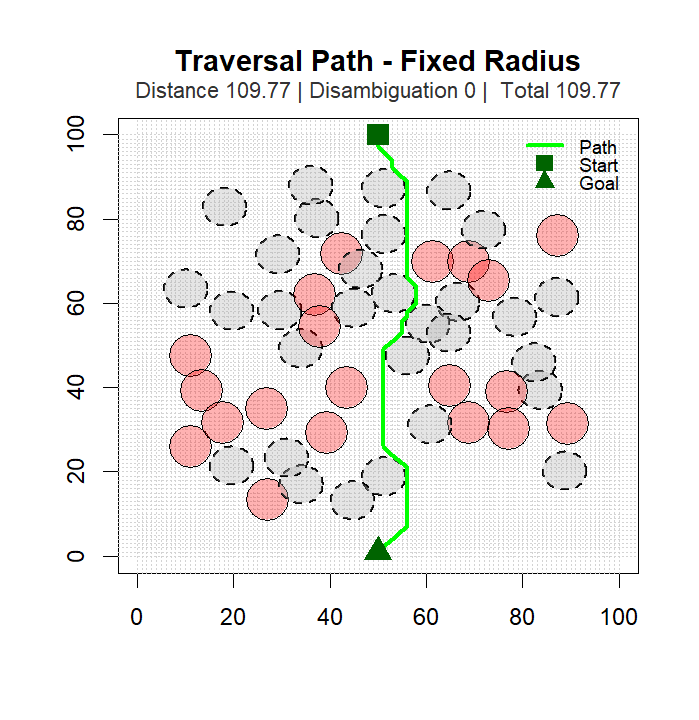}\\
    (a)
\end{minipage}\hfill
\begin{minipage}{0.48\textwidth}
    \centering
    \includegraphics[width=\textwidth]{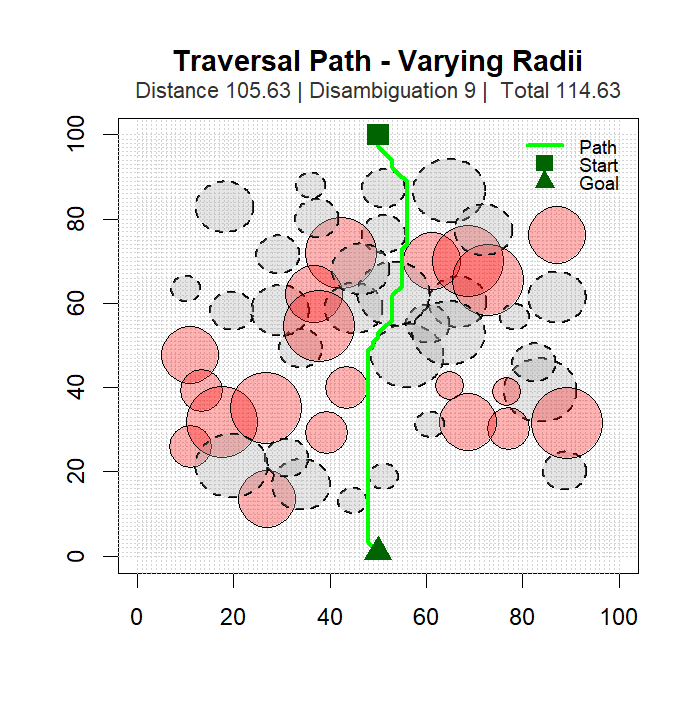}\\
    (b)
\end{minipage}
\caption{Comparison of obstacle size heterogeneity effects using Strauss process ($ \gamma = 0.1, d = 7, n = 50$). 
(a) Fixed obstacle size: total cost 109.77. (b) Heterogeneous obstacle sizes: total cost 114.63.
Distance: Euclidean distance, Disambiguation: Cost of disambiguation, and Total: total cost.}
\label{fig:radius_comparison}
\end{figure}

\begin{remark}
We did not include variable obstacle size as a systematic factor in the main set of experiments
or in the regression analysis. In operational obstacle placement applications,
inserted obstacles are usually of uniform size, so focusing on equal-radius settings
better reflects the intended deployment scenario. Moreover, keeping obstacle size fixed
avoids confounding interactions with other variables of interest such as clustering,
regularity, and the ratio of true to false obstacles, thereby isolating the effects
of these primary drivers on traversal cost.
\end{remark}

\subsection{Regression Models and Predictor Importance}
\label{sec:regression-results}

To quantify how spatial regularity parameters $(d, \gamma)$ and obstacle count $n_F$  
influence traversal cost,  
we fit robust linear regression models to log-transformed cost data  
using $M$-estimation with Huber weights \citep{Huber:1981},  
implemented via \texttt{rlm} in the \texttt{MASS} package in \texttt{R}.  
We choose this modeling approach due to the right-skewness and existence of outliers in the traversal costs.

\subsubsection{Robust Regression Model}
\label{sec:robust-reg}

We model the log-transformed traversal cost ($C$)  
as a second-order polynomial with interaction terms:
\begin{equation}
\widehat{C} = \beta_0 + \beta_i \text{ linear terms} + \beta_{j} \text{ quadratic terms}  
+ \beta_{k} \text{ two-way interactions}.
\end{equation}

In the \textbf{false-obstacle-only case with Strauss regularity},  
the final model (after dropping the non-significant $\gamma n_F$ interaction) is:
\begin{equation}
\label{eqn:rlm-false-regular}
\widehat C = 99.270 -4.21 \gamma + 0.33 d + 0.1897 n_F + 1.68 \gamma^2  
- 0.025 d^2 + 0.0006 n_F^2 + 0.365 \gamma d - 0.0064 d n_F.
\end{equation}

The coefficients indicate the following:
(i) \textbf{$\gamma$ effect:} Decreases cost for small $d$, increases it for larger $d$;
a stronger interaction parameter lowers traversal cost when obstacles are moderately spaced but raises it when they are farther apart.
(ii) \textbf{$d$ effect:} Concave-down (unimodal);
the influence of obstacle spacing on traversal cost follows a concave-down pattern, with costs peaking at an intermediate distance. 
(iii) \textbf{$n_F$ effect:} Concave-up growth;
increasing the number of false obstacles raises traversal cost in a concave-up manner, with the effect becoming stronger as more obstacles are added.
The residual standard error improves from 4.48 to 3.78 with robust regression,  
confirming its effectiveness.
Here are one-sentence interpretations for each case, written in your style:

Similar models were fitted for the true-only and mixed-obstacle cases  
under both Strauss and Matérn patterns.  
See the Appendix for detailed coefficients.  
A summary of selected models appears in Table~\ref{tab:rlm-summary}.

\begin{table}[htb]
\centering
\caption{Summary of Robust Linear Models for Traversal Cost (only significant terms are retained).}
\label{tab:rlm-summary}
\begin{tabular}{lccc}
\toprule
\textbf{Setting} & \textbf{Main Effects} & \textbf{Quadratic Terms} & \textbf{Interactions} \\
\midrule
False Obstacles (Regular) & $\gamma, d, n_F$ & $\gamma^2, d^2, n_F^2$ & $\gamma d, d n_F$ \\
True Obstacles (Regular) & $\gamma, d, n_T$ & $\gamma^2, d^2, n_T^2$ & $\gamma d, d n_T$ \\
Mixed Obstacles (Regular) & $\gamma, d, n_F, n_o$ & $\gamma^2, d^2, n_F^2, n_o^2$  
& $\gamma d, d n_o, \gamma n_F$ \\
\hline
False Obstacles (Clustered) & $\kappa, r_0$ & $r_0^2$ & $\kappa r_0$ \\
True Obstacles (Clustered) & $\kappa, r_0$ & $r_0^2$ & $\kappa r_0$ \\
Mixed Obstacles (Clustered) & $\kappa, r_0, n_F$  
& $\kappa^2, r_0^2, n_F^2$  
& $\kappa r_0, \kappa n_F, r_0 n_F$ \\
\bottomrule
\end{tabular}
\end{table}

\subsubsection{Random Forest Regression}
\label{sec:rf-reg}
To assess the relative importance of predictors influencing traversal cost,  
we apply Random Forest (RF) regression \citep{breiman2001} with 100 trees,  
implemented using the \texttt{randomForest} package in \texttt{R}.  
This nonparametric method complements the robust regression models  
by identifying nonlinear interactions and ranking predictors  
by their contribution to variance reduction.

In the false-obstacle-only case with Matérn clustering,  
the most influential variables are:  
$n_F$ (highest impact), $r_0$, and $\kappa$ (lowest impact).  
While the RF model explains 66.86\% of the variance,  
its mean squared residual is 19.04,  
indicating limited predictive accuracy for exact cost values.  
However, it provides valuable insight into predictor influence.  
Figure~\ref{fig:RF-varimp-plot-false-reg}(a)  
visualizes the variable importance rankings.

\begin{figure}[htb]
\centering
\begin{minipage}{0.45\textwidth}
    \centering
    \includegraphics[width=\linewidth]{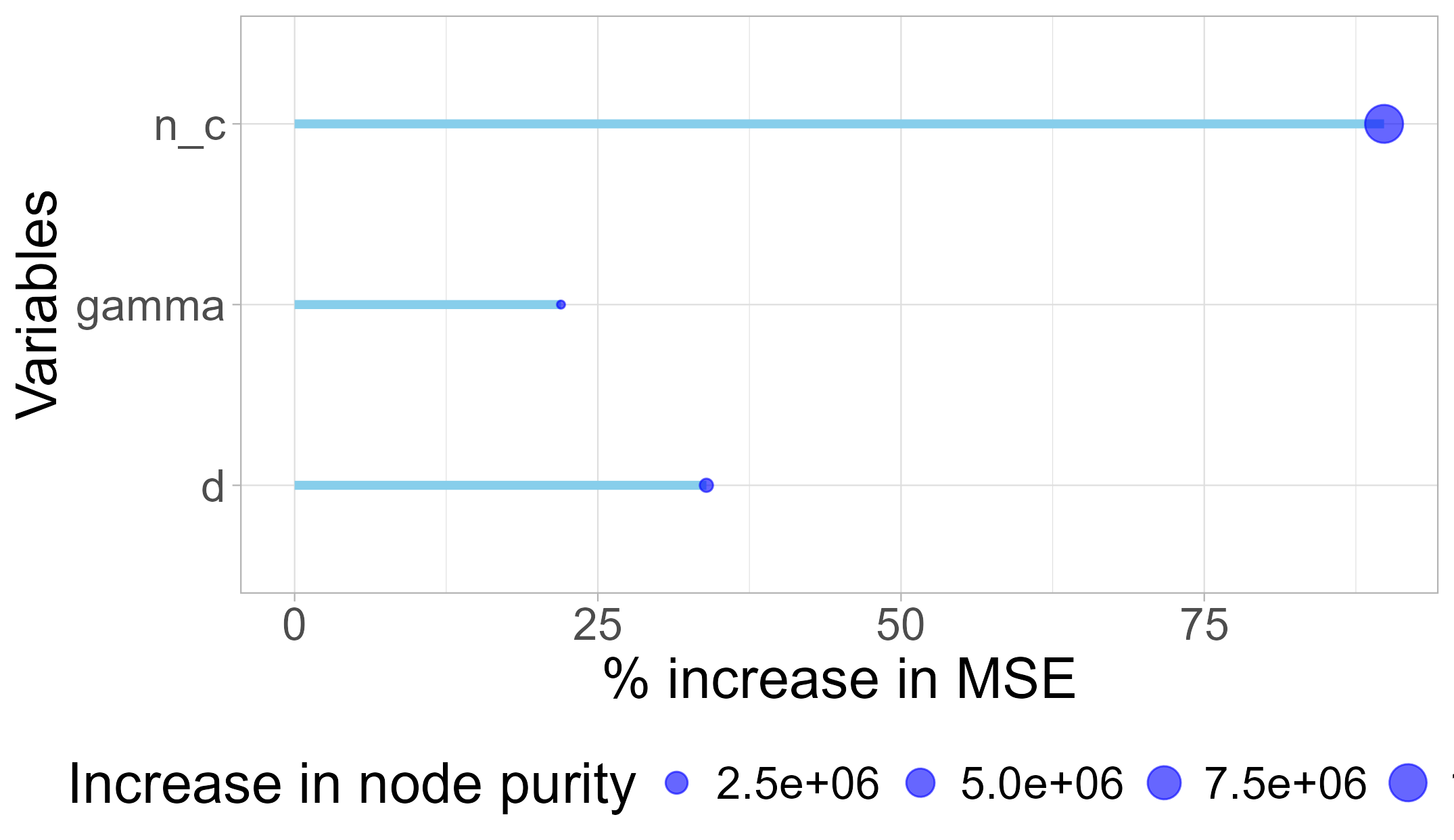}\\
    (a)
\end{minipage}\hfill
\begin{minipage}{0.45\textwidth}
    \centering
    \includegraphics[width=\linewidth]{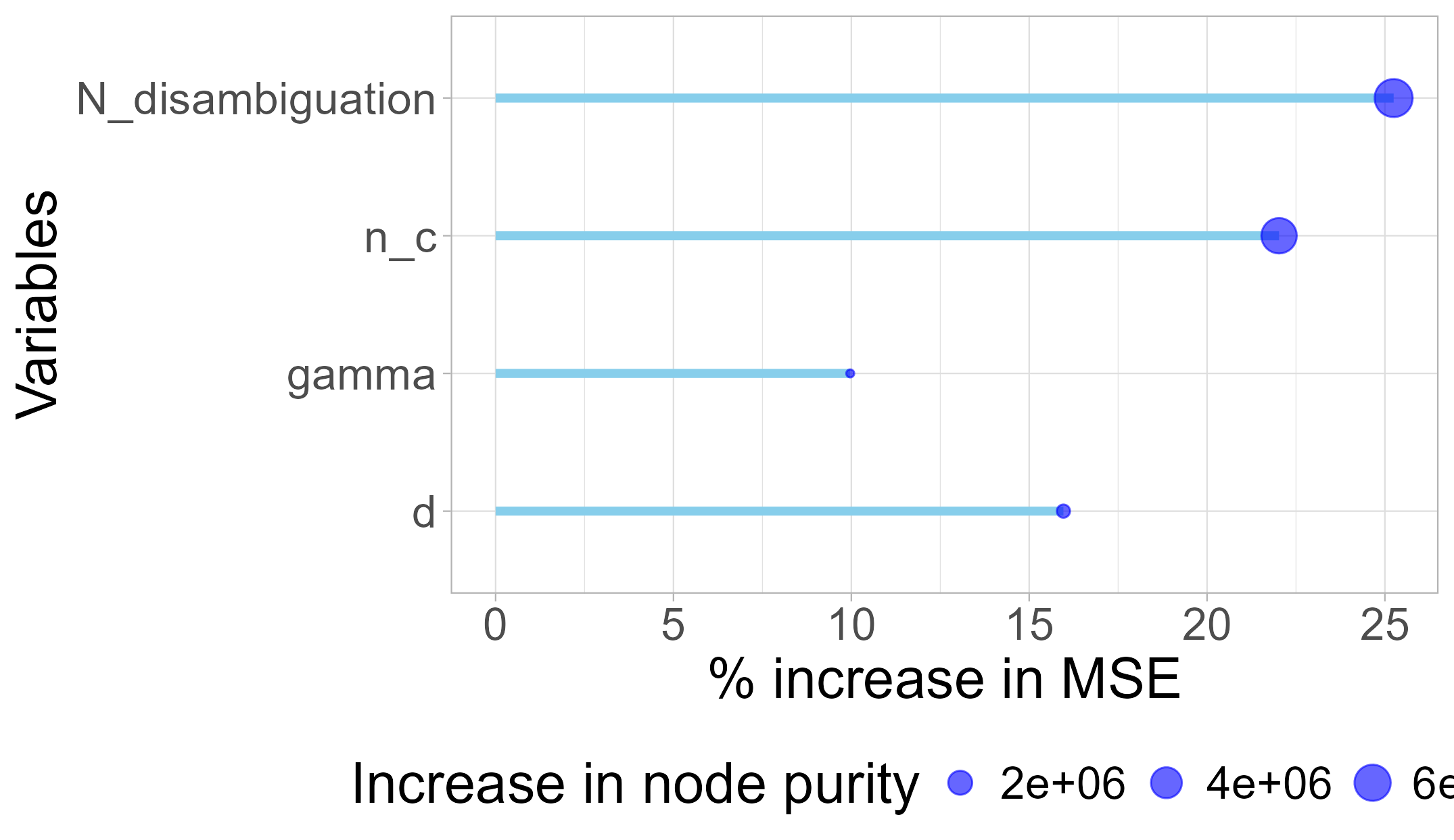}\\
    (b)
\end{minipage}
\caption{Variable importance (decrease in mean square error) and node purity (residual sum of squares)  
for RF regression models with $C$ as the response.  
(a) Using $n_F$, $d$, $\gamma$ as predictors.  
(b) Using $n_F$, $d$, $\gamma$, and $N_{dis}$ (number of disambiguations) as predictors.}
\label{fig:RF-varimp-plot-false-reg}
\end{figure}

RF-based variable importance results for other obstacle compositions  
(true-only and mixed) under both Strauss and Matérn settings  
are qualitatively consistent.  
See the Appendix for detailed plots and model diagnostics.  
Table~\ref{tab:rf-summary} summarizes the top-ranked predictors for each setting.

\begin{table}[htb]
\centering
\caption{Most significant predictors ranked by RF variable importance 
(with decreasing importance from left to right).}
\label{tab:rf-summary}
\begin{tabular}{ll}
\toprule
\textbf{Scenario} & \textbf{Key Predictors} \\
\midrule
False Obstacles (Regular) & $n_F, d, \gamma$ \\
True Obstacles (Regular) & $n_T, d, \gamma$ \\
Mixed Obstacles (Regular) & $n_o, n_F, d$ \\
\hline
False Obstacles (Clustered) & $r_0, \kappa$ \\
True Obstacles (Clustered) & $r_0, \kappa$ \\
Mixed Obstacles (Clustered) & $n_F, r_0, \kappa$ \\
\bottomrule
\end{tabular}
\end{table}

\subsubsection{Modeling Number of Disambiguations}
\label{sec:zinb-reg}

Disambiguations ($N_{dis}$) play a central role in traversal cost,  
particularly in settings with true or mixed obstacles.  
To better understand the factors influencing $N_{dis}$,  
we fit Zero-Inflated Negative Binomial (ZINB) models \citep{Zeileis2008}  
using the \texttt{zeroinfl} function in \texttt{R}. 
Here,
the ZINB specification is appropriate because disambiguation counts are discrete, 
highly overdispersed relative to a Poisson model, 
and include an excess of zeros corresponding to obstacle-free traversals. 

In the false-obstacle-only scenario under Strauss regularity,  
we include $\gamma$ and $d$ as predictors in the count model  
and $n_F$ in the zero-inflation model.  
The fitted model yields
(i) $\gamma$ coefficient: $-0.062$ \quad (higher regularity increases disambiguations),
(ii) $d$ coefficient: $-0.054$ \quad (greater spacing reduces disambiguations), and
(iii) $n_F$ in the logit model: negative effect on probability of zero disambiguations.

This indicates that regular spacing of obstacles increases disambiguation frequency,  
while increasing obstacle count makes disambiguation events more likely.

Similar patterns hold across other obstacle types:
\begin{itemize}
    \item In both true-only and mixed scenarios, $N_{dis}$ decreases with $\gamma$ and $d$,  
    and increases with $n_T$ or $n_F$.
    
    \item In Matérn-clustered layouts, $r_0$ significantly affects disambiguations  
    (larger $r_0$ reduces $N_{dis}$),  
    while $\kappa$ shows limited influence.
\end{itemize}

See the Appendix for full model summaries.

\subsection{Summary of Recommendations from Monte Carlo Analysis}

\subsubsection{Recommendations for OPA (Adversarial Obstacle Placement Perspective)}
This work is written from the angle of OPA to maximize traversal cost for NAVA.
Based on our finding, we recommend OPA to do the following.
\begin{itemize}
    \item \textbf{Strauss Regularity:} Place obstacles with high regularity (low $\gamma$)  
    and moderate spacing ($d \in (6,8)$) to create evenly distributed barriers that blanket the corridor.
    \item \textbf{Matérn Clustering:} Favor dispersed configurations with large $r_0$ and high $\kappa$  
    to cover more area and reduce the chance of long, unobstructed corridors.
    \item \textbf{Mixed Obstacles:} Prioritize true obstacles over false ($n_T > n_F$)  
    since they force resets and disambiguations, amplifying traversal cost.
\end{itemize}

\subsubsection{Recommendations for NAVA (Defensive/Traversal Perspective)}
Our suggestions would be different for NAVA (to minimize her traversal cost), not necessarily the opposite of recommendations to OPA.
The main reason is that their capabilities are different, with NAVA having no obstacle insertion capability,
while OPA lacking a sensor to guide his insertion schemes (or to predict NAVA's traversal better).
Based on our finding, we recommend NAVA to do the following.
\begin{itemize}
    \item \textbf{Strauss Regularity:} When obstacle spacing is conspicuously uniform,  
    interpret it as adversarial placement. Counter by routing toward the periphery of the insertion window,  
    probing selectively at high–leverage obstacles, and avoiding blanket central zones.
    \item \textbf{Matérn Clustering:} Clusters often leave navigable inter–cluster corridors.  
    Counter by scouting for these low–density seams, delaying disambiguations until bottlenecks are reached,  
    and aligning paths along the sparse axis of elongated clusters.
    \item \textbf{Mixed Obstacles:} Early probes help infer the $n_T/n_F$ ratio.  
    If false obstacles dominate, thread the corridor with minimal probing;  
    if true obstacles dominate, pre–commit to wider detours with fewer but strategically placed disambiguations.
\end{itemize}

From the OPA’s side, deliberate regularity or dense clustering maximizes cost.  
From the NAVA’s side, diagnosing these patterns is critical:  
regularity $\Rightarrow$ treat as adversarial and route wide;  
clustering $\Rightarrow$ exploit inter–cluster corridors for lower–cost traversal.  
In practice, adversarial recognition (regular layouts) signals the need for caution and probing economy,  
while clustered layouts indicate exploitable corridors consistent with natural or incidental blockage.

\begin{remark}
Although linear and count models are primarily used for analyzing covariate influence  
on traversal cost and disambiguations,  
they also support prediction when obstacle configurations  
and spatial parameters are known or estimated.  
For instance, the \texttt{spatstat.model} package in \texttt{R} \citep{baddeley2010}  
provides \texttt{ppm} for Strauss and \texttt{clusterfit} for Matérn processes.  
These tools allow practitioners to calibrate models from real spatial data  
and forecast traversal cost under plausible obstacle arrangements.
These aspects are deferred for future work.
\end{remark}

\section{Illustrative Geospatial Case Study}
\label{sec:llust-geospat}

To illustrate the real-world relevance of our framework, 
we constructed a street network for downtown Auburn, Alabama, 
centered at Toomer’s Corner—the symbolic intersection of College Street and Magnolia Avenue 
and one of the busiest pedestrian and vehicle corridors in the city. 
This setting provides a natural testbed: disruptions along the main streets of Auburn 
can immediately affect both everyday traffic and emergency evacuation.

Obstacles were modeled as disk-shaped disruption zones positioned directly on the street network. 
These zones represent realistic short-term blockages such as construction sites, 
accident scenes, or barricades during football game weekends. 
Two spatial patterns were imposed:  
(i) a \emph{regular pattern} approximating a Strauss process, mimicking evenly spaced work sites or coordinated closures, and  
(ii) a \emph{clustered pattern} generated by a Matérn process, reflecting incidents concentrated in a few downtown blocks. 
Each disruption was designated either as a true obstacle (completely blocking traffic)
or a false obstacle (appearing disruptive but passable), 
with sensor readings providing uncertain information on their status.

Figure~\ref{fig:auburn_traversal} compares RD-based traversal paths 
from the northwestern corner of downtown Auburn 
to the southeastern side along the main street network.  
In the baseline case (no disruptions), the optimal route measures 2067 meters.  
When clustered obstacles are introduced, traversal length increases to 2236 meters, 
as detours occur only near localized blockages but unobstructed corridors remain open.  
By contrast, regular spacing of obstacles forces repeated detours and resets, 
raising the total cost to 2480 meters.  

This case demonstrates the practical implications of obstacle spatial patterns:  
for Auburn drivers or emergency responders, recognizing whether blockages are scattered in a regular sequence 
(as might occur with coordinated construction projects) or concentrated in clusters 
(as in storm debris or localized accidents) fundamentally changes the best routing strategy.  
From a planning perspective, this insight highlights how adversarial or coordinated disruptions can be more damaging 
than naturally clustered ones, even when the total number of obstacles is the same.

\begin{figure}[!ht]
\centering
\begin{minipage}{0.5\textwidth}
    \centering
    \includegraphics[width=\textwidth]{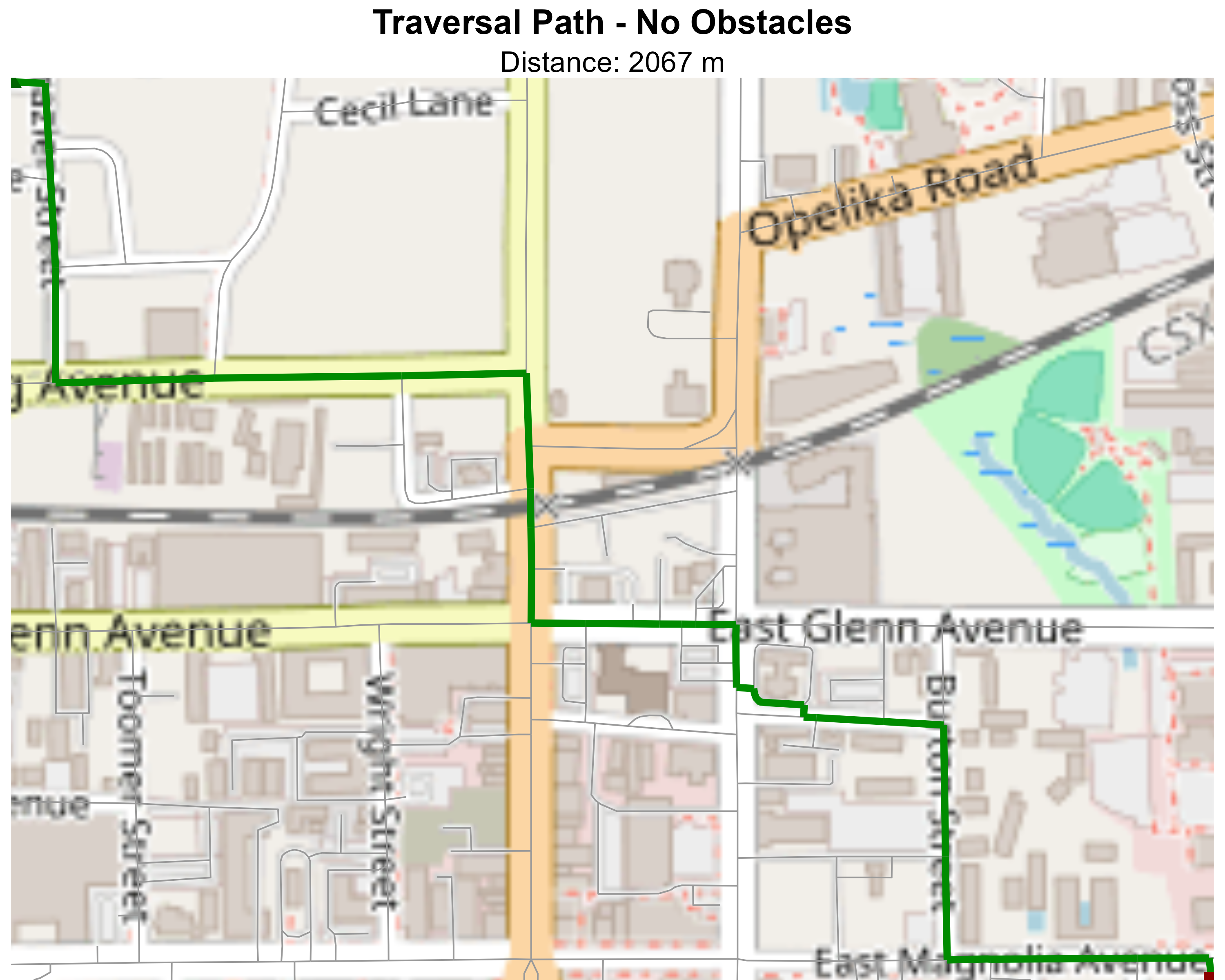}\\
    (a)
\end{minipage}\hfill
\begin{minipage}{0.5\textwidth}
    \centering
    \includegraphics[width=\textwidth]{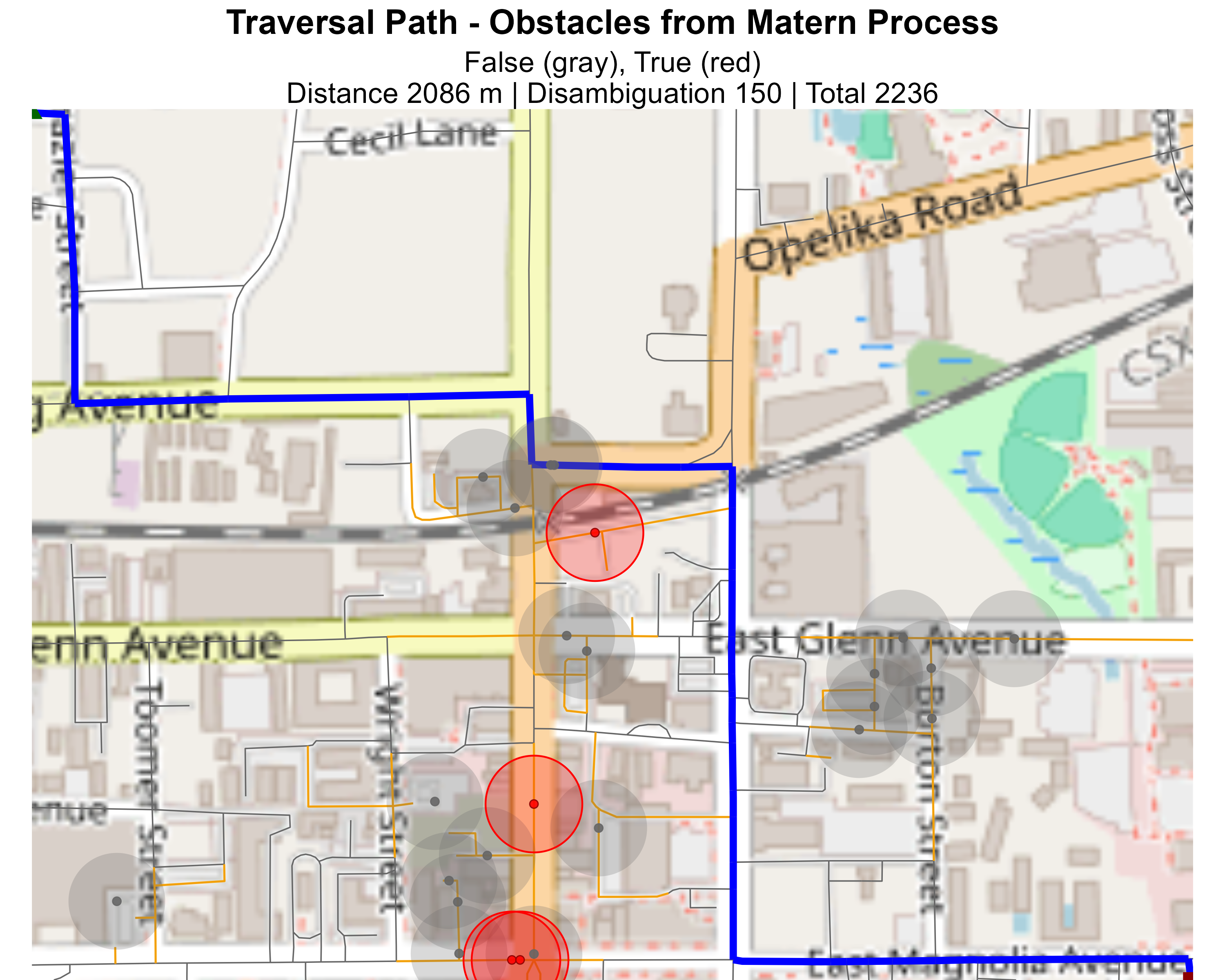}\\
    (b)
\end{minipage}\hfill
\begin{minipage}{0.5\textwidth}
    \centering
    \includegraphics[width=\textwidth]{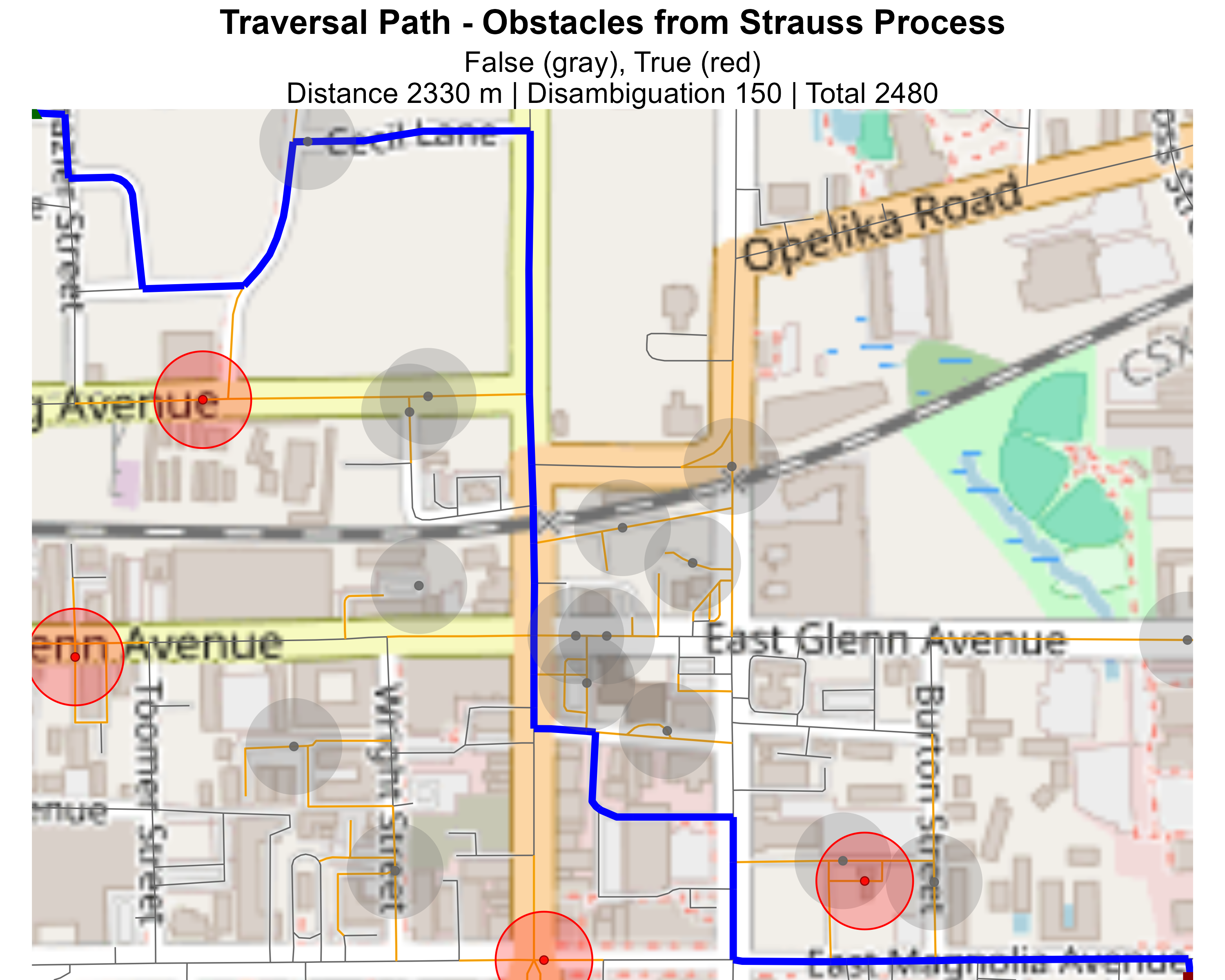}\\
    (c)
\end{minipage}
\caption{Street network traversal from northwest to southeast in Auburn, AL. 
(a) Baseline case: 2067m. 
(b) Clustered obstacles (Matérn): 2236m (2086m path + 150m disambiguation). 
(c) Regular obstacles (Strauss): 2480m (2330m path + 150m disambiguation).}
\label{fig:auburn_traversal}
\end{figure}

\section{Stochastic Ordering of Path Weights and Traversal Costs}
\label{sec:stoch-order}
Beyond Monte Carlo averages, decision makers often require comparative
assessments of obstacle patterns: which configurations are systematically more
disruptive? Stochastic ordering provides a rigorous tool for such comparisons.
For instance, in maritime defense, a regular minefield may stochastically dominate
a uniform placement in traversal cost, highlighting its greater blocking power.
Conversely, in flood evacuation scenarios, clustered debris may result in
stochastically lower traversal costs by leaving wider unobstructed corridors.

We provide stochastic ordering results for the total path weights 
and conjecture a stochastic ordering for the traversal costs of NAVA,
some of which being inspired by our simulation results,
under various obstacle pattern settings.
We first provide the definition of stochastic ordering for completeness.
If $X$ and $Y$ are random variables defined on the same sample space $\Omega$,
then $\X$ is \emph{stochastically smaller} than $Y$ (denoted as $X \leq_{st} Y$)
if $F_X(\omega) \geq F_Y(\omega)$ for all $\omega \in \Omega$.
Trivially, it follows that $X \leq_{st} Y$ iff $a\,X \leq_{st} a\,Y$ for any constant $a>0$
and $X \leq_{st} Y$  iff $b+X \leq_{st} b+Y$ for any constant $b$.
Furthermore,
$X \leq_{st} Y$ iff $g(X) \leq_{st} g(Y)$ provided that $g(x)$ is monotone increasing,
and 
$X \leq_{st} Y$ iff $g(Y) \leq_{st} g(X)$ provided that $g(x)$ is monotone decreasing.

In this section, we present formal results on the \textbf{stochastic ordering}  
of path weights and, less formally, traversal costs incurred by the NAVA  
under various spatial obstacle configurations.  
These theoretical insights complement the empirical trends observed  
in Section~\ref{sec:MCexp-results-and-analysis},  
offering a rigorous perspective on how obstacle composition  
and spatial structure shape expected navigation outcomes.
In this section,
we only state the main results and defer the formal proofs to the Appendix.

We first define \emph{stochastic ordering},  
a concept widely used in probability and decision theory.  
Let $X$ and $Y$ be random variables on the same probability space.  
Then $X$ is said to be \emph{stochastically smaller} than $Y$ (denoted $X \le_{st} Y$)  
if $F_X(x) \ge F_Y(x)$ for all $x \in \mathbb{R}$,  
where $F_X$ and $F_Y$ are the respective cumulative distribution functions (cdfs).  
That is, $X$ tends to take on lower values than $Y$ with higher probability.

This ordering is particularly relevant for spatial navigation problems like ours:  
if the weight of a path $\pi$ under obstacle pattern A is stochastically smaller than under pattern B,  
then we can expect NAVA to incur lower traversal costs under pattern A with high probability.


Recall that NAVA aims to traverse from the start point $s$ to the target point $t$
on an $(s,t)$ path on the discretized grid (i.e., an $(s,t)$ walk on the graph $G$),
with each edge weighted by its Euclidean length and
the disambiguation cost of each obstacle disk intersecting the edge (see Section \ref{sec:dicrete-OOP-problem}).
Suppose that one vertex on the middle of one boundary edge of $\Omega$ is the start point $s$,
and the middle vertex on the opposite boundary edge of $\Omega$ is the target $t$.
NAVA traverses on $(s,t)$ paths (walks to be more precise) on $G$
(i.e., on paths connecting $s$ to $t$ on edges of $G$).
Suppose also that the (Euclidean) lengths of all $(s,t)$-paths consist of $k$ many distinct values (avoiding loops)
with $i^{th}$ length occurring $k_i$ times.
Denote such an $(s,t)$ path as $\pi_{ij}=\pi_{ij}(s,t)$ for $i=1,2,\ldots,k$ and $j=1,2,\ldots,k_i$
so that there are in total $K = \sum_{i=1}^{k}k_i$ many $(s,t)$ paths on the usual 8-adjacency integer grid.
Let $L_{ij}=L(\pi_{ij})$ be the corresponding Euclidean length for path $\pi_{ij}$
and without loss of generality we can assume that $L_{1j}=m < L_{2j} < \ldots < L_{kj}$.
Notice that for any fixed $i$,
there are $k_i$ paths of equal length, i.e., $L(\pi_{ij}) = L(\pi_{ij'})$ for $j,j'=1,2,\ldots,k_i$.

We further assume that $r > 1/2$ so that an obstacle intersects grid edges with probability 1
and the disambiguation cost $c(x)=c > 0$ is fixed for all obstacles,
$p(x)$ is from a continuous distribution,
and NAVA uses the RD algorithm \citep{aksakalli2011}.

Let $W^{\mF}_{ij}$ denote the path weight when all obstacles are false (false-only),  
$W^{\mT}_{ij}$ when all are true (hard obstructions),  
and $W^{\mM}_{ij}$ for a mixed setting with both true and false obstacles.  
Our goal is to compare these random variables via stochastic ordering  
to better understand how obstacle composition impacts perceived traversal cost.
Let $W_{ij}=W(\pi_{ij},\X)$ be the cost assigned to path $\pi_{ij}$ by NAVA
utilizing its (imperfect) sensor (prior to its traversal),
and $w(e)$ be the weight for any edge $e$ in $\pi_{ij}$ assigned by the RD algorithm,
which incorporates the probability $p(x)$ if $e$ intersects the obstacle(s)
(see Equations \eqref{eqn:weight-of-edges} and \eqref{eqn:Lpi}).
Let $\pi^*:=\arg \min_{\pi_{ij}} W_{ij}$ be the path of minimum weight for NAVA 
and $W^*$ be the corresponding path weight.

We will present a stochastic ordering between total weights of path $\pi_{ij}$
under false-only, true obstacle only, and mixed obstacle cases.
Let $W^{\mF}_{ij}$ be the path weight of path $\pi_{ij}$ under the false-only case,
$W^{\mT}_{ij}$ be the path weight under the true obstacle only case,
and $W^{\mM}_{ij}$ be the path weight under the mixed obstacle case.
Also,
assume that NAVA's sensor is using Beta$(a,b)$ distribution for assigning probabilities $p_F$ and $p_T$
to false and true obstacle disks, respectively.

\subsection{Theoretical Tools for Stochastic Ordering and Implications for NAVA Path Weights}

To formalize comparisons between traversal costs under different obstacle compositions,  
we use the following standard result.

\begin{lemma}
\label{lem:stoch-order-sums-rvs}
Let $X_i$ and $Y_i$, for $i=1,2,\ldots,n$, be independent continuous random variables.  
If $X_i \le_{st} Y_i$ for all $i$, then:
\[
\sum_{i=1}^{n} X_i \le_{st} \sum_{i=1}^{n} Y_i.
\]
\end{lemma}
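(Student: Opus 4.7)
My plan is to prove the lemma via the standard quantile–coupling argument, which converts the distributional statement $X_i \le_{st} Y_i$ into a pointwise (almost sure) inequality after a suitable redefinition of the variables. The key advantage is that once the componentwise a.s.\ inequality is established on a common probability space, stochastic ordering of the sums follows immediately because the sum is monotone in each argument.

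Concretely, I would proceed as follows. First, I would recall the quantile–coupling fact: for any real random variable $Z$ with cdf $F_Z$ and any uniform random variable $U\sim\mathrm{Unif}(0,1)$, the variable $F_Z^{-1}(U)$ has cdf $F_Z$, where $F_Z^{-1}(u):=\inf\{z: F_Z(z)\ge u\}$. Next, I would take independent uniforms $U_1,\ldots,U_n$ on a common probability space and define
\[
\widetilde X_i \;:=\; F_{X_i}^{-1}(U_i), \qquad \widetilde Y_i \;:=\; F_{Y_i}^{-1}(U_i), \qquad i=1,\ldots,n.
\]
Then $\widetilde X_i \stackrel{d}{=} X_i$ and $\widetilde Y_i \stackrel{d}{=} Y_i$, and by independence of the $U_i$'s the vectors $(\widetilde X_1,\ldots,\widetilde X_n)$ and $(\widetilde Y_1,\ldots,\widetilde Y_n)$ each have independent components with the prescribed marginals. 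Since $X_i\le_{st}Y_i$ gives $F_{X_i}(t)\ge F_{Y_i}(t)$ for every $t$, taking generalized inverses reverses the inequality and yields $F_{X_i}^{-1}(u)\le F_{Y_i}^{-1}(u)$ for every $u\in(0,1)$, so $\widetilde X_i\le \widetilde Y_i$ almost surely for each $i$.

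Summing over $i$ then gives $\sum_{i=1}^{n}\widetilde X_i \le \sum_{i=1}^{n}\widetilde Y_i$ almost surely, and because the coupled sums have the same laws as $\sum X_i$ and $\sum Y_i$ respectively, the desired stochastic ordering follows: for any $t\in\mathbb{R}$,
\[
\mathbf{P}\!\left(\textstyle\sum X_i \le t\right) \;=\; \mathbf{P}\!\left(\textstyle\sum \widetilde X_i \le t\right) \;\ge\; \mathbf{P}\!\left(\textstyle\sum \widetilde Y_i \le t\right) \;=\; \mathbf{P}\!\left(\textstyle\sum Y_i \le t\right).
\]
The only subtlety worth guarding against is the monotonicity step $F_{X_i}\ge F_{Y_i}\Rightarrow F_{X_i}^{-1}\le F_{Y_i}^{-1}$, which is immediate from the definition of the generalized inverse; the assumption that the $X_i$ and $Y_i$ are continuous makes this entirely routine, since the cdfs are continuous and the inverses are classical quantile functions. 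Thus, I do not expect a genuine obstacle; the main care point is simply to state the quantile–coupling construction cleanly and to verify that independence is preserved by using independent uniforms.
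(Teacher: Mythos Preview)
Your proposal is correct, and it follows a genuinely different route from the paper's own proof. The paper proves the case $n=2$ by a direct convolution computation: it writes $F_{X_1+X_2}(x)$ as an iterated integral involving the densities $f_{X_1},f_{X_2}$, replaces $F_{X_1}$ by $F_{Y_1}$ inside the integral using $X_1\le_{st}Y_1$, then repeats the step for the second coordinate, and finally invokes induction on $n$. Your quantile--coupling argument is both shorter and more robust: it avoids the induction, does not rely on the existence of densities (so it covers the discrete case the paper handles separately in a remark), and immediately generalizes to any function that is coordinatewise nondecreasing, not just the sum. The paper's approach, by contrast, is more elementary in the sense that it uses only Fubini and the definition of $\le_{st}$, with no appeal to the coupling idea; but it pays for this with a longer calculation and the need for absolutely continuous marginals.
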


\begin{proof}
  Here, we will prove it for $n=2$, the general result follows by induction on $n$. 
By the independence assumption together with the convolution formula, we get
  \begin{align*}
     f_{X_1+X_2}(x) = \left(f_{X_1} * f_{X_2}\right)(x)
                    = \int_{-\infty}^{\infty} f_{X_1}(x-y) f_{X_2}(y) \ud y.
  \end{align*}
  Hence, for all $x\in \mathbb{R}$, we have that
  \begin{align*}
    F_{X_1+X_2}(x)
    =   & \int_{-\infty}^x f_{X_1+X_2}(z) \ud z                                             
    =   \int_{-\infty}^x \ud z  \int_{-\infty}^{\infty}\ud z\: f_{X_1}(y-z) f_{X_2}(z)    \\
    =   & \int_{-\infty}^{\infty} \ud z\: f_{X_2}(z) \int_{-\infty}^{x}\ud y\: f_{X_1}(y-z) 
    =   \int_{-\infty}^{\infty} \ud z\: f_{X_2}(z) F_{X_1}(x-z)                           \\
    \ge & \int_{-\infty}^{\infty} \ud z\: f_{X_2}(z) F_{Y_1}(x-z)   \quad \text{(since $X_1 \le_{st} Y_1$)} \\
    =   & \int_{-\infty}^{\infty} \ud z\: f_{X_2}(z) \int_{-\infty}^{x}\ud y\: f_{Y_1}(y-z) 
    =   \int_{-\infty}^{x} \ud y \int_{-\infty}^{\infty}\ud z\: f_{X_2}(z) f_{Y_1}(y-z)   \\
    =   & \int_{-\infty}^{x} \ud y \int_{-\infty}^{\infty}\ud z\: f_{X_2}(y-z) f_{Y_1}(z) \quad 
    \text{(by the change of variables $z \leftrightarrow y-z$)}   \\
    =   & \int_{-\infty}^{\infty} \ud z\: f_{Y_1}(z) \int_{-\infty}^{x}\ud y\: f_{X_2}(y-z) 
    =   \int_{-\infty}^{\infty} \ud z\: f_{Y_1}(z) F_{X_2}(x-z)                           \\
    \ge & \int_{-\infty}^{\infty} \ud z\: f_{Y_1}(z) F_{Y_2}(x-z)    \quad \text{(since $X_2 \le_{st} Y_2$)}  \\
    =   & \int_{-\infty}^{\infty} \ud z\: f_{Y_2}(z) \int_{-\infty}^{x}\ud y\: f_{Y_1}(y-z) 
    =   \int_{-\infty}^x \ud y  \int_{-\infty}^{\infty}\ud z\: f_{Y_1}(y-z) f_{Y_2}(z)    \\
    =   & F_{Y_1+Y_2}(x).
  \end{align*}
  This proves the lemma.
\end{proof}

\begin{remark}
The result in Lemma \ref{lem:stoch-order-sums-rvs} also holds for discrete random variables with integrals replaced with sums above.
\end{remark}

Since path weights $W_{ij}$ are computed as sums over edge weights—  
each influenced by obstacle proximity and sensor-derived probabilities—  
Lemma~\ref{lem:stoch-order-sums-rvs} provides a natural foundation  
for comparing path distributions under different obstacle types.  

Recall that sensor outputs follow Beta distributions,  
with true obstacles typically assigned $p_T \sim \Beta(b,a)$  
and false ones $p_F \sim \Beta(a,b)$ for $a < b$,  
reflecting stronger sensor discrimination.  
This difference implies a stochastic ordering between the respective probability marks,  
which directly impacts the edge weights and, hence, the total perceived path weights.

\begin{lemma}
\label{lem:stoch-order-beta(a,b)-rvs}
Let $X \sim \Beta(a,b)$ and $Y \sim \Beta(b,a)$ with $a < b$.  
Then $X$ is stochastically smaller than $Y$, that is,
\[
X \le_{st} Y.
\]
\end{lemma}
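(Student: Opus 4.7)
The plan is to exploit the classical symmetry of the Beta family: if $X \sim \Beta(a,b)$, then $1-X \sim \Beta(b,a)$, so $Y$ has the same distribution as $1-X$. Using continuity of $X$, this gives $F_Y(x) = P(1-X \le x) = 1 - F_X(1-x)$, and the desired dominance $F_X(x) \ge F_Y(x)$ for all $x \in [0,1]$ collapses to the single symmetric inequality
\[
F_X(x) + F_X(1-x) \ge 1, \qquad x \in [0,1].
\]
So the whole claim reduces to verifying this one bound on the cdf of $\Beta(a,b)$ when $a<b$.

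To carry this out, I would introduce the auxiliary function $h(x) := F_X(x) + F_X(1-x) - 1$ and show $h \ge 0$ on $[0,1]$. First, note the boundary values $h(0)=h(1)=0$ and the built-in symmetry $h(1-x) = h(x)$, which lets me restrict attention to $x \in [0,1/2]$. Differentiating and factoring the Beta density gives
\[
h'(x) = f_X(x) - f_X(1-x) = \frac{x^{a-1}(1-x)^{a-1}}{B(a,b)}\bigl[(1-x)^{b-a} - x^{b-a}\bigr],
\]
and for $x \in (0,1/2)$ the bracketed factor is strictly positive because $1-x > x$ and $b-a > 0$. Hence $h$ is strictly increasing on $[0,1/2]$ starting from $h(0)=0$, so $h \ge 0$ there, and the symmetry $h(1-x)=h(x)$ extends the bound to $[1/2,1]$. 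This yields $F_X(x) \ge F_Y(x)$ for every $x$, i.e.\ $X \le_{st} Y$.

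The only substantive step is the sign analysis of $h'$ via the factorization above; once that is in place, the symmetric boundary values and monotonicity finish the argument mechanically, so I do not expect a genuine obstacle. A more compact alternative is the likelihood-ratio route: the density ratio $f_Y(x)/f_X(x) = (x/(1-x))^{b-a}$ is non-decreasing on $(0,1)$ when $a<b$, so $X$ is smaller than $Y$ in the likelihood-ratio order, and stochastic ordering follows from the standard implication that likelihood-ratio ordering implies stochastic ordering. I would nonetheless keep the direct symmetry-and-monotonicity argument as the primary proof, because it is self-contained and does not require importing an external ordering result that the paper has not stated.
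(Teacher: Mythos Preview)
Your proof is correct. Both your argument and the paper's rest on the same two ingredients: the Beta reflection $Y \stackrel{d}{=} 1-X$ and the pointwise density inequality $f_X(t) \ge f_X(1-t)$ for $t \in (0,1/2)$ when $a<b$. The paper invokes the reflection implicitly via the substitution $u=1-y$ in the integral for $F_Y$ and then splits into the three cases $x<1/2$, $x=1/2$, $x>1/2$, comparing integrands directly in each case. Your packaging via $h(x)=F_X(x)+F_X(1-x)-1$, with $h(0)=0$, the symmetry $h(1-x)=h(x)$, and $h'(x)=f_X(x)-f_X(1-x)>0$ on $(0,1/2)$, is a tidier organization of the same idea: it replaces the three-way case split with a single monotonicity step and makes the role of the reflection explicit from the outset. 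The likelihood-ratio alternative you mention is shorter still but, as you note, imports an ordering implication the paper has not stated; keeping the self-contained argument as primary is the right call here.
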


\begin{proof}
We need to show that $F_X(x) \ge F_Y(x)$ for all $x \in (0,1)$.
That is, 
$$\int_{0}^{x} \frac{\Gamma(a+b)}{\Gamma(a)\Gamma(b)} x^{a-1}(1-x)^{b-1} \ud x \ge 
\int_{0}^{x} \frac{\Gamma(b+a)}{\Gamma(a)\Gamma(b)} y^{b-1}(1-y)^{a-1} \ud y.$$
which holds iff
$$\int_{0}^{x} x^{a-1}(1-x)^{b-1} \ud x \ge \int_{0}^{x} y^{b-1}(1-y)^{a-1} \ud y.$$
In the right hand side, let $u=1-y$, then the integral becomes $\int_{1-x}^{1} u^{a-1}(1-u)^{b-1} \ud u$
where the integrand is the kernel of $Beta(a,b)$ distribution.
So, we need to show
$$\int_{0}^{x} x^{a-1}(1-x)^{b-1} \ud x \ge \int_{1-x}^{1} u^{a-1}(1-u)^{b-1} \ud u.$$
If $x<1-x$ (i.e. $x<1/2$), this inequality holds as is for $a<b$,
because for $t<1/2$,
$t^{a-1}(1-t)^{b-1} \ge (1-t)^{a-1}t^{b-1}$ as $(t/(1-t))^{a-1} \ge (t/(1-t))^{b-1}$ for $a<b$ since $t/(1-t)<1$. 
If $x>1-x$ (i.e. $x>1/2$), this inequality simplifies to
$$\int_{0}^{1-x} x^{a-1}(1-x)^{b-1} \ud x \ge \int_{x}^{1} u^{a-1}(1-u)^{b-1} \ud u.$$
which holds for $a<b$. 

If $x=1/2$, then the inequality becomes $\displaystyle \int_{0}^{1/2} x^{a-1}(1-x)^{b-1} \ud x \ge \int_{1/2}^{1} u^{a-1}(1-u)^{b-1} \ud u$
which holds for $a<b$.
This proves the lemma.
\end{proof}

In the context of NAVA’s decision-making,  
this lemma implies that using a sensor with $\Beta(a,b)$ distribution (with $a < b$)  
results in systematically lower probability estimates for uncertain obstacles  
than a sensor with $\Beta(b,a)$,  
confirming the alignment of sensor marking higher probabilities for true obstacles.

\subsection{Stochastic Ordering of Path Weights Across Obstacle Types}
\label{subsec:stochastic-ordering-CMF}

Building on the stochastic ordering principles  
and Lemma~\ref{lem:stoch-order-sums-rvs},  
we now formalize a key theoretical insight supported by simulation evidence:  
the traversal path weights incurred by the NAVA exhibit a stochastic ordering  
depending on obstacle composition—  
whether the field is composed entirely of false obstacles,  
true obstacles, or a mix.

Let $p_F \sim \Beta(a,b)$ and $p_T \sim \Beta(b,a)$ with $a < b$  
be the sensor-assigned probabilities for false and true obstacles, respectively,  
and assume the spatial distribution of obstacles is identical across settings.

\begin{proposition}
\label{prop:stoch-order-C<mix<L}
Under the assumptions above,  
the following stochastic ordering holds for any fixed $(s,t)$-path $\pi_{ij}$:
\[
W^{\mF}_{ij} \leq_{st} W^{\mM}_{ij} \leq_{st} W^{\mT}_{ij}.
\]
\end{proposition}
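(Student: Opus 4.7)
My plan is to reduce the proposition to a termwise comparison of the per-obstacle contributions to the weight formula and then lift it to sums.  I would first condition on a common obstacle configuration $\X$, so that the spatial locations coincide across the three scenarios and only the true/false labels vary.  By \eqref{eqn:weight-of-edges}--\eqref{eqn:Lpi}, the weight of a fixed path $\pi_{ij}$ decomposes as
\[
W_{ij} \;=\; L_{\pi_{ij}} + \frac{c}{2}\sum_{x\in\X} N_{\pi_{ij}}(x)\,\frac{1}{1-p(x)},
\qquad N_{\pi_{ij}}(x):=\sum_{e\in\pi_{ij}}\mathbf{1}_{\{D_r(x)\cap e\neq \emptyset\}}.
\]
Once $\pi_{ij}$ and $\X$ are fixed, $L_{\pi_{ij}}$ and each $N_{\pi_{ij}}(x)\ge 0$ are deterministic constants, and the only randomness in $W_{ij}$ lives in the sensor marks $\{p(x)\}_{x\in\X}$, which are independent across obstacles.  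The three settings differ solely in whether each $p(x)$ is a $\Beta(a,b)$ draw ($p_F$, for a false obstacle) or a $\Beta(b,a)$ draw ($p_T$, for a true obstacle), with $a<b$.

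Next I would handle the two extremes.  Lemma~\ref{lem:stoch-order-beta(a,b)-rvs} gives $p_F\leq_{st} p_T$; since $g(p)=1/(1-p)$ is monotone increasing on $[0,1)$ and multiplication by the nonnegative constant $\tfrac{c}{2}N_{\pi_{ij}}(x)$ preserves stochastic order, for each $x$
\[
\frac{c}{2}\,N_{\pi_{ij}}(x)\,\frac{1}{1-p_F}\;\leq_{st}\;\frac{c}{2}\,N_{\pi_{ij}}(x)\,\frac{1}{1-p_T}.
\]
Because the marks are independent across obstacles, Lemma~\ref{lem:stoch-order-sums-rvs} lifts this termwise ordering to the sum, and adding the common constant $L_{\pi_{ij}}$ preserves it, yielding the extreme inequality $W^{\mF}_{ij}\leq_{st} W^{\mT}_{ij}$.

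Finally I would sandwich the mixed case by splitting $\X=\X_\mF\cup\X_\mT$.  In $W^{\mM}_{ij}$, obstacles in $\X_\mF$ contribute via $p_F$ and those in $\X_\mT$ via $p_T$, whereas $W^{\mF}_{ij}$ uses $p_F$ everywhere and $W^{\mT}_{ij}$ uses $p_T$ everywhere.  Applying the per-term inequality above on $\X_\mT$ only---the $\X_\mF$ summands are identically distributed on both sides---and invoking Lemma~\ref{lem:stoch-order-sums-rvs} gives $W^{\mF}_{ij}\leq_{st}W^{\mM}_{ij}$; the symmetric argument on $\X_\mF$ gives $W^{\mM}_{ij}\leq_{st}W^{\mT}_{ij}$.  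The main technical wrinkle is that Lemma~\ref{lem:stoch-order-sums-rvs} is stated assuming termwise $X_i\leq_{st} Y_i$, while here some summands agree in distribution on the two sides; since equality in distribution is a trivial instance of $\leq_{st}$, the lemma applies verbatim, and as a cleaner alternative one may couple all $p_F(x),p_T(x)$ on a single probability space via the inverse-cdf construction so that $p_F(x)\le p_T(x)$ almost surely, making every per-obstacle comparison pointwise.  After unconditioning on $\X$ (drawn from a common spatial process), the conditional ordering of cdfs carries through to the marginal ones, completing the argument.
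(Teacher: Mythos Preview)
Your proof is correct and follows essentially the same approach as the paper: invoke Lemma~\ref{lem:stoch-order-beta(a,b)-rvs} for $p_F\le_{st}p_T$, push through the increasing map $p\mapsto 1/(1-p)$, split the obstacle set by true/false type so that matching subsets are equal in distribution while the remaining subset is termwise ordered, and then apply Lemma~\ref{lem:stoch-order-sums-rvs}. Your obstacle-wise decomposition $W_{ij}=L_{\pi_{ij}}+\tfrac{c}{2}\sum_{x\in\X}N_{\pi_{ij}}(x)/(1-p(x))$ makes the independence structure needed for Lemma~\ref{lem:stoch-order-sums-rvs} more transparent than the paper's edge-indexed sums (which share obstacles across edges), and the explicit conditioning/unconditioning on $\X$ is a clean addition, but the substance of the argument is the same.
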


\begin{proof} 
By Lemma \ref{lem:stoch-order-beta(a,b)-rvs},
we obtain that $p_F \leq_{st}p_T$.
From Equation \eqref{eqn:weight-of-edges},
it follows that $F(e,p_F,\X) \leq_{st} F(e,p_T,\X)$,
which, by the properties of stochastic ordering,
implies $w(e,p_F,\X) \leq_{st} w(e,p_T,\X)$ for each edge $e$ in $\pi_{ij}$
since $1/(1-x)$ is increasing in $x \in (0,1)$.
Let $\X_{\mF,n_o}$ be the set of $n_o$ false obstacle disks in the false-only case,
$\X_{\mT,n_o}$ be the set of $n_o$ (true) obstacles in the true obstacle only case,
and
$\X_{\mF,n_F}$ and $\X_{\mT,n_T}$ be the false and true obstacle disks, respectively, in the mixed obstacle case.
Partition $\X_{\mF,n_o}$ into $\X'_{\mF,n_F}$ and $\X'_{\mF,n_T}$.
Then
$$W_{ij}^\mF=\sum_{e \in \pi_{ij}}w(e,p,\X_{\mF,n_o})=
\sum_{e \in \pi_{ij}}w(e,p,\X'_{\mF,n_F})+\sum_{e \in \pi_{ij}}w(e,p,\X'_{\mF,n_T})$$
and
$$W^{\mM}_{ij}=\sum_{e \in \pi_{ij}}w(e,p,\X_{\mF,n_F})+\sum_{e \in \pi_{ij}}w(e,p,\X_{\mT,n_T}).$$
Since $F(e,p_F) \stackrel{d}{=} F(e',p_F)$ for all $e,e' \in \pi_{ij}$,
it follows that
$$\sum_{e \in \pi_{ij}}w(e,p,\X_{\mF,n_F}) \stackrel{d}{=} \sum_{e \in \pi_{ij}}w(e,p,\X'_{\mF,n_F}).$$
Also,
$$\sum_{e \in \pi_{ij}}w(e,p,\X'_{\mF,n_T}) \leq_{st} \sum_{e \in \pi_{ij}}w(e,p,\X_{\mT,n_T}),$$
since it follows that for $X_i \stackrel{iid}{\sim}\Beta(a,b)$
and $Y_j \stackrel{iid}{\sim}\Beta(b,a)$ with $a<b$,
$1/(1-X_i) \leq_{st} 1/(1-Y_j)$
and by Lemma \ref{lem:stoch-order-sums-rvs}, we have $\sum_{i=1}^{n} 1/(1-X_i) \leq_{st} \sum_{j=1}^{n} 1/(1-Y_j)$.
Thus,
$$W^{\mF}_{ij} \leq_{st} W^{\mM}_{ij}.$$
By a similar argument,
since all of the summands in $W^{\mT}_{ij}$ include $F(e,p_T)$ and
$F(e,p_T) \stackrel{d}{=} F(e',p_T)$ for all $e,e' \in \pi_{ij}$,
we can also show that
$$W^{\mM}_{ij} \leq_{st} W^{\mT}_{ij}.$$
Hence, the desired result follows.
\end{proof} 

\noindent
This result confirms that increasing the proportion of true obstacles—  
while holding the total number and spatial layout constant—  
leads to stochastically greater traversal costs.  
The ordering applies to the full distributions, not just the expectations,  
reinforcing the robustness of this pattern across scenarios.

We next generalize to the case where the ratio of true to false obstacles varies.
Let $W^{\rho}_{ij} := W^{\mM,\rho}_{ij}$ be the path weight of path $\pi_{ij}$ under mixed obstacle case
where $\rho=n_T/n_F$ is the ratio of true obstacles to false obstacles for a given $n_o=n_T+n_F$. 
Then we also have the following result in Corollary \ref{cor:stoch-order-obs-ratio} 
which follows from Proposition \ref{prop:stoch-order-C<mix<L}:

\begin{corollary}
\label{cor:stoch-order-obs-ratio}
Let $\rho = n_T / n_F$ and $\rho' = n'_T / n'_F$  
be two true-to-false obstacle ratios such that $\rho < \rho'$,  
with total obstacle count $n_o = n_T + n_F = n'_T + n'_F$ fixed across both scenarios.  
Then the corresponding path weights satisfy:
\[
W^{\rho}_{ij} \leq_{st} W^{\rho'}_{ij},
\]
where $W^\rho_{ij}$ denotes the total traversal cost for $\pi_{ij}$ under obstacle ratio $\rho$.
\end{corollary}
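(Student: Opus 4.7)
The plan is to deduce the corollary from Proposition~\ref{prop:stoch-order-C<mix<L} by a one-step coupling argument; equivalently, one can view the result as an iterated application of the proposition, obtained by induction on the number of label changes between the two scenarios.

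First, I would translate the hypotheses into obstacle counts. Since $n_T + n_F = n'_T + n'_F = n_o$ and $\rho = n_T/n_F < n'_T/n'_F = \rho'$, it follows that $n_T < n'_T$ and $n_F > n'_F$, with common excess $k := n'_T - n_T = n_F - n'_F \ge 1$. Next, I would couple the two scenarios on a single spatial realization. Let $\X \subset \Omega_{\mathcal O}$ be a fixed configuration of $n_o$ obstacle disks, drawn once from the underlying spatial point process. In scenario $\rho$, label a subset $T \subset \X$ of size $n_T$ as true and the remaining $n_F$ disks as false; in scenario $\rho'$, label a superset $T' \supseteq T$ of size $n'_T$ as true and the remainder $n'_F$ as false. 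This coupling identifies $k$ \emph{upgraded} obstacles whose label changes from false (under $\rho$) to true (under $\rho'$), while all other obstacles retain their labels.

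Then I would invoke the edgewise argument from the proof of Proposition~\ref{prop:stoch-order-C<mix<L}. For the fixed $(s,t)$-path $\pi_{ij}$ and each edge $e \in \pi_{ij}$, the weight $w(e)$ decomposes via \eqref{eqn:weight-of-edges} into a sum of terms of the form $c/(1-p(x))$ over obstacles $x \in \X$ intersecting $e$. By Lemma~\ref{lem:stoch-order-beta(a,b)-rvs} (giving $p_F \leq_{st} p_T$) and the monotonicity of $x \mapsto 1/(1-x)$ on $(0,1)$, each summand corresponding to an upgraded obstacle is stochastically larger under $\rho'$ than under $\rho$, while summands for obstacles whose labels are unchanged are equal in distribution. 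Sensor marks are drawn independently across obstacles, so Lemma~\ref{lem:stoch-order-sums-rvs} applies first within each edge, and then again across the finite collection of edges comprising $\pi_{ij}$, yielding $W^{\rho}_{ij} \leq_{st} W^{\rho'}_{ij}$.

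The main obstacle is largely bookkeeping: keeping the spatial configuration $\X$ common to both scenarios so that the only distributional difference is the labeling, and verifying that the independence hypothesis of Lemma~\ref{lem:stoch-order-sums-rvs} is preserved after the coupling (it is, because the $p(x)$ are drawn independently conditional on labels). A cleaner alternative is an inductive version: flip one false obstacle to true at a time, so that the one-step ordering is exactly the case $n'_T = n_T + 1$, $n'_F = n_F - 1$ of Proposition~\ref{prop:stoch-order-C<mix<L}; iterating $k$ times and using transitivity of $\leq_{st}$ completes the argument.
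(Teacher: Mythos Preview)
Your proposal is correct and follows essentially the same approach as the paper: both arguments partition the $n_o$ obstacles into three groups (those that are false in both scenarios, those that are true in both, and the $k=n_r$ ``upgraded'' ones), then match the first two groups via equality in distribution and order the third via Lemma~\ref{lem:stoch-order-beta(a,b)-rvs} plus monotonicity of $x\mapsto 1/(1-x)$, before summing with Lemma~\ref{lem:stoch-order-sums-rvs}. One small wording caution: the edge weights $w(e)$ along $\pi_{ij}$ are \emph{not} independent (a single obstacle can intersect several edges), so the cleanest application of Lemma~\ref{lem:stoch-order-sums-rvs} is obstacle-by-obstacle rather than edge-by-edge---but this is exactly what your decomposition into unchanged vs.\ upgraded obstacles already does, and the paper's proof is organized the same way.
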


\begin{proof}
Let $\X^{\rho}_{\mF,n_F}$ and $\X^{\rho}_{\mT,n_T}$ be the false and true obstacle disks, respectively, 
in the mixed obstacle case corresponding to $\rho=n_T/n_F$
and  $\X^{\rho'}_{\mF,n_F}$ and $\X^{\rho'}_{\mT,n_T}$ be the false and true obstacle disks, respectively, 
in the mixed obstacle case corresponding to $\rho'=n'_T/n'_F$.
Since $\rho < \rho'$, 
we have $n'_F \le n_F$ and $n'_T \ge n_T$.
In either case, without loss of generality,
we take $n_T = \lfloor \rho n_F \rfloor$ and $n'_T = \lfloor \rho' n'_F \rfloor$, 
since the results will also hold if we take ceilings instead.
We also let $n_r = n_o-(n'_F+n_T)$ (note that $n_r \ge 0$).

Partition $\X^{\rho}_{\mF,n_F}$ into $\widetilde \X^{\rho}_{\mF,n'_F}$ and $\widetilde \X^{\rho}_{\mF,n_r}$
and
partition $\X^{\rho'}_{\mT,n'_T}$ into $\widetilde \X^{\rho'}_{\mT,n_T}$ and $\widetilde \X^{\rho'}_{\mT,n_r}$.
Note that such partitions are possible, since $n_F=n'_F+n_r$ and $n'_T=n_T+n_r$.

Then
\begin{align*}
W^{\rho}_{ij} & = \sum_{e \in \pi_{ij}}w(e,p,\X^{\rho}_{\mF,n_F})+\sum_{e \in \pi_{ij}}w(e,p,\X^{\rho}_{\mT,n_T}) \\
   & =\sum_{e \in \pi_{ij}}w(e,p,\widetilde \X^{\rho}_{\mF,n'_F})+\sum_{e \in \pi_{ij}}w(e,p,\widetilde \X^{\rho}_{\mF,n_r})
   +\sum_{e \in \pi_{ij}}w(e,p,\X^{\rho}_{\mT,n_T})
\end{align*}
and
\begin{align*}
W^{\rho'}_{ij} & = \sum_{e \in \pi_{ij}}w(e,p,\X^{\rho'}_{\mF,n'_F})+\sum_{e \in \pi_{ij}}w(e,p,\X^{\rho'}_{\mT,n'_T}) \\
   & =\sum_{e \in \pi_{ij}}w(e,p,\X^{\rho'}_{\mF,n'_F})+\sum_{e \in \pi_{ij}}w(e,p,\widetilde \X^{\rho'}_{\mT,n_T})
   +\sum_{e \in \pi_{ij}}w(e,p,\widetilde \X^{\rho'}_{\mT,n_r}).
\end{align*}

As in the proof of Lemma \ref{lem:stoch-order-sums-rvs},
it follows that
$$\sum_{e \in \pi_{ij}}w(e,p,\widetilde \X^{\rho}_{\mF,n'_F}) \stackrel{d}{=} \sum_{e \in \pi_{ij}}w(e,p,\X^{\rho'}_{\mF,n'_F}),$$
$$\sum_{e \in \pi_{ij}}w(e,p,\X^{\rho}_{\mT,n_T}) \stackrel{d}{=} \sum_{e \in \pi_{ij}}w(e,p,\widetilde \X^{\rho'}_{\mT,n_T}),$$
and
$$\sum_{e \in \pi_{ij}}w(e,p,\widetilde \X^{\rho}_{\mF,n_r}) \leq_{st} \sum_{e \in \pi_{ij}}w(e,p,\widetilde \X^{\rho'}_{\mT,n_r}).$$
Thus,
$$W^{\rho}_{ij} \leq_{st} W^{\rho'}_{ij} ~~ \text{ for } ~ \rho < \rho'$$
which is the desired result.
\end{proof}

\noindent
This corollary formalizes the observed monotonicity in traversal cost:  
as the environment becomes more hazardous (higher $\rho$),  
the distribution of possible traversal outcomes shifts toward higher cost.  
This insight supports risk-aware route planning  
under varying levels of environmental threat.

\subsection{Sensor Quality and Its Impact on Traversal Cost}
\label{subsec:sensor-accuracy-stoch-order}

In real-world navigation systems,  
the fidelity of sensor-derived probability marks is a key factor in assessing traversal risk.  
We now analyze how the \emph{quality of probabilistic sensor readings},  
modeled via Beta distributions, affects the traversal cost for the NAVA.

Let $W^{\mF,a,b}_{ij}$ denote the total traversal cost of path $\pi_{ij}$  
when all obstacles are false and sensor readings follow $p_F \sim \Beta(a,b)$.  
Similarly, let $W^{\mT,a,b}_{ij}$ denote the cost when all obstacles are true  
and $p_T \sim \Beta(b,a)$—  
a setting where higher $b/a$ ratios correspond to sharper sensor discrimination.

Now consider two sensor regimes:
\begin{itemize}
  \item A \textbf{high-fidelity sensor}, using $\Beta(a,b)$ for false obstacles  
  and $\Beta(b,a)$ for true obstacles, with $a < b$.

  \item A \textbf{lower-fidelity sensor}, using $\Beta(a',b')$ and $\Beta(b',a')$  
  with $a < a'$ and $b > b'$ (i.e., flatter distributions, less concentrated near 0 or 1).
\end{itemize}

The intuition is that better sensors produce probability marks  
that more accurately reflect obstacle type—  
e.g., clustering near 0 for false and near 1 for true obstacles—  
thus enabling more informed disambiguation decisions and reducing traversal cost.

\begin{proposition}
\label{prop:stoch-order-beta-a,b}
Assume the spatial configuration of obstacle locations is fixed and identical across scenarios,  
and the total number of obstacles $n_o$ remains constant. Then:
\[
W^{\mF,a,b}_{ij} \leq_{st} W^{\mF,a',b'}_{ij}
\quad \text{and} \quad
W^{\mT,a,b}_{ij} \leq_{st} W^{\mT,a',b'}_{ij},
\]
where $\leq_{st}$ denotes stochastic dominance (i.e. ordering).
\end{proposition}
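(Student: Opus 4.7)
The plan is to follow the same overall scheme used in the proof of Proposition~\ref{prop:stoch-order-C<mix<L}: lift a stochastic ordering from the sensor mark distributions to the individual edge weights via the monotone map $g(x) = 1/(1-x)$, and then aggregate across obstacles intersecting a given edge and across the edges of $\pi_{ij}$ by two successive applications of Lemma~\ref{lem:stoch-order-sums-rvs}. The Euclidean lengths $\ell(e)$ are deterministic and identical in both regimes, so they cancel from the comparison and everything reduces to comparing the random portions of the edge weights.

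The main new ingredient needed is a generalization of Lemma~\ref{lem:stoch-order-beta(a,b)-rvs} beyond the symmetric swap $\Beta(a,b)$ versus $\Beta(b,a)$: namely, if $\alpha_1 \le \alpha_2$ and $\beta_1 \ge \beta_2$ (with at least one strict inequality), then $\Beta(\alpha_1,\beta_1) \le_{st} \Beta(\alpha_2,\beta_2)$. The cleanest route is via the monotone likelihood ratio (MLR) criterion: the density ratio
\[
\frac{f_{\Beta(\alpha_2,\beta_2)}(x)}{f_{\Beta(\alpha_1,\beta_1)}(x)} \;\propto\; x^{\alpha_2-\alpha_1}\,(1-x)^{\beta_1-\beta_2}
\]
is non-decreasing in $x \in (0,1)$ under the stated parameter conditions, which yields MLR ordering, and hence the usual stochastic ordering.

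With this auxiliary lemma in hand, the false-obstacle case follows by applying it with $(\alpha_1,\beta_1) = (a,b)$ and $(\alpha_2,\beta_2) = (a',b')$ (using $a \le a'$ and $b \ge b'$) to obtain $p_F \le_{st} p_F'$. Since $g$ is increasing on $[0,1)$, the ordering is preserved for each obstacle contribution $c/(1-p_F(x))$. Applying Lemma~\ref{lem:stoch-order-sums-rvs} first across the independent contributions of the obstacle disks intersecting a given edge, and then across the edges of $\pi_{ij}$, delivers the desired ordering $W^{\mF,a,b}_{ij} \le_{st} W^{\mF,a',b'}_{ij}$. The true-obstacle case is handled by the same argument with the Beta parameter pairs $(b,a)$ and $(b',a')$ in place of $(a,b)$ and $(a',b')$, propagated through $g$ and Lemma~\ref{lem:stoch-order-sums-rvs} in identical fashion.

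The main obstacle is the first step: the kernel-comparison argument used in Lemma~\ref{lem:stoch-order-beta(a,b)-rvs} relied on the reflective symmetry of the swap $\Beta(a,b) \leftrightarrow \Beta(b,a)$ about $1/2$ and does not transparently extend to the asymmetric parameter shift here. The MLR route sidesteps this difficulty by reducing the claim to a single monotonicity check in the exponents, after which the remainder of the proof is mechanical and parallels the structure of the proof of Proposition~\ref{prop:stoch-order-C<mix<L} step for step.
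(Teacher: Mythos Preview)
Your overall strategy matches the paper's own: establish a stochastic ordering between the relevant Beta marks and then lift it to edge weights and path weights exactly as in Proposition~\ref{prop:stoch-order-C<mix<L}. The paper's proof is just a two-line sketch (``using the cdf's of the given Beta distributions'' followed by ``similar to the proof of Proposition~\ref{prop:stoch-order-C<mix<L}''), and your MLR argument is a clean, rigorous way to carry out that first step in the asymmetric parameter regime where the reflective trick behind Lemma~\ref{lem:stoch-order-beta(a,b)-rvs} does not directly apply.

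There is, however, a genuine gap in your treatment of the true-obstacle case. You write that it is ``handled by the same argument with the Beta parameter pairs $(b,a)$ and $(b',a')$ in place of $(a,b)$ and $(a',b')$,'' but your own auxiliary lemma requires $\alpha_1 \le \alpha_2$ and $\beta_1 \ge \beta_2$, i.e.\ $b \le b'$ and $a \ge a'$, whereas the hypotheses give $b > b'$ and $a < a'$. The MLR comparison therefore runs in the \emph{opposite} direction, yielding $p_T' \le_{st} p_T$ (which is precisely what the paper's proof also records). Pushing this through $g(x)=1/(1-x)$ and Lemma~\ref{lem:stoch-order-sums-rvs} then gives $W^{\mT,a',b'}_{ij} \le_{st} W^{\mT,a,b}_{ij}$, the reverse of the inequality in the proposition. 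So your sketch does not, as written, establish the second displayed ordering; you should at minimum note the direction reversal explicitly rather than asserting that the argument goes through ``in identical fashion.''
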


\begin{proof}
Using the cdf's of given Beta distributions,
we see that $p_F \leq_{st} p'_F$
and $p'_T \leq_{st}p_T$.
Then the result follows similar to the Proof of Proposition \ref{prop:stoch-order-C<mix<L}
(hence details are not presented).
\end{proof}

This result formalizes the idea that better sensor quality—  
reflected in more peaked Beta distributions—  
yields stochastically lower traversal costs.  
It provides a theoretical basis for evaluating sensor designs  
within Beta-distributed uncertainty models,  
commonly used in probabilistic risk-aware navigation frameworks.

\medskip

\textbf{Interpretation of Stochastic Ordering via Mean and Median:}  
The stochastic ordering results established in Propositions~\ref{prop:stoch-order-C<mix<L}  
and \ref{prop:stoch-order-beta-a,b}  
have direct implications for commonly used summary statistics—  
namely, the mean and the median.

\begin{itemize}
  \item \textbf{Ordering in Expected Traversal Cost:} 
  Let $X$ and $Y$ be two non-negative continuous random variables with $X \leq_{st} Y$  
  (i.e., $F_X(x) \geq F_Y(x)$ for all $x$). Then:
  \[
  \E[X] = \int_0^\infty \big(1 - F_X(x)\big) \, \ud x  
  \leq \int_0^\infty \big(1 - F_Y(x)\big) \, \ud x = \E[Y].
  \]
  Applying this to our setting yields:
  \[
  \E[W^{\mF}_{ij}] \leq \E[W^{\mM}_{ij}] \leq \E[W^{\mT}_{ij}],
  \]
  which affirms that expected traversal cost increases  
  with the presence of more true obstacles or poorer sensor discrimination.

  \item \textbf{Ordering in Median (and Other Quantiles):} 
  If $X \leq_{st} Y$, then $\text{Median}(X) \leq \text{Median}(Y)$.  
  In our case:
  \[
  \text{Median}(W^{\mF}_{ij}) \leq \text{Median}(W^{\mM}_{ij}) \leq \text{Median}(W^{\mT}_{ij}).
  \]
  This is particularly relevant when robust path planning  
  or percentile-based risk analysis is preferred over mean-based metrics.
\end{itemize}

\noindent
In summary, stochastic dominance among traversal costs  
ensures corresponding orderings in both expectation and central tendency.  
These interpretations enhance the practical value of our theoretical results  
for risk-aware navigation under uncertain obstacle environments.

\subsection{Impact of Spatial Obstacle Patterns on Path Cost Distribution}
\label{sec:spatial-pattern-range-cost}

We now shift focus from expectations to variability in path weights  
under different obstacle placement patterns.  
Let $W^{\text{Str}}_{ij}$, $W^{\text{Unif}}_{ij}$, and $W^{\text{Mat}}_{ij}$  
denote the total weights of a fixed path $\pi_{ij}$  
when obstacle centers follow a spatially regular (Strauss), uniform,  
or clustered (Matérn) distribution, respectively.  
These patterns are generated via Strauss$(n, d, \gamma)$  
and Matérn$(\kappa, r_0, \mu)$ processes,  
commonly used in spatial point pattern modeling.

\begin{itemize}
\item \textbf{Regular Pattern:} \\
Strauss processes with low $\gamma$ and interaction radius $d \le 2r$  
produce near-grid-like obstacle layouts,  
leading to consistent obstruction across realizations.  
Consequently, path weights exhibit low variance.

\item \textbf{Uniform Pattern:} \\
At $\gamma = 1$, the Strauss process approximates complete spatial randomness (CSR).  
Variability increases as paths encounter differing obstacle densities across trials.

\item \textbf{Clustered Pattern:} \\
Matérn processes with small $r_0$ generate tight clusters.  
Traversal cost becomes highly variable:  
some paths intersect dense clusters (yielding high cost),  
while others pass through cluster-free regions (yielding low cost).
\end{itemize}

\textbf{Empirical Ordering of Variability:}
Simulation results suggest:
\[
\Range(W^{\text{Mat}}_{ij}) \geq \Range(W^{\text{Unif}}_{ij}) \geq \Range(W^{\text{Str}}_{ij}) 
\quad \text{for } d \le 2r,
\]
with similar trends observed for variances.  
These findings complement the stochastic ordering results in Section~\ref{sec:stoch-order}  
by highlighting variability, not just central tendency.

\textbf{Remarks on Probabilistic Ordering:}
While mean and median traversal costs obey stochastic dominance  
(Proposition~\ref{prop:stoch-order-C<mix<L}),  
full stochastic ordering may not hold between spatial configurations.  
Instead, we observe probabilistic dominance with positive probability:
\begin{enumerate}[label=(\roman*)]
  \item $P(W^{\text{Mat}}_{ij} \leq W^{\text{Unif}}_{ij} \leq W^{\text{Str}}_{ij}) > 0$  
  when $d \le 2r$.
  
  \item For Strauss processes with $\gamma < \gamma'$:
    \begin{itemize}
      \item[(a)] $P(W^{\gamma'}_{ij} \leq W^{\gamma}_{ij}) > 0$  
      and vice versa for $d \le 2r$,
      
      \item[(b)] $P(W^{\gamma}_{ij} \leq W^{\gamma'}_{ij}) > 0$  
      and vice versa for $d > 2r$.
    \end{itemize}
  
  \item For Matérn processes with $r_0 < r'_0$:
  \[
  P(W^{r_0}_{ij} \leq W^{r'_0}_{ij}) > 0 
  \quad \text{and} \quad 
  P(W^{r_0}_{ij} \geq W^{r'_0}_{ij}) > 0.
  \]
\end{enumerate}

\textbf{Conjectured Probabilistic Orderings:}
For fixed path $\pi_{ij}$:
\begin{itemize}
  \item[] (ii)-(a) $P(W^{\gamma'}_{ij} \leq W^{\gamma}_{ij}) > 0.5$ if $d \le 2r$,
  \item[] (ii)-(b) $P(W^{\gamma}_{ij} \leq W^{\gamma'}_{ij}) > 0.5$ if $d > 2r$,
  \item[] (iii) $P(W^{r_0}_{ij} \leq W^{r'_0}_{ij}) > 0.5$ for $r_0 < r'_0$.
\end{itemize}

These probabilistic comparisons highlight the interplay  
between spatial structure and traversal risk  
that goes beyond expectation-based analysis.

\textbf{Path Cost under NAVA’s Sensor Model:}
Let $C$ denote the actual traversal cost of the path  
selected by the NAVA using the RD algorithm.  
Depending on the obstacle composition (false, true, or mixed),  
the traversal cost behaves differently:

\begin{itemize}
\item \textit{Clutter-Only Case:} \\
When all obstacles are false,  
the cost can be approximated as $C^\mF = L^*_{ij} + w^*_{ij} \cdot c$,  
where $L^*_{ij}$ is the length of the chosen path  
and $w^*_{ij}$ counts the number of false obstacle disks intersecting it.  
Since traversal probabilities are all less than 1,  
disambiguation is sometimes avoided,  
and we typically have $C^\mF \le W^*$ with high probability.

\item \textit{Mixed Obstacle Case:} \\
Some paths may be blocked due to true obstacles.  
The actual cost becomes $C^\mM = \sum w(e) + w^{RD}_{ij} \cdot c$,  
where the second term reflects the number of disambiguation events  
caused by true obstacles encountered and resolved by RD during traversal.

\item \textit{True-Only Case:} \\
All obstacles are true.  
The NAVA avoids impassable regions,  
and the cost is given by $C^\mT = \sum \ell(e) + w^{RD}_{ij} \cdot c$  
over the selected traversable path.
\end{itemize}

Let $\mathscr{P}$ denote the set of all $s$--$t$ paths in $G$,  
and define $\mathscr{P}_\mF$, $\mathscr{P}_\mM$, and $\mathscr{P}_\mT$  
as the subsets of traversable paths under false-only, mixed,  
and true-only obstacle settings, respectively.  
Since true obstacles reduce path feasibility:
\[
\mathscr{P}_\mT \subseteq \mathscr{P}_\mM \subseteq \mathscr{P}_\mF,
\]
we obtain the following ordering for the minimum attainable path weights:
\[
\min_{\pi_{ij} \in \mathscr{P}_\mF} W_{ij} 
\le \min_{\pi_{ij} \in \mathscr{P}_\mM} W_{ij} 
\le \min_{\pi_{ij} \in \mathscr{P}_\mT} W_{ij}.
\]

These relationships suggest the conjectured stochastic ordering in realized traversal cost:
\[
C^\mF \le_{st} C^\mM \le_{st} C^\mT,
\]
although this remains analytically unproven due to the heuristic nature of RD.

\textbf{Final Observations:}
For a given path $\pi_{ij}$:
\begin{itemize}
  \item Traversal costs tend to be lower under clustered obstacle patterns,  
  as such configurations create wider obstacle-free corridors.
  
  \item Between regularity and uniformity,  
  expected traversal cost is higher under regularity when obstacle spacing is moderate ($d \le 2r$),  
  and higher under uniformity when spacing becomes too sparse ($d > 2r$).
  
  \item When obstacles are arranged regularly with $d \approx 1.5r$,  
  mean traversal cost peaks due to frequent and systematic edge--disk intersections.
\end{itemize}

These conclusions reinforce the earlier theoretical  
and simulation-based findings on how spatial structure  
and obstacle composition shape navigation outcomes in adversarial environments. 

\subsection{Summary of Stochastic Ordering Results}
\label{sec:stoch-order-summary}
Beyond average outcomes, it is often important to compare entire
distributions of traversal cost across different obstacle settings.
Stochastic ordering provides a rigorous way to formalize such comparisons.

The main findings can be summarized as follows (see the Appendix
for the technical details and proofs of these results):
\begin{itemize}
\item \textbf{Effect of Obstacle Composition:}  
Traversal costs are stochastically smallest when only false obstacles are present,
largest when only true obstacles are present, and intermediate in mixed settings.
This ordering holds not just in expectation, but also for medians and other quantiles.
This means that routes are systematically easier to find in environments filled only with false obstacles, 
while environments dominated by true obstacles are the most restrictive. 
Mixed settings fall in between, reflecting the balance between passable clutter and impassable barriers.

\item \textbf{Effect of Obstacle Ratio:}  
As the proportion of true obstacles increases while holding the total number of obstacles fixed,
the distribution of traversal costs systematically shifts upward.
As the share of true obstacles increases, the likelihood of longer or more costly routes rises. 
In practice, this highlights how even small increases in hazardous objects can shift the overall navigation risk profile.

\item \textbf{Effect of Sensor Quality:}  
Higher-fidelity sensors, which more reliably distinguish true from false obstacles,
lead to stochastically smaller traversal costs than lower-fidelity sensors.
Better sensors that more reliably separate false from true obstacles reduce the burden of unnecessary detours. 
In applied navigation systems, this underscores the value of investing in high-fidelity sensing technologies.

\item \textbf{Effect of Spatial Structure:}  
Obstacle arrangement strongly shapes both the magnitude and variability of costs.  
Clustered patterns often allow wider unobstructed corridors and hence lower average cost but higher variability;  
regular patterns create consistent blocking and yield higher mean costs with lower variability;  
uniform patterns fall in between.
The way obstacles are arranged influences both the average cost and the variability of travel. 
Clustering can sometimes leave wide corridors open, regular spacing tends to block paths more uniformly, 
and random arrangements fall in between these extremes.

\item \textbf{Implications for NAVA:}  
Although exact dominance results for realized traversal costs remain conjectural under the RD algorithm,
both theoretical and empirical evidence consistently support the ordering:
false-only configurations yield the lowest traversal costs, followed by mixed, and true-only with the highest.
This clearly indicates that navigation is easiest in false-only fields, 
harder in mixed settings, and hardest in true-obstacle environments. 
This provides a clear risk ranking of environments for practical planning.
\end{itemize}

Overall, these results provide a rigorous foundation for understanding how obstacle type,
composition, sensor fidelity, and spatial arrangement jointly determine the distribution of
navigation outcomes. 
 
\section{Discussion and Conclusions}
\label{sec:disc-conc}
This study presents a unified geospatial framework for analyzing  
how spatial obstacle patterns influence the traversal efficiency  
of a NAVA operating in uncertain environments.  
By simulating adversarially placed obstacles using spatial point processes  
and evaluating pathfinding outcomes under varying compositions and configurations,  
we systematically quantify the relationship between obstacle geometry  
and expected traversal cost.

We conduct extensive Monte Carlo experiments under three obstacle compositions: 
(i) false-only (clutter), (ii) true-only (impassable), and (iii) mixed obstacles. 
These are evaluated across spatial layouts ranging from uniform randomness 
to regularity (Strauss processes) and clustering (Matérn processes).
Our results show that obstacle pattern structure significantly affects navigability.  
Specifically, spatial regularity tends to increase traversal cost  
by creating evenly dispersed obstacles,  
while clustering facilitates passage by forming obstacle-free corridors.  
These effects are modulated by key spatial parameters  
such as the Strauss inhibition parameter ($\gamma$)  
and interaction radius ($d$),  
as well as the Matérn cluster radius ($r_0$) and parent intensity ($\kappa$).

To rigorously assess the impact of these factors,  
we employ robust linear regression and random forest models.  
In Strauss-based settings,  
the number of obstacles ($n$), interaction radius ($d$),  
and repulsion parameter ($\gamma$) emerge as dominant predictors.  
For clustered configurations,  
the most influential parameters are the cluster radius and number of clusters,  
while the obstacle count plays a secondary role.  
We further analyze the number of disambiguations ($N_{dis}$)—  
sensor-driven clarifications of obstacle status—  
and find it strongly correlated with traversal cost across all scenarios.  
This reinforces its utility as a proxy measure in operational settings  
where full cost evaluation may be infeasible.

We also establish stochastic ordering results  
that provide theoretical backing for the empirical trends observed.  
We show that traversal costs satisfy the dominance relationship:  
false-only $\le_{st}$ mixed obstacles $\le_{st}$ true-only configurations,  
and that patterns with clustering tend to yield stochastically smaller costs  
than those with uniform or regular layouts.  
The ordering between uniform and regular configurations depends on obstacle spacing:  
moderate interaction distances ($d \approx 1.5r$) in Strauss processes  
yield peak traversal burden,  
while overly sparse regular patterns become less effective.

From a spatial decision-making perspective,  
our findings offer actionable insights.  
If an OPA  
wishes to maximize NAVA's traversal cost under stochastic placement constraints,  
the optimal strategy involves placing true obstacles in a regular pattern  
using a Strauss process with low $\gamma$ and moderate $d$.  
Additionally, increasing the true-to-false obstacle ratio ($\rho$)  
enhances traversal difficulty without revealing the adversarial intent,  
particularly when the placement retains stochastic variability.  
Since NAVA operates without prior knowledge of the underlying spatial pattern,  
it may still exploit spatial diagnostics—  
such as fitting Strauss or Matérn models  
using \texttt{ppm} or \texttt{clusterfit}  
in \texttt{R}'s \texttt{spatstat.model} package \citep{baddeley2010}—  
to inform path adaptation heuristics.

Beyond methodological contributions, our study carries practical implications.
The illustrative case study in Section~5 demonstrates how the framework can be
applied to geospatial decision-making. For instance, in urban evacuation under
flood-induced blockages, clustered obstacles may paradoxically reduce traversal
costs by creating navigable corridors, while dispersed regular placements maximize
obstruction. Similar insights extend to maritime defense (minefields) and off-road
mobility planning (landslides or debris fields). These examples illustrate that
our framework not only advances theoretical understanding of OOP but also provides
a systematic way to evaluate obstacle impacts in real-world navigation settings.

Future research could explore  
deterministic or semi-deterministic obstacle placement strategies  
to identify worst-case spatial configurations.  
Incorporating learning agents that adapt over time to obstacle distributions,  
or extending the framework to continuous or multi-resolution graph models,  
would further enhance its applicability.  
Finally, extending this framework to 3D environments,  
such as underwater minefield navigation or aerial drone routing,  
would broaden its impact in geospatial intelligence,  
autonomous mobility, and risk-aware environmental planning.

\section*{Acknowledgements}
We would like to thank Dr Le Chen for his help  
in proving the result in Lemma \ref{lem:stoch-order-sums-rvs}.  
Most of the Monte Carlo simulations presented in this article  
were executed at Easley HPC Laboratory of Auburn University.  
LZ and EC were supported by Office of Naval Research Grant N00014-22-1-2572  
and EC was supported by NSF Award \# 2319157.
%
%

%

\newpage

\section*{APPENDIX}
\appendix
\setcounter{section}{0}
\setcounter{equation}{0}
\setcounter{figure}{0}
\setcounter{table}{0}

\renewcommand\thesection{A\arabic{section}}
\renewcommand\thesubsection{\thesection.\arabic{subsection}}
\renewcommand\thesubsubsection{\thesubsection.\arabic{subsubsection}}

\renewcommand\theequation{A\arabic{equation}}
\renewcommand\thefigure{A\arabic{figure}}
\renewcommand\thetable{A\arabic{table}}

This appendix provides additional simulation results,  
extended figures, and full regression outputs supporting the main findings  
of the manuscript titled  
``Adversarial Obstacle Placement with Spatial Point Processes for Optimal Path Disruption.''  
Specifically, it includes detailed traversal cost patterns under varying parameters  
for Strauss and Matérn point processes,  
model diagnostics and summaries for all robust regression and random forest models,  
and supplemental contour plots stratified by obstacle number.  
These materials offer further insight into the dependence of traversal cost  
on spatial configuration parameters  
and confirm the robustness of the trends discussed in the main paper.

\section{Monte Carlo Experiments, Analysis, and Results} 
\label{sec:MCexp-results-and-analysis}

To study how the traversal cost of a NAVA using the RDP navigation protocol  
depends on obstacle pattern parameters,  
we use the uniform pattern as a benchmark,  
as is common in spatial point pattern analysis \citep{baddeley2010}. 

For each deviation from uniformity,  
we consider three obstacle composition cases:  
(i) false obstacles only,  
(ii) true obstacles only, and  
(iii) a mix of false and true obstacles,  
referred to as the \emph{mixed obstacles case} for brevity.

\subsection{OOP with Uniform to Regular Obstacle Patterns}
\label{sec:OOP-unif2reg}

\subsubsection{False Obstacles from Uniform to Regular Patterns}
\label{sec:OOP-clut-unif2reg}

We consider the \emph{false obstacles only} case,  
where the obstacle pattern transitions from uniformity to spatial regularity  
using the Strauss$(n,d,\gamma)$ process.  
Given 30 values of $\gamma$, 22 values of $d$,  
and 100 Monte Carlo (MC) replications for each false obstacle level $n_F$,  
we obtain $11 \times 30 \times 100 = 33{,}000$ measurements per $n_F$.

For each replication, the traversal cost $C$ is computed using the RD algorithm \citep{aksakalli2011},  
and the number of disambiguations is recorded.  
A sample realization is shown in Figure~\ref{fig:sample-Strauss-false-mix}(a).

\begin{figure}[htb]
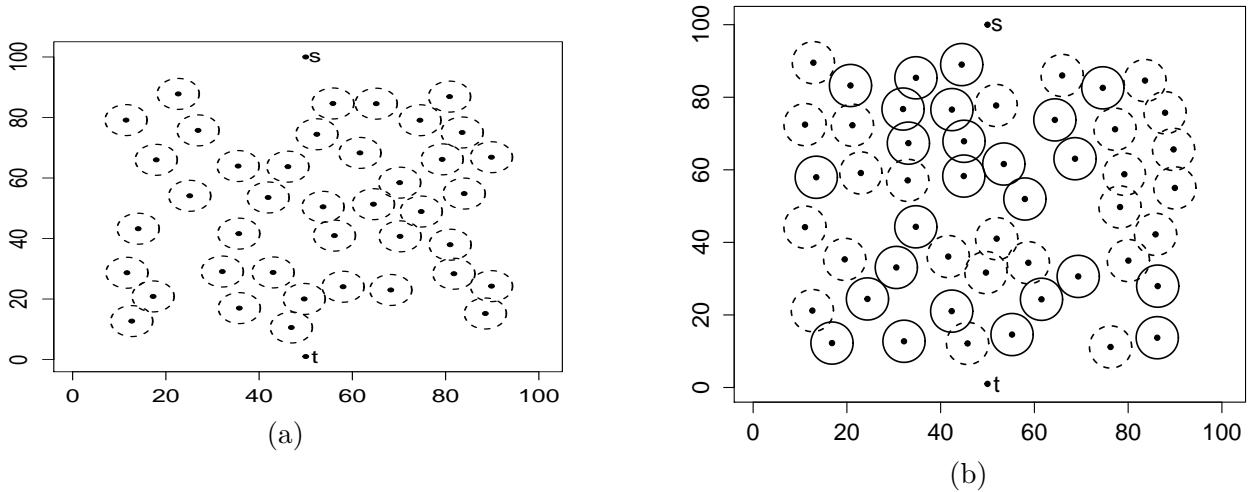

\centering
\begin{minipage}{0.45\textwidth}
    \centering
    \includegraphics[trim=10mm 15mm 10mm 15mm, clip,height=5.1cm,width=\linewidth]{clutter.pdf}\\
    (a)
\end{minipage}\hfill
\begin{minipage}{0.45\textwidth}
    \centering
    \includegraphics[trim=10mm 15mm 10mm 15mm, clip,width=\linewidth]{mix.pdf}\\
    (b)
\end{minipage}
\caption{
(a) Realization of a regular pattern in the false-only case,  
Strauss$(n_F=40, d=9, \gamma=0)$.  
(b) Realization of the mixed obstacle case, Strauss$(n=50, d=9, \gamma=0)$,  
with 25 false (dashed circles) and 25 true (solid circles) obstacles.
}
\label{fig:sample-Strauss-false-mix}
\end{figure}

We examine trends in the mean traversal cost $\bar{C}$  
as a function of obstacle regularity,  
governed by $d$ and $\gamma$ in the Strauss process, and false obstacle number $n_F$.  
Figure~\ref{fig:Lbar-vs-Strauss}(a) plots $\bar{C}$ versus $\gamma$ for each $d$,  
averaged over obstacle number levels.  
Figure~\ref{fig:Lbar-vs-Strauss}(b) shows the correlation $\Corr(\bar{C}, \gamma)$  
versus $d$ at different values of $n_F$.

\begin{figure} [!ht]
\centering
\begin{tabular}{cc}
\begin{subfigure}[b]{0.45\textwidth}
 \centering
\includegraphics[width=\textwidth]{Strauss_Lbarvsgamma_d_clut.png}
\includegraphics[width=\textwidth]{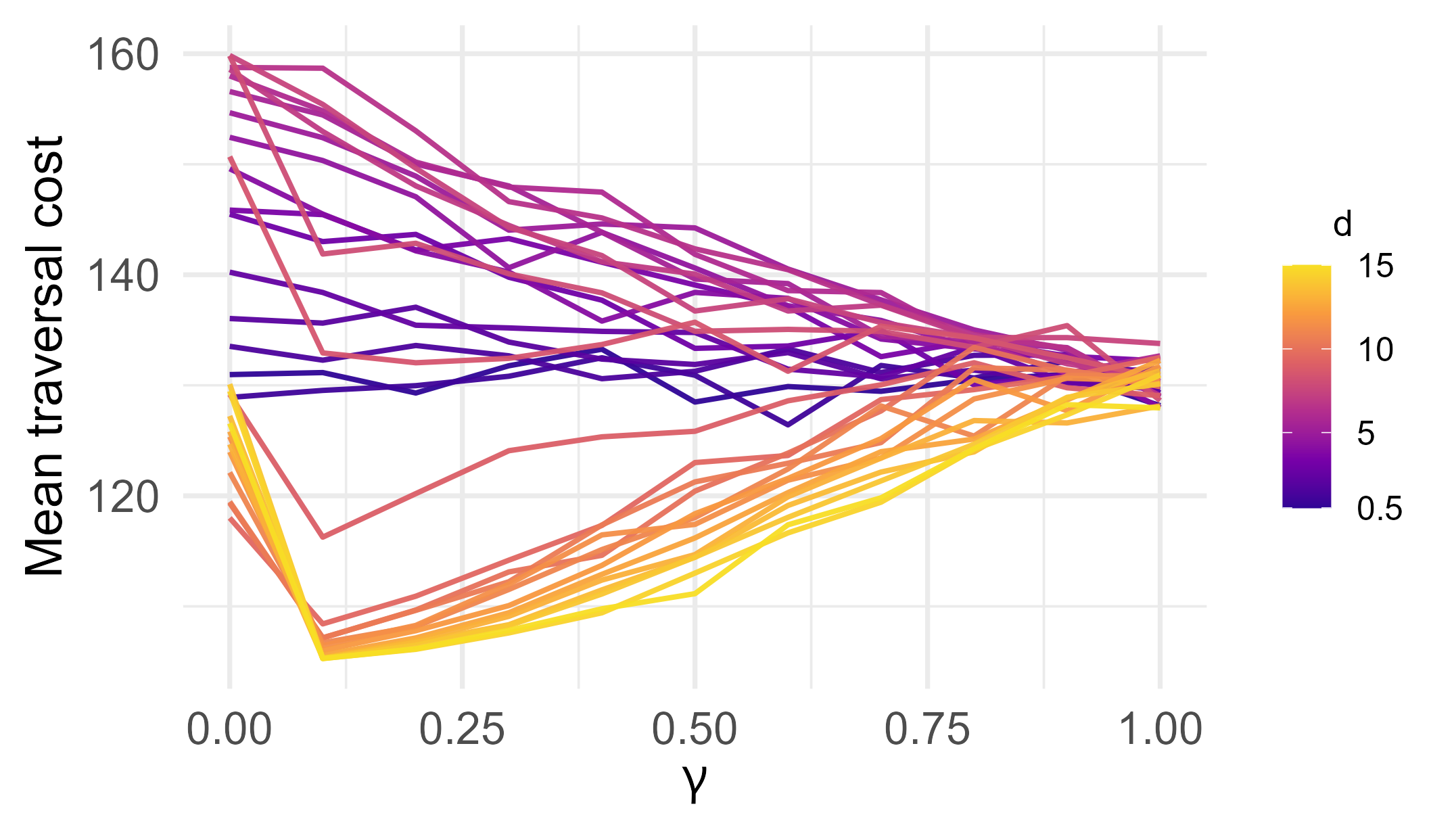}
\includegraphics[width=\textwidth]{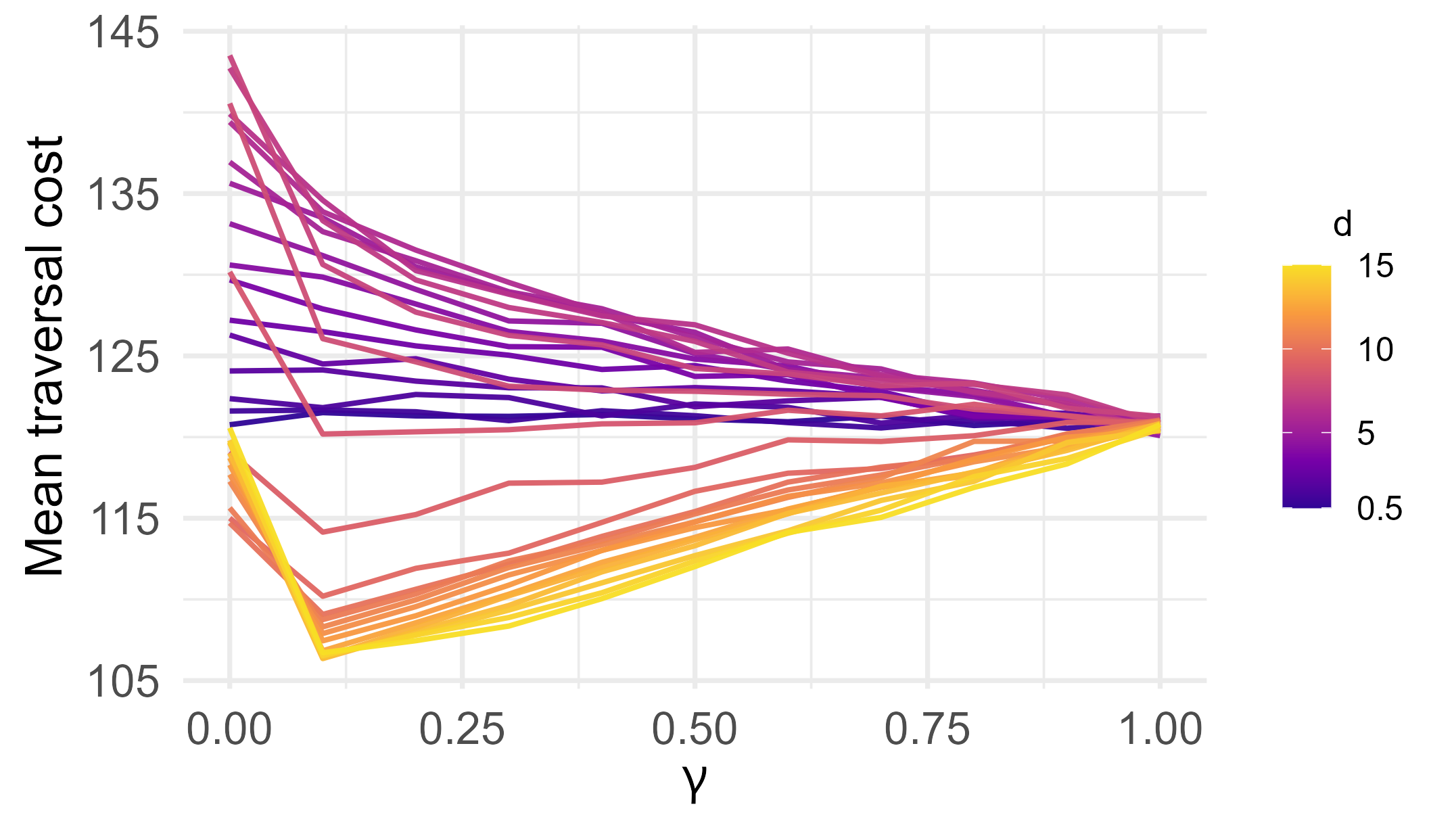}
 \caption{}
\end{subfigure}
\hfill
\begin{subfigure}[b]{0.45\textwidth}
 \centering
\includegraphics[width=\textwidth]{Strauss_Corr_Lbar_gamma_d_clut.png}
\includegraphics[width=\textwidth]{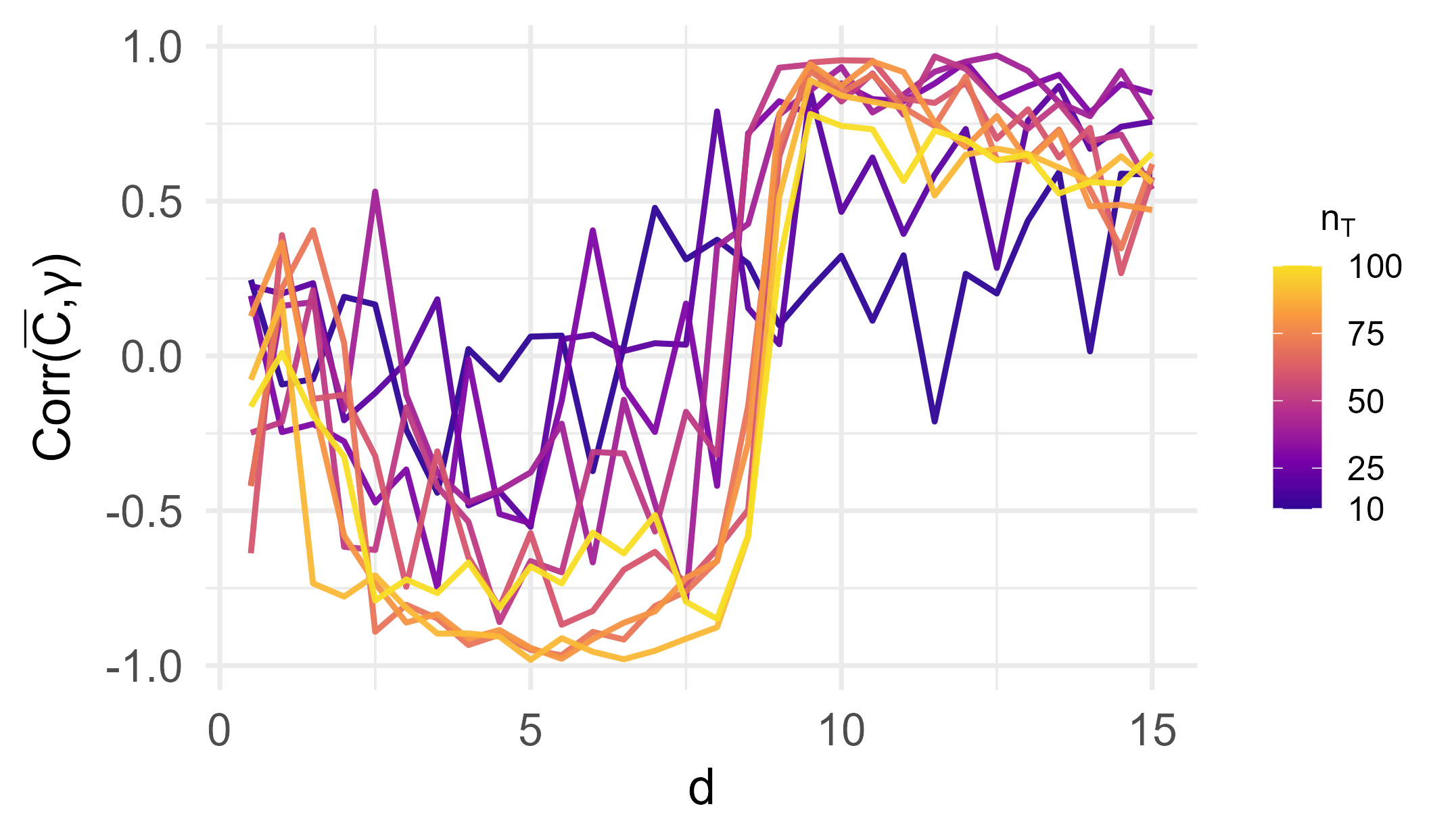}
\includegraphics[width=\textwidth]{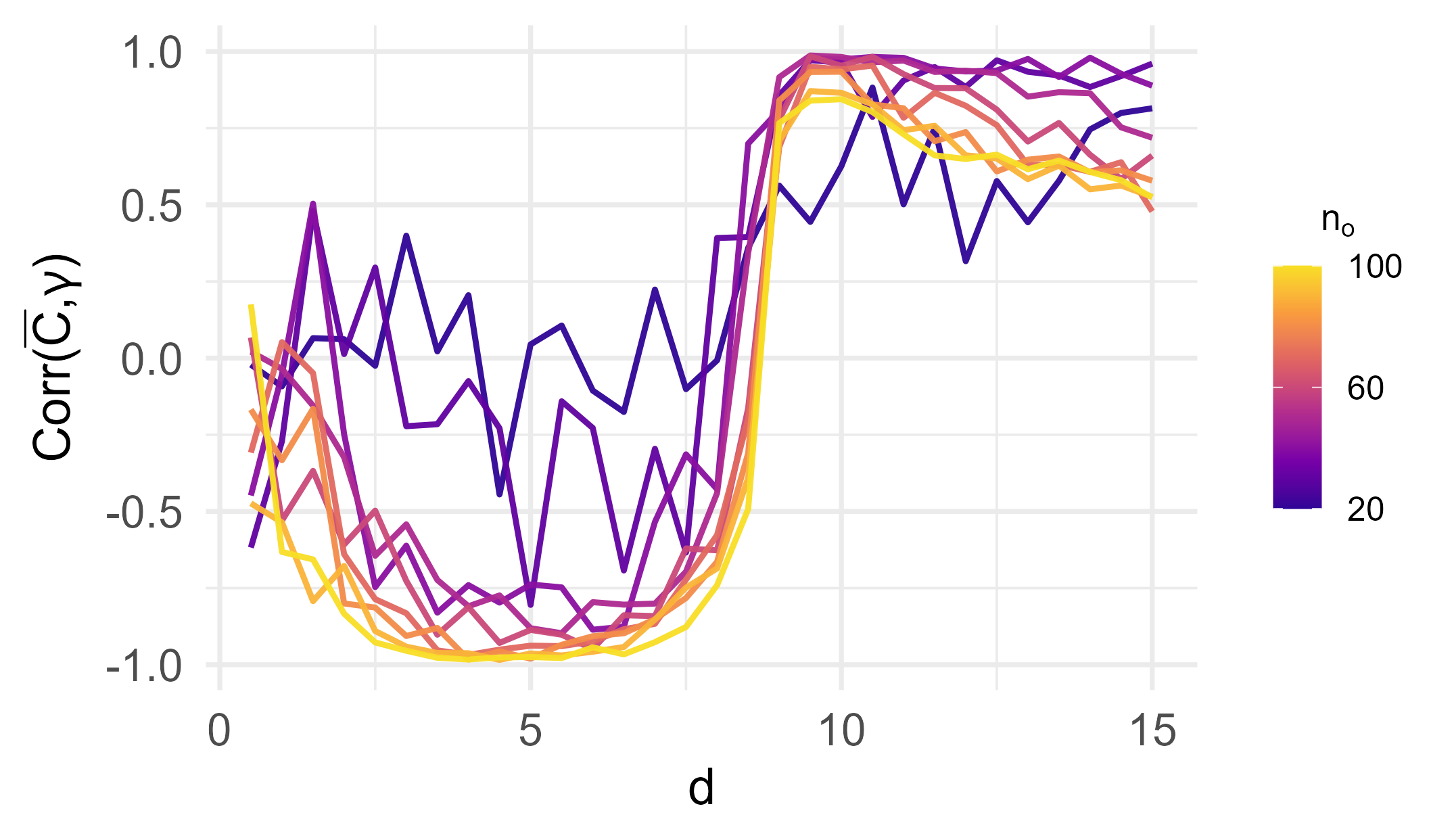}
 \caption{}
\end{subfigure}
\end{tabular}
\caption{
(a) Interaction Plots with the mean traversal cost $\bar{C}$ (averaged over obstacle number levels)
versus $\gamma$ values plotted for varying $d$ values under Strauss$(n, d, \gamma)$ regularity patterns,  
for the false obstacle only (top),  
true obstacle only (middle), and mixed obstacle (bottom) cases.  
(b) Correlation $\Corr(\bar{C}, \gamma)$ versus $d$ values for the corresponding three cases in the left column.
}
\label{fig:Lbar-vs-Strauss}
\end{figure}

In Figure~\ref{fig:Lbar-vs-Strauss}(a) (top),  
we observe that as the point pattern transitions from uniformity to regularity  
(i.e., as $\gamma$ decreases from 1 to 0),  
the mean traversal cost $\bar{C}$ remains largely unchanged for smaller $d$ values.  
This is because the Strauss$(n_F, d, \gamma)$ process, with small $d$,  
does not enforce sufficient separation between obstacle disks,  
resulting in substantial overlap and leaving ample false-free space for traversal.  
However, as $d$ increases, regularity becomes more pronounced.  
For moderate $d$ values (e.g., $d \approx 3r/2 \approx 7$),  
increasing regularity (i.e., decreasing $\gamma$) raises $\bar{C}$.  
In this regime, disk centers are spaced far enough  
to collectively occupy more of the traversal region, hindering navigation.  
For large $d$ values ($d \gtrsim 9 = 2r$), this trend reverses:  
disk centers are placed so far apart that regular patterns  
create more clutter-free space than uniform ones,  
resulting in decreased $\bar{C}$ as $\gamma$ decreases.

Figure~\ref{fig:Lbar-vs-Strauss}(b) (top) reflects these effects through correlation trends.  
The correlation $\Corr(\bar{C}, \gamma)$ is negative for small to moderate $d$,  
indicating increasing regularity raises traversal cost.  
In contrast, for large $d$, the correlation turns positive  
due to the increase in spacing reducing obstruction.  
This interaction also depends on the number of false obstacles $n_F$.  
For small $n_F$ (e.g., $n_F \leq 30$),  
correlation is weak across all $d \leq 9$ due to insufficient obstacle density.  
Sparse disks—and possible overlaps—fail to constrain movement significantly.  
In contrast, for $n_F \geq 50$ and moderate $d$ (between 4 and 9),  
the negative correlation strengthens.  
In these settings, increasing $\gamma$ (i.e., reducing regularity)  
increases overlap and clutter-free space,  
thereby lowering $\bar{C}$.  
Nonetheless, even though correlation trends are clear,  
the absolute value of $\bar{C}$ is generally highest for moderate $d$ values across the $\gamma$ range,  
as seen in Figure~\ref{fig:Lbar-vs-Strauss}(a) (top).

As $d$ appears to substantially influence how traversal cost varies with $\gamma$,  
we further explore this relationship in Figure~\ref{fig:Lbarvsd-at-gamma}(a),  
which shows mean traversal cost versus $d$ for fixed $\gamma$ values,  
averaged over obstacle number levels.  
A clear concave-down pattern emerges,  
with traversal cost peaking near $d \approx 1.5\,r$.  
This suggests that intermediate spacing between obstacles  
leads to the most obstructive configurations.  
The curvature is more pronounced at lower $\gamma$ values—i.e., under stronger regularity—  
since $\gamma$ controls the number of disk pairs closer than $d$.  
At moderate $d$ values (around 7), disks are typically neither overlapping nor too widely spaced,  
which maximizes obstruction.  
For $d > 9 = 2r$, disks are too dispersed to significantly hinder traversal,  
and mean costs drop—especially for low $\gamma$ values,  
where regularity amplifies the spacing effect.  
At the other extreme, when $\gamma=1$,  
the Strauss process approaches a uniform distribution,  
so the influence of $d$ largely vanishes,  
resulting in flat traversal cost trends across $d$.

\begin{figure}[htb]
\centering
\begin{minipage}{0.45\textwidth}
    \centering
    \includegraphics[width=\linewidth]{Strauss_Lbarvsd_gamma_clut.png}\\
    (a)
\end{minipage}\hfill
\begin{minipage}{0.45\textwidth}
    \centering
    \includegraphics[width=\linewidth]{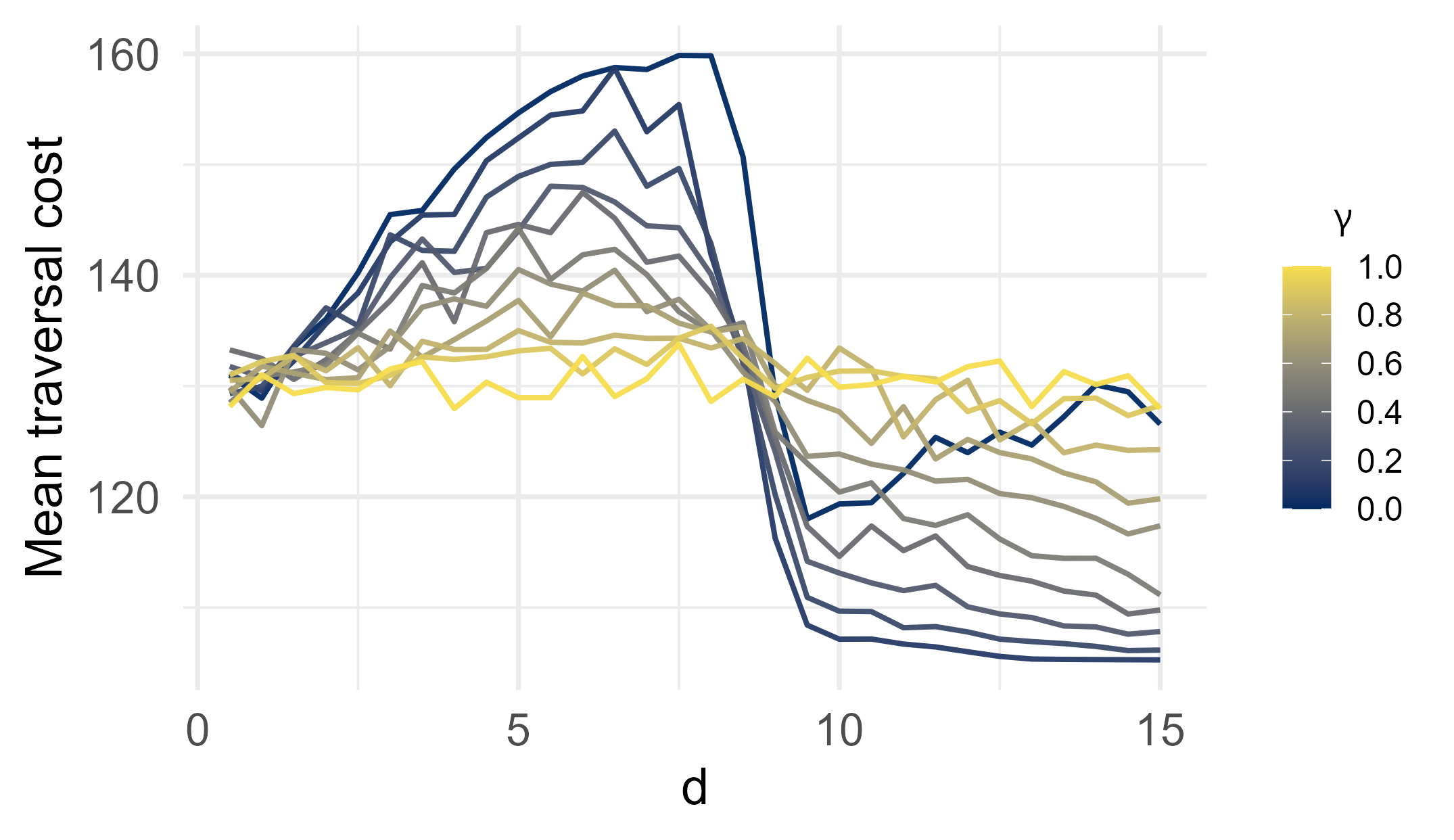}\\
    (b)
\end{minipage}

\begin{minipage}{0.45\textwidth}
    \centering
    \includegraphics[width=\linewidth]{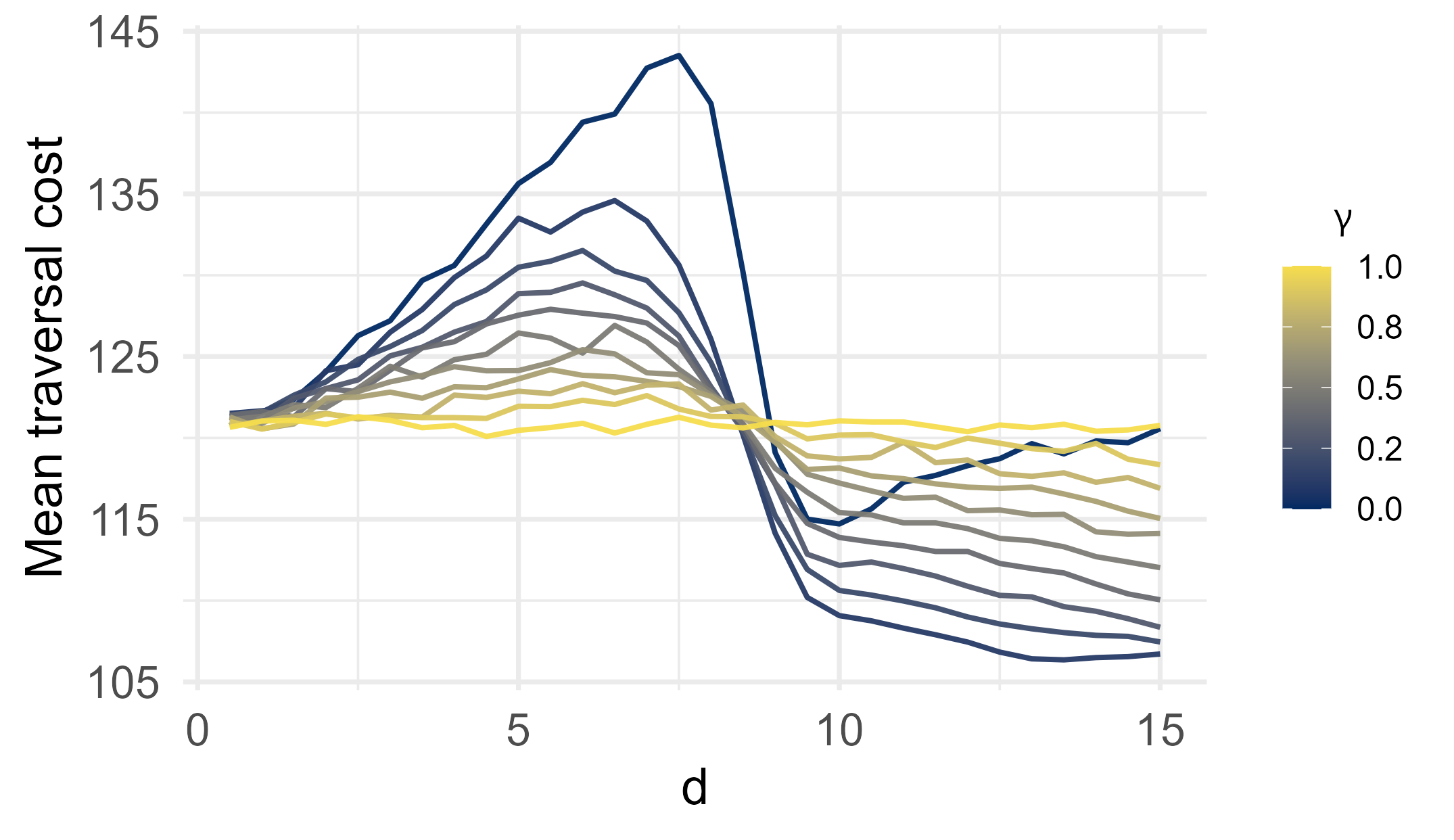}\\
    (c)
\end{minipage}

\caption{
Interaction plots with the mean traversal cost $\bar{C}$ (averaged over obstacle number levels)
versus $d$ under Strauss$(n, d, \gamma)$ regularity patterns.  
(a) False-obstacle-only case.  
(b) True-obstacle-only case.  
(c) Mixed-obstacle case.}
\label{fig:Lbarvsd-at-gamma}
\end{figure}

Figure~\ref{fig:contour-Lbarvsd-gamma}(a) reinforces these observations  
with contour plots of mean traversal cost over the $(\gamma, d)$ plane.  
For lower $n_F$ (not shown), traversal costs vary modestly,  
but for higher $n_F$ values,  
the range of cost values expands,  
indicating stronger dependence on $d$ and $\gamma$.  
Highest traversal costs occur at small $\gamma$ (e.g., $\gamma < 0.15$) and moderate $d$ values,  
with the optimal $d$ range shifting slightly lower as $n_F$ increases.  
In contrast, traversal is most efficient (lowest cost)  
when $d \gtrsim 10$ and $\gamma \approx 0.15$,  
yielding wide clutter-free regions under sparse regular arrangements.  
Interestingly, around $d = 2r = 9$, traversal cost becomes nearly invariant to $\gamma$,  
and resembles that under uniform patterns,  
highlighting a structural transition point in obstacle configuration.

\begin{figure}[!htb]
\centering
\begin{minipage}{0.45\textwidth}
    \centering
    \includegraphics[width=\linewidth]{Strauss_Contour_clut.png}\\
    (a)
\end{minipage}\hfill
\begin{minipage}{0.45\textwidth}
    \centering
    \includegraphics[width=\linewidth]{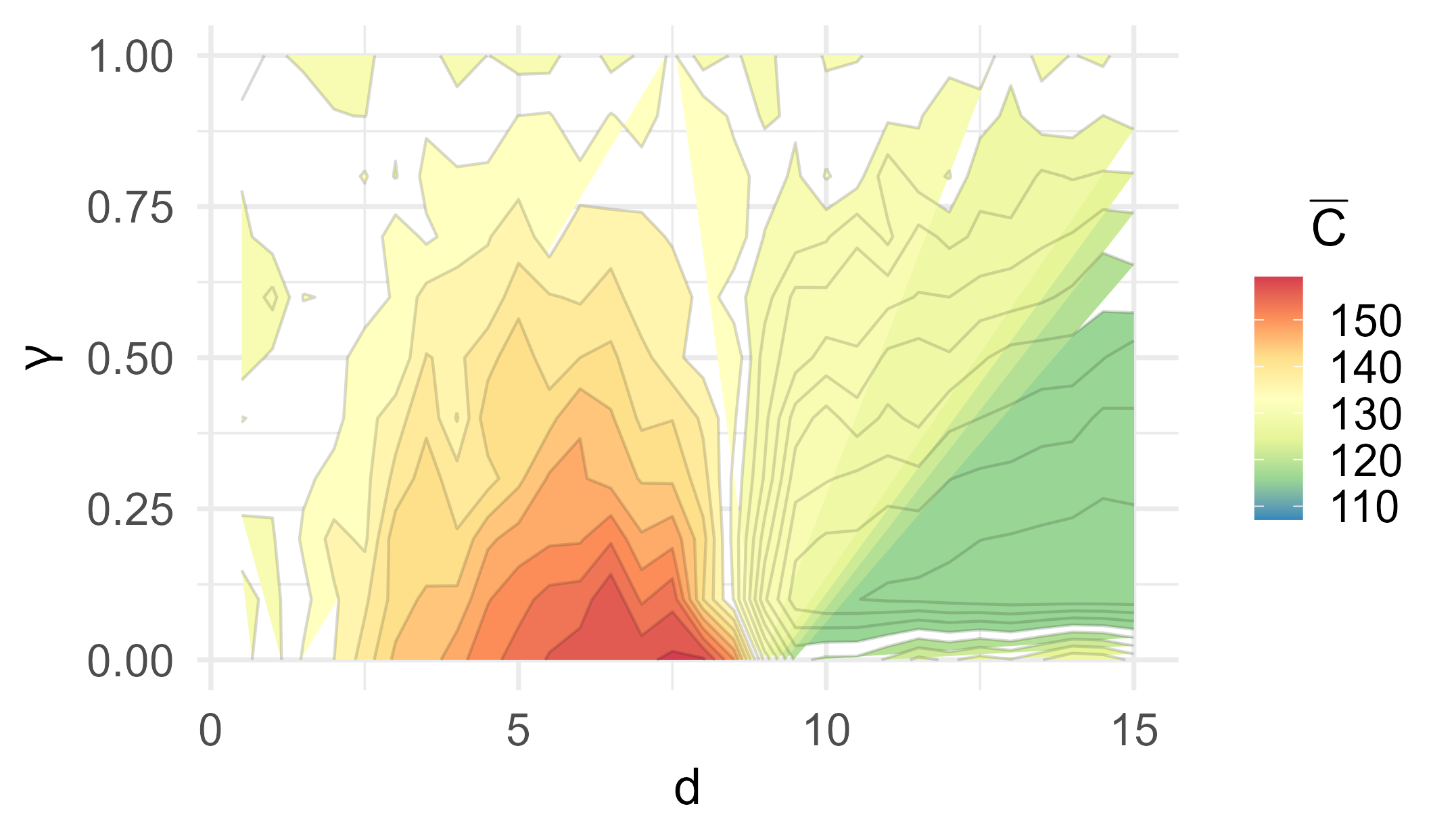}\\
    (b)
\end{minipage}

\begin{minipage}{0.45\textwidth}
    \centering
    \includegraphics[width=\linewidth]{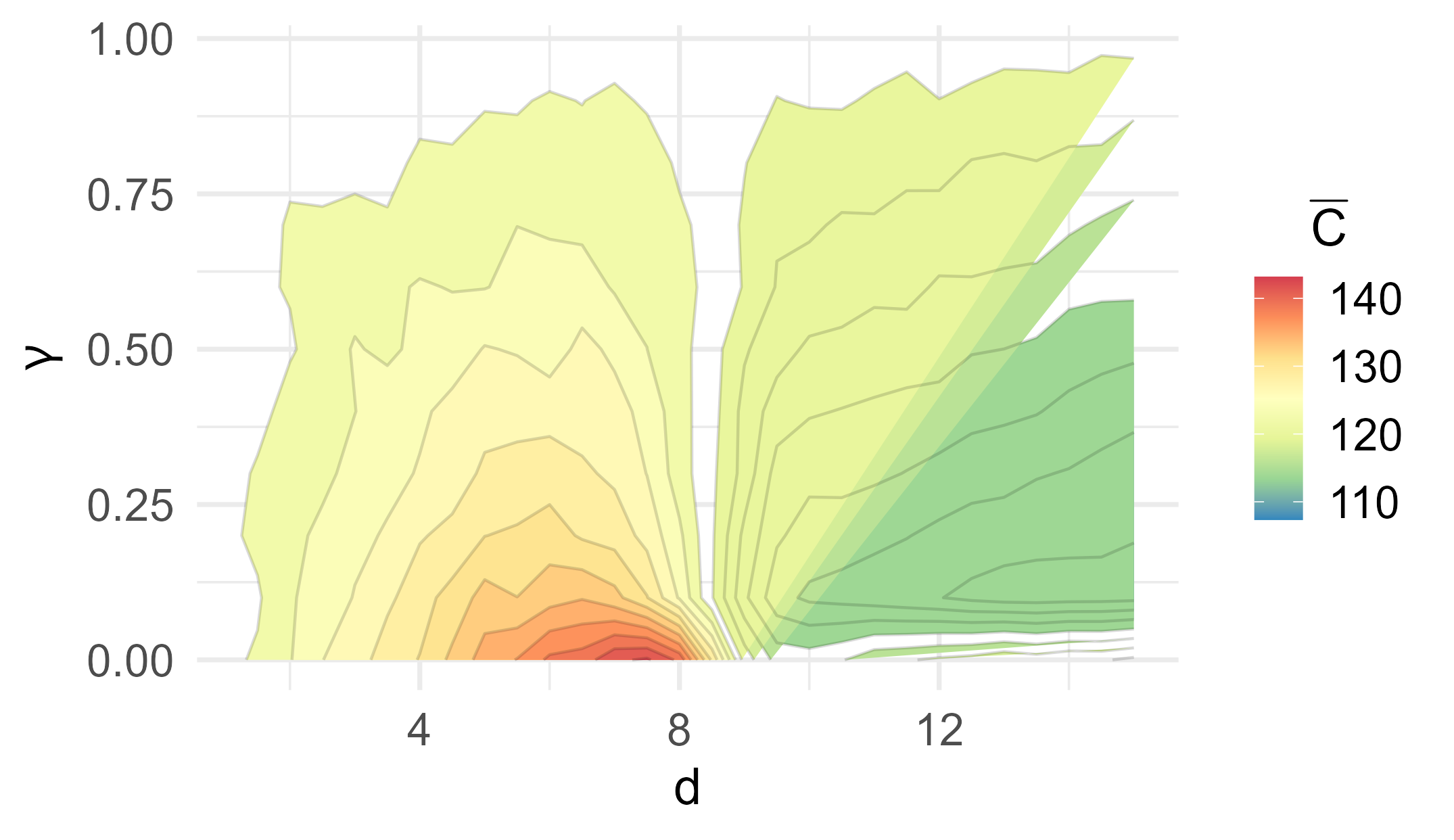}\\
    (c)
\end{minipage}

\caption{Filled contour plots of mean traversal cost (averaged over all obstacle number levels)  
for $\gamma$ and $d$ values under the Strauss$(n,d,\gamma)$ regularity pattern.  
(a) False-obstacle-only case.  
(b) True-obstacle-only case.  
(c) Mixed-obstacle case.  
The color index for increasing mean traversal cost is provided by each plot.}
\label{fig:contour-Lbarvsd-gamma}
\end{figure}

In Figures~\ref{fig:Lbar-vs-Strauss}–\ref{fig:contour-Lbarvsd-gamma},  
we observe distinct trends in mean traversal cost  
as functions of the covariates $\gamma$, $d$, and $n_F$.  
To better quantify these relationships and assess interaction effects,  
we fit a multiple linear regression model,  
treating traversal cost as the response variable  
and the remaining variables as numerical predictors.  
We exclude the number of disambiguations from the model  
since it is only observed post-traversal and is itself a function of the other covariates.  
The interaction plots suggest non-parallel trends across levels of each variable,  
indicating the presence of interaction effects.  
Because traversal cost exhibits strong right skewness and outliers—  
especially due to high disambiguation counts—  
we apply a logarithmic transformation.  
Even after transformation, the data deviate from normality (per Lilliefors' test),  
with outlier rates of approximately 30\% by Cook’s distance  
and 2\% by $z$-score criterion ($|z| > 3$) \citep{kuhn2013}.  
This motivates the use of a robust linear regression model with $M$-estimation \citep{Huber:1981}.  
We include second-order terms and all two-way interactions among $\gamma$, $d$, and $n_F$,  
but remove the $\gamma \times n_F$ interaction due to insignificance.  
This suggests traversal cost trends in $\gamma$ are parallel across $n_F$ levels.  
The robust model reduces residual standard error from 4.48 (OLS) to 3.78,  
validating the approach.  
Model fitting is done using the \texttt{rlm} function from the \texttt{MASS} package in \texttt{R},  
using Huber weights (with bisquare weights yielding similar results).

The resulting robust regression model is:
\begin{equation}
\label{eqn:rlm-false-regular}
\widehat C =  99.270 -4.21 \gamma + .33\,d +.1897n_F+  1.68\, \gamma^2-.025\,d^2 + .0006\,n_F^2  +
 .365\,\gamma\,d -.0064\, d\,n_F.
\end{equation}

While interpretation is nuanced due to interaction and quadratic terms, some dominant trends emerge:  
traversal cost decreases with $\gamma$ at small $d$ and increases at large $d$;  
it grows quadratically (concave-down) in $d$;  
and increases quadratically (concave-up) in $n_F$.

Although not intended for prediction, this model can offer exploratory insights.  
For predictive modeling under outlier presence, we also employ random forest (RF) regression \citep{breiman2001},  
using \texttt{randomForest} in \texttt{R}, with $C$ as response and $(n_F, \gamma, d)$ as predictors.  
After testing models with 100, 300, and 500 trees, we found 100 trees sufficient,  
with 1 variable tried at each split (as suggested by \cite{kuhn2013}), and default node size.  
Variable importance is assessed via average increase in out-of-bag (OOB) residuals when permuting a given predictor.  
Figure~\ref{fig:RF-varimp-plot-false-reg}(a) shows that $n_F$ is the most important predictor,  
followed by $d$ and then $\gamma$.  
The RF model explains 66.86\% of the variance with a mean squared residual of 19.04,  
indicating average prediction errors around 19 units.  
While RF is not recommended for accurate prediction in this setting,  
it is valuable for ranking variable importance and confirming regression insights.

\begin{figure}[htb]
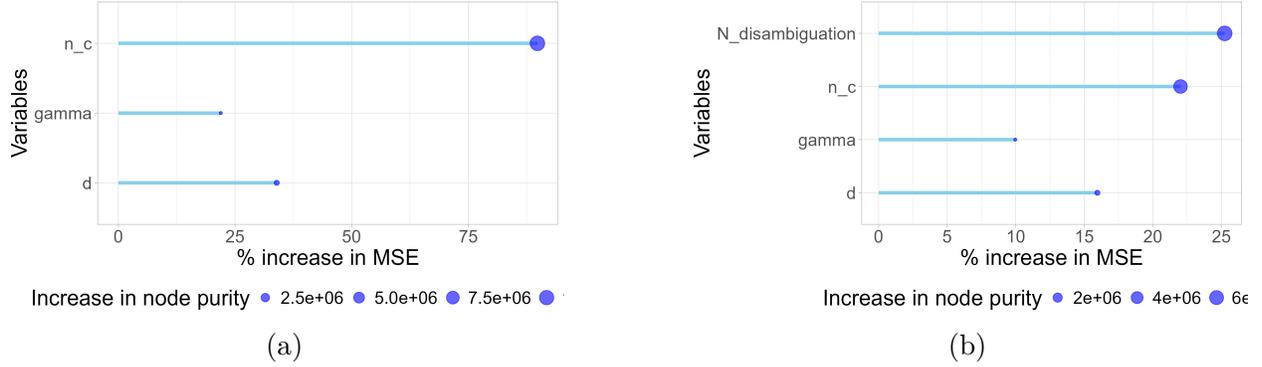

\centering
\begin{minipage}{0.45\textwidth}
    \centering
    \includegraphics[width=\linewidth]{RF_Clutter_Strauss.png}\\
    (a)
\end{minipage}\hfill
\begin{minipage}{0.45\textwidth}
    \centering
    \includegraphics[width=\linewidth]{RF_ClutterNdis_Strauss.png}\\
    (b)
\end{minipage}
\caption{Variable importance (decrease in mean square error) and node purity (residual sum of squares)  
for RF regression models under the Strauss$(n_F,d,\gamma)$ pattern.  
(a) Using $C$ as response with $(n_F, d, \gamma)$ as predictors.  
(b) Using $C$ as response with $(n_F, d, \gamma, N_{dis})$ as predictors.}
\label{fig:RF-varimp-plot-false-reg}
\end{figure}

Although $N_{dis}$ (number of disambiguations) is only observed after traversal and not usable in predictive modeling beforehand,  
we include it here for completeness.  
This post hoc model can help estimate average traversal cost once the obstacle layout and $N_{dis}$ are known.  
We fit a robust linear regression with $C$ as response and $\gamma$, $d$, $n_F$, $N_{dis}$  
(along with their squares and all two-way interactions) as predictors.  
No variables were eliminated in the selection process.  
The resulting model is:
\begin{multline}
\label{eqn:rlm-false-regular-Ndis}
\widehat C = 99.336 -2.16\,\gamma + .16\, d + .1965\,n_F + 3.135\,N_{dis} + .54\,\gamma^2 -.0125\,d^2 -.00044\,n_F^2 \\
- .0193\,N_{dis}^2 + .15\,\gamma\,d + .0185\, \gamma\,n_F - .71\,\gamma\,N_{dis} \\
- .0029\,d\,n_F + .043\,d\,N_{dis} + .0096\,n_F\,N_{dis}.
\end{multline}

We also fit a RF model using $(n_F, \gamma, d, N_{dis})$ as predictors.  
Variable importance and node purity (Figure~\ref{fig:RF-varimp-plot-false-reg}, right)  
indicate $N_{dis}$ is the most important predictor, followed by $n_F$, $d$, and $\gamma$.  
This model achieves a mean squared residual of 14.65 and explains 74.5\% of variance.

Given that $N_{dis}$ is post-traversal,  
it is more practical to model it directly as a function of pre-traversal covariates.  
To that end, we fit a Zero-Inflated Negative Binomial (ZINB) model with $N_{dis}$ as the response  
and $(\gamma, d, n_F)$ as predictors,  
using the \texttt{zeroinfl} function in \texttt{R} (\texttt{pscl} package).  
The count portion (negative binomial) uses $\gamma$ and $d$,  
while the zero-inflation (logit) part uses $n_F$.  
All predictors are statistically significant,  
and the model outperforms an intercept-only alternative (confirmed via chi-squared test).  
In the count model, coefficients for $\gamma$ and $d$ are -0.062 and -0.054, respectively,  
indicating that higher regularity and spacing reduce expected disambiguation count.  
In the logit model, the coefficient for $n_F$ is -0.0925,  
meaning that more false obstacles reduces the odds of zero disambiguations.  
Thus, increasing $n_F$ makes disambiguations more likely,  
while increasing $\gamma$ and $d$ tends to reduce them.  
These findings align with observed trends:  
traversal cost $C$ and $N_{dis}$ are strongly positively correlated,  
and both increase with decreasing $\gamma$ at moderate $d$,  
but decrease at large $d$ where obstacles are widely spaced.

Overall, for all false obstacle number levels, $d$ has the largest scale and dominates the interaction with $\gamma$:
\begin{itemize}
\item At small $d$, $\gamma$ has limited effect on $C$;
\item At moderate $d$, lower $\gamma$ (more regularity) increases $C$;
\item At large $d$, higher $\gamma$ (more uniformity) increases $C$;
\item For fixed $\gamma$, $C$ increases then decreases with $d$, 
yielding a concave-down profile most pronounced for low $\gamma$.
\end{itemize}

\subsubsection{True Obstacles from Uniform to Regular Patterns}
\label{sec:OOP-obs-unif2reg}

We now consider a more realistic scenario where OPA places 
only \emph{true obstacles} (e.g., mines) in the region.  
The simulation setup is identical to Section~\ref{sec:OOP-clut-unif2reg},  
with the number of true obstacles $n_T$ replacing $n_F$.  
We follow the same analysis approach,  
examining how mean traversal cost varies with $\gamma$, $d$, and $n_T$ under the Strauss$(n_T, d, \gamma)$ process.  
As expected, for fixed values of $\gamma$, $d$, and obstacle count,  
the mean traversal cost is higher in this case compared to the 
false obstacles only setting (Figure~\ref{fig:Lbar-vs-Strauss}(a) (middle).  
This is because when NAVA disambiguates an obstacle and finds it to be true, 
it must reroute, thereby increasing traversal cost.  
Despite this shift in magnitude, the overall trends in traversal cost and 
its correlation with $\gamma$ are consistent with those in the false obstacle case.

Figure~\ref{fig:Lbarvsd-at-gamma}(b) shows that mean traversal cost again follows 
a concave-down pattern in $d$ for each $\gamma$, as before, but at higher cost levels.  
Similarly, the contour plot in Figure~\ref{fig:contour-Lbarvsd-gamma}(b) 
reflects the same $(\gamma, d)$ dependence as in the false obstacle case, 
with elevated costs due to rerouting upon encountering true obstacles.

We also present the (filled) contour plots of mean traversal cost averaged over all $n_T$ values for  
$\gamma$ and $d$ values in Figure \ref{fig:contour-Lbarvsd-gamma}(b),  
which conveys a similar result as in the false-only case in Section \ref{sec:OOP-clut-unif2reg}.

To quantify these effects,  
we fit a robust linear regression model with $C$ as the response and $n_T$, $d$, $\gamma$, 
and their quadratic and two-way interaction terms as predictors.  
During model selection, $n_T$ and its interaction with $\gamma$ were excluded due to insignificance.  
The robust model substantially reduces residual standard error from 49.53 (OLS) to 7.46.  
The estimated model is:
\begin{equation}
\label{eqn:rlm-true-regular}
\widehat C = 100.108 -10.91 \,\gamma  + 1.082\,d + 4.73\,\gamma^2 -0.042\,d^2 + 0.0067\,n_T^2 + 0.86\,\gamma\,d - 0.03\,d\,n_T.
\end{equation}
Dominant effects mirror those observed earlier:  
cost decreases with $\gamma$ at small $d$ and increases at large $d$,  
follows a concave-down trend in $d$,  
and exhibits a quadratic increase with $n_T$.

Fitting the RF regression of \cite{breiman2001} with traversal cost $C$ as the response and $n_T$, $\gamma$, and $d$ as predictors,  
we find the variables ranked in decreasing order of importance and node purity as $n_T$, $d$, and $\gamma$.  
This mirrors the ordering observed in the false obstacles only setting.

Next, we fit a robust linear regression model using $C$ as the response and $(\gamma, d, n_T, N_{dis})$,  
their squares, and all two-way interactions as predictors.  
Only the $N_{dis}^2$ term is eliminated in model selection.  
The resulting model is:
\begin{multline}
\label{eqn:rlm-true-regular-Ndis}
\widehat C = 99.525 -6.11\,\gamma + .416\, d + .176\,n_T + 18.88\,N_{dis} + 2.15\,\gamma^2 -.024\,d^2 -.0012\,n_T^2\\
    + .43\,\gamma\,d + .03\,\gamma\,n_T  -.90\,\gamma\,N_{dis}
    -.011\,d\,n_T + .005\,d\,N_{dis} + .047\,n_T\,N_{dis}.
\end{multline}

In the corresponding RF model using $n_T$, $\gamma$, $d$, and $N_{dis}$ as predictors,  
variable importance rankings again highlight $N_{dis}$ as the most influential,  
followed by $n_T$, $d$, and $\gamma$,  
consistent with results from the false-only case.

We also model $N_{dis}$ directly using a ZINB regression with $(\gamma, d, n_T)$ as predictors.  
Here, the count model uses $\gamma$ and $d$,  
while the zero-inflation (logit) part uses $n_T$.  
All predictors are statistically significant.  
In the count portion, the expected change in $\log N_{dis}$ is $-0.08$ per unit increase in $\gamma$ and $-0.024$ per unit increase in $d$.  
In the logit portion, the log-odds of observing zero disambiguations decreases by $0.067$ per additional true obstacle,  
indicating disambiguations become more likely as $n_T$ increases.  
These trends parallel those observed in the false-only case (Section~\ref{sec:OOP-clut-unif2reg}).

The influence of $\gamma$ and $d$ is weaker when $n_T$ is small (e.g., $n_T \le 50$),  
as obstacle-free or sparsely populated paths are still accessible,  
allowing NAVA to traverse with few or no disambiguations.  
As $n_T$ increases ($n_T \ge 60$), the traversal region becomes more saturated with obstacles,  
reducing navigable paths and amplifying the effects of $\gamma$ and $d$ on traversal cost.  
For large $n_T$, the impact of $\gamma$ and its interaction with $d$ plateaus.  
With the region nearly fully obstructed, obstacle distribution becomes less critical—  
most paths are blocked regardless of regularity,  
and NAVA often resorts to traversing the obstacle-free annular region around the study window.

\subsubsection{Mixed Obstacles from Uniform to Regular Patterns}
\label{sec:OOP-mix-unif2reg}

We now consider the intermediate scenario in which OPA places a mix of true and false obstacles,  
combining aspects of the two extreme settings examined in Sections~\ref{sec:OOP-clut-unif2reg} and~\ref{sec:OOP-obs-unif2reg}.

Obstacle locations follow a Strauss$(n, d, \gamma)$ process,  
where the total number of obstacles $n_o = n_T + n_F$ varies from 20 to 100 in steps of 10,  
and $(n_T, n_F) \in \{10, 20, \ldots, 90\}$.  
As in prior sections, $\gamma$ ranges from 0 to 1 in increments of 0.1, and $d$ ranges from 0.5 to 15 in steps of 0.5.  
A representative realization is shown in Figure~\ref{fig:sample-Strauss-false-mix}(b).  
In this setting, sensors produce more varied probabilistic markings, introducing greater uncertainty in NAVA’s disambiguation decisions.  
Figure~\ref{fig:Lbar-vs-Strauss}(a) (bottom) shows that mean traversal costs lie between the corresponding values  
from the false-only and true-only cases, given the same values of $n_o$, $\gamma$, and $d$.  
Correlation patterns in Figure~\ref{fig:Lbar-vs-Strauss}(b) (bottom) reflect similar directional trends,  
positioned between the extremes.

Figure~\ref{fig:Lbarvsd-at-gamma}(c) confirms that the relationship between traversal cost and $d$ retains the concave-down form  
seen in previous cases, again suggesting a quadratic dependence.  
Cost levels remain intermediate: higher than in the false-only case and lower than in the true-only case.

The filled contour plots of mean traversal cost averaged over all $n_o$ values  
for $(\gamma, d)$ pairs in Figure~\ref{fig:contour-Lbarvsd-gamma}(c)  
also exhibit patterns consistent with those observed in Sections~\ref{sec:OOP-clut-unif2reg} and~\ref{sec:OOP-obs-unif2reg}.

To quantify these effects, we fit a robust linear model using traversal cost $C$ as the response  
and $n_o$, $n_F$, $d$, and $\gamma$ (including their squares and all two-way interactions) as predictors.  
The switch from OLS to robust modeling reduces the residual standard error from 26.97 to 6.32.  
All variables are retained in model selection:
\begin{multline}
\label{eqn:rlm-mix-regular}
\widehat C = 99.545 -11.114 \,\gamma  + .95\,d + .10\,n_o +.087\,n_F + 4.32\,\gamma^2 -.054\,d^2 + .0038\,n_o^2 + .002\,n_F^2\\
+ .845\,\gamma\,d + .0185\,\gamma\,n_o -.014\,\gamma\,n_F -.02\,d\,n_o + .01\,d\,n_F - .005\,n_o\,n_F.
\end{multline}
Dominant trends include: a concave-up relationship with $\gamma$,  
a concave-down quadratic trend in $d$,  
an overall increase with $n_o$, and a decreasing effect from $n_F$,  
modulated by several interaction terms.

An RF regression with $C$ as the response and $n_o$, $n_F$, $d$, and $\gamma$ as predictors  
yields variable importance rankings (in decreasing order): $n_o$, $n_F$, $d$, and $\gamma$.

We also fit a robust linear model using $C$ as the response and predictors $(\gamma, d, n_o, n_F, N_{dis})$,  
including their squares and all two-way interactions.  
Only the $\gamma \times N_{dis}$ and $\gamma \times n_o$ terms are removed in model selection.  
The final model is:
\begin{multline}
\label{eqn:rlm-mix-regular-Ndis}
\widehat C = 99.645 -3.05\,\gamma + .27\, d + .245\,n_o - .048\,n_F + 4.535\,N_{dis} + .953\,\gamma^2 -.02\,d^2 -.00025\,n_o^2 \\ 
+ .0043\,n_F^2 + .42\,N_{dis}^2 + .335\,\gamma\,d + .0044\, \gamma\,n_F -.0051\,d\,n_o + .0012\,d\,n_F + .024\,d\,N_{dis} \\ 
-.004\,n_o\,n_F + .1086\,n_o\,N_{dis} -.177\,n_F\,N_{dis}.
\end{multline}

The corresponding RF model using $(n_T, \gamma, d, N_{dis})$ as predictors confirms earlier trends,  
ranking variables by importance as: $N_{dis}$, $n_F$, $n_o$, $d$, and $\gamma$.  
The model achieves a mean squared residual of 164.17 and explains 82.86\% of the variance.

To further assess $N_{dis}$, we fit a ZINB model with $(\gamma, d, n_o, n_F)$ as predictors.  
The count portion includes $\gamma$, $d$, and $n_F$, while the zero-inflation part (logit) uses $n_o$.  
All predictors are statistically significant.  
In the count portion, expected changes in $\log N_{dis}$ per unit increase are:  
$-0.065$ for $\gamma$, $-0.07$ for $d$, and $-0.003$ for $n_F$.  
In the zero-inflation model, each unit increase in $n_o$ reduces the log-odds of observing zero disambiguations by 0.1,  
indicating denser obstacle fields increase the likelihood of disambiguation events.  
These results align with previous findings:  
$N_{dis}$ and traversal cost decrease with increasing regularity and spacing,  
but increase with the number of obstacles.

\begin{remark}
\label{rem:nc-vs-nt-in-models}
(\textbf{Using $n_F$ Versus $n_T$ in Models with Mixed-Type Obstacles})\\
In the mixed obstacle case, note that $n_o = n_F + n_T$.  
Thus, models in Equations~\eqref{eqn:rlm-mix-regular} and~\eqref{eqn:rlm-mix-regular-Ndis}  
can be reparameterized using $n_T$ in place of $n_F$.  
However, joint interpretation of $n_o$ and $n_F$ requires caution:  
an increase in $n_F$ (holding $n_o$ fixed) implies a decrease in $n_T$, and vice versa.  
Hence, a positive coefficient for $n_F$ on $C$ corresponds to a negative effect of $n_T$ on $C$,  
and similarly for $N_{dis}$.  
For example, the observed negative effect of $n_F$ on traversal cost  
implies that increasing $n_T$ leads to higher costs, reinforcing the asymmetry in obstacle impact.
\end{remark}

\subsection{OOP with Uniform to Clustered Obstacle Patterns}
\label{sec:OOP-unif2clust}

\subsubsection{False Obstacles from Uniform to Clustered Patterns}
\label{sec:OOP-clut-unif2clust}

We now consider the false-only case where the obstacle pattern transitions 
from uniformity to clustering, modeled using the Mat\'{e}rn$(\kappa, r_0, \mu)$ point process.

A representative realization from this setting, 
generated with parameters Mat\'{e}rn$(\kappa = 2, r_0 = 15, \mu = 10)$, 
is shown in Figure~\ref{fig:sample-Matern-false}.

\begin{figure} [!ht]
\centering
\includegraphics[width=.5\textwidth]{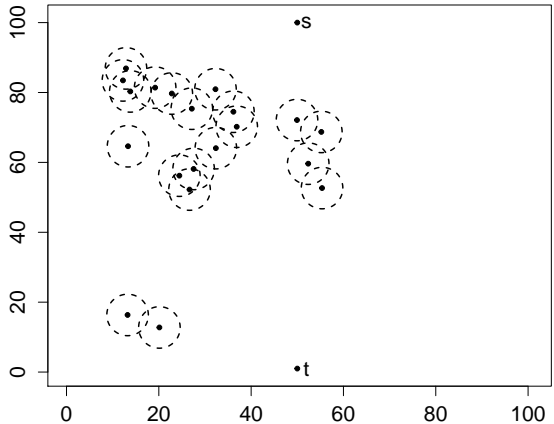}
\caption{A realization of the clustered obstacle pattern in the false-only case.
False obstacles are drawn from a Mat\'{e}rn$(\kappa=2, r_0=15, \mu=10)$ process.}
\label{fig:sample-Matern-false}
\end{figure}

We investigate trends in the mean traversal cost $\bar{C}$ as a function of 
the Mat\'{e}rn clustering parameters $\kappa$ (number of parent points) and $r_0$ (cluster radius).  
Figure~\ref{fig:L-vs-Matern}(a) (top) displays $\bar{C}$ versus $r_0$ for various values of $\kappa$, 
averaged over levels of false obstacle number $n_F$.

For moderate to large cluster radii ($r_0 \ge 15$), mean traversal cost remains relatively constant, 
as loosely clustered obstacles approximate a near-uniform pattern.  
However, for smaller values of $r_0$ (i.e., $r_0 \lesssim 15$), $\bar{C}$ decreases sharply. 
In this tightly clustered regime, 
the formation of obstacle-dense pockets creates wide navigable corridors elsewhere in the domain, 
enabling NAVA to traverse at reduced cost with fewer disambiguations.

This decrease in $\bar{C}$ is more pronounced at higher $n_F$ values, 
where obstacle density amplifies the effect of clustering.  
Furthermore, for fixed $r_0$, increasing $\kappa$ leads to higher $\bar{C}$:
more clusters result in greater dispersion of obstacles throughout the region, 
narrowing viable traversal paths.

Overall, as $r_0$ increases, the traversal cost $\bar{C}$ also tends to increase—signaling 
a transition from strongly clustered configurations to more dispersed, uniform-like layouts.  
This behavior is visualized in the filled contour plot in Figure~\ref{fig:L-vs-Matern}(b) (top), 
where $\bar{C}$ increases with both $\kappa$ and $r_0$, peaking around $\kappa \approx 12$ and $r_0 \approx 50$.

\begin{figure} [!ht]
\centering
\begin{tabular}{cc}
\begin{subfigure}[b]{0.45\textwidth}
\centering
\includegraphics[width=\textwidth]{Matern_Lbarvsradius_kappa_clut.png}
\includegraphics[width=\textwidth]{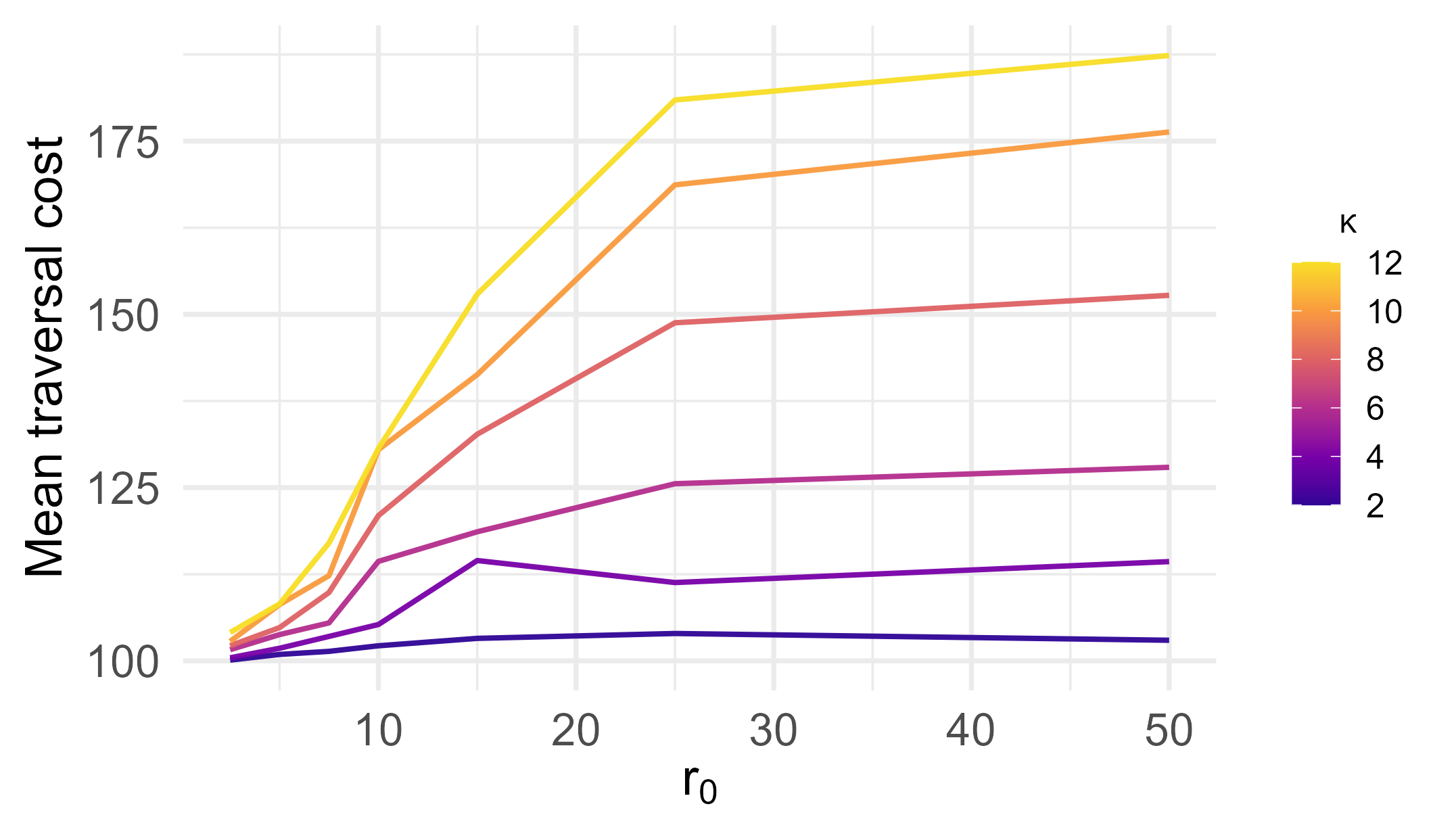}
\includegraphics[width=\textwidth]{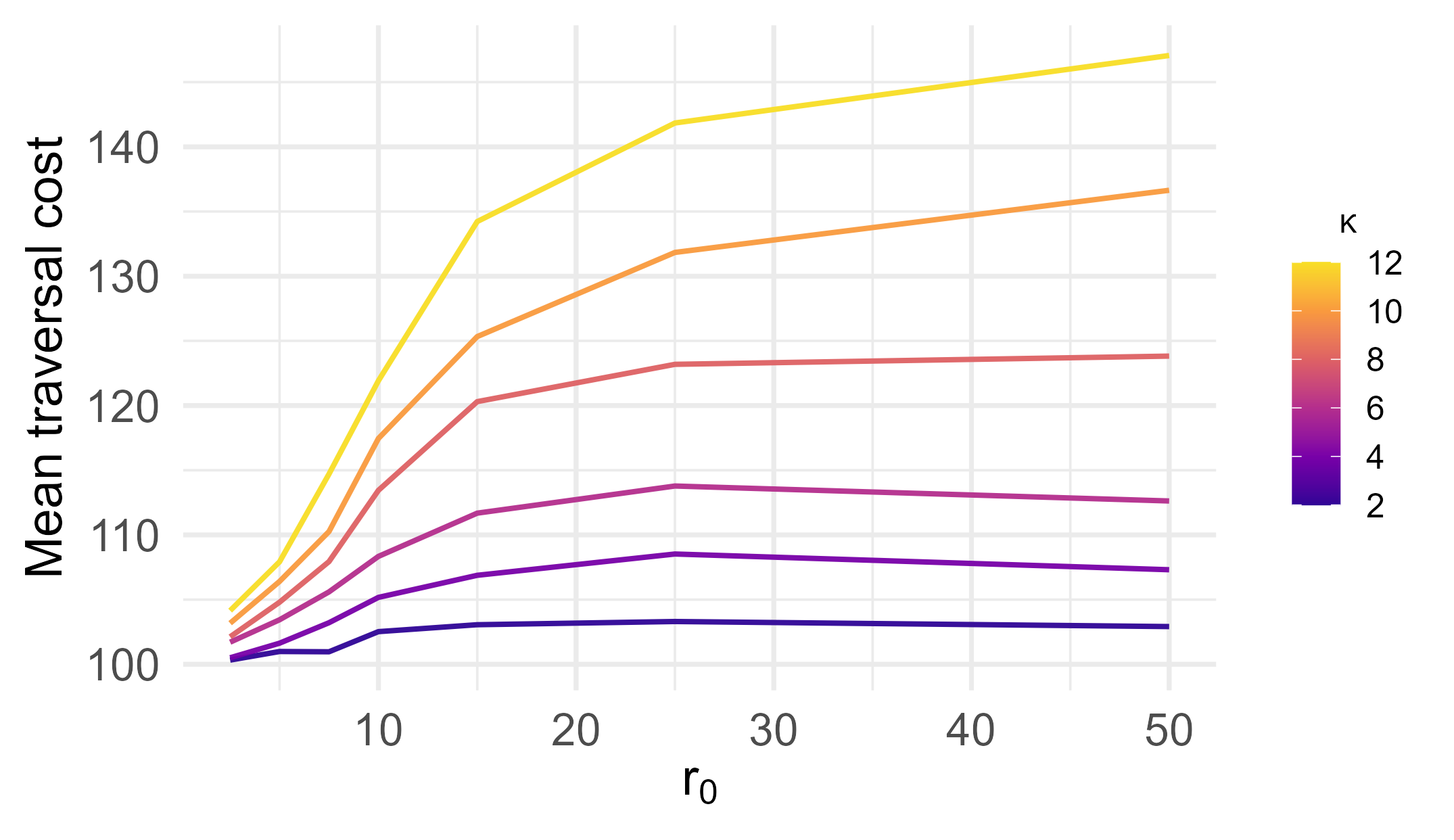}
\caption{}
\end{subfigure}
\hfill
\begin{subfigure}[b]{0.45\textwidth}
\centering
\includegraphics[width=\textwidth]{Matern_Contour_clut.png}
\includegraphics[width=\textwidth]{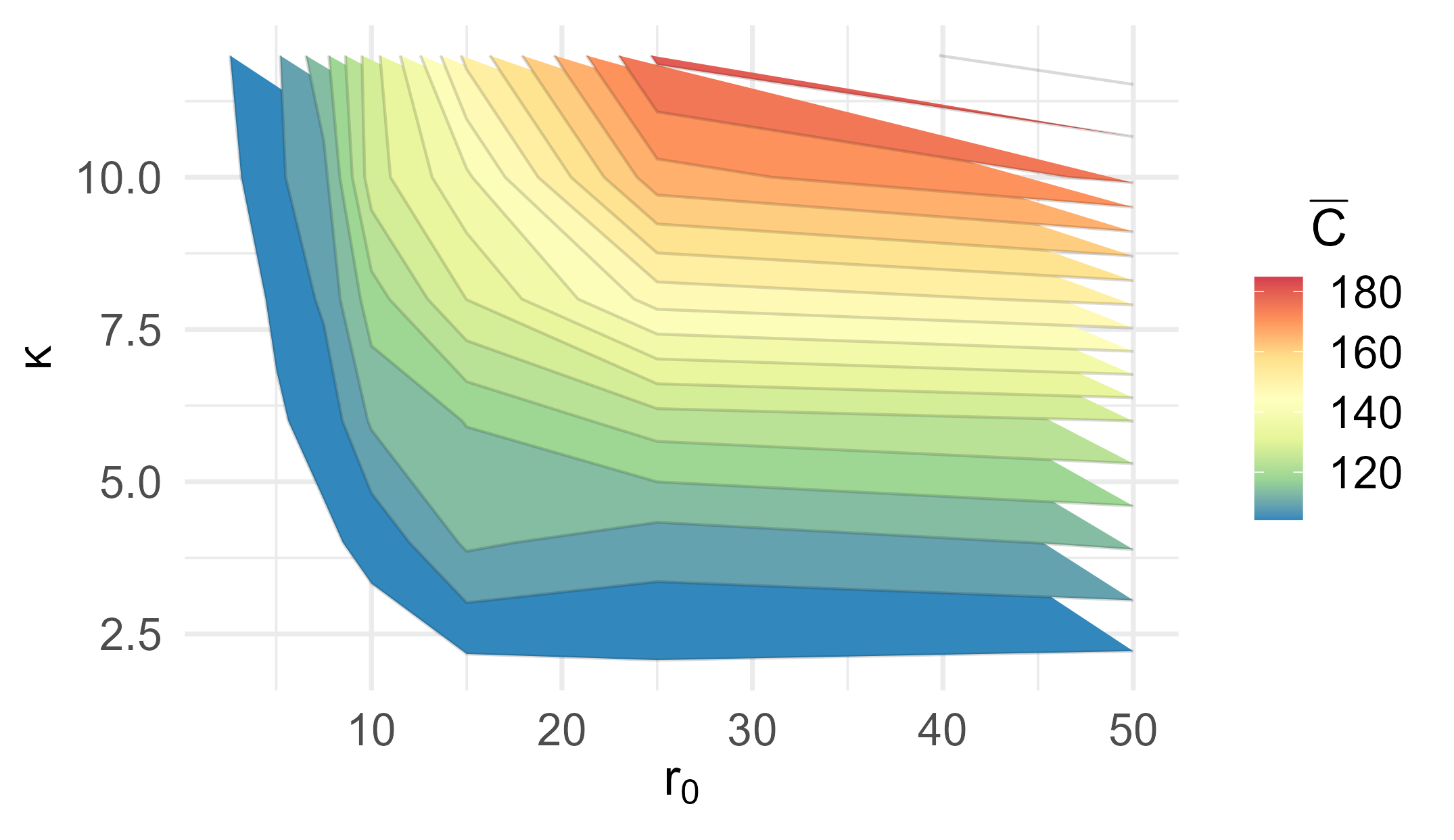}
\includegraphics[width=\textwidth]{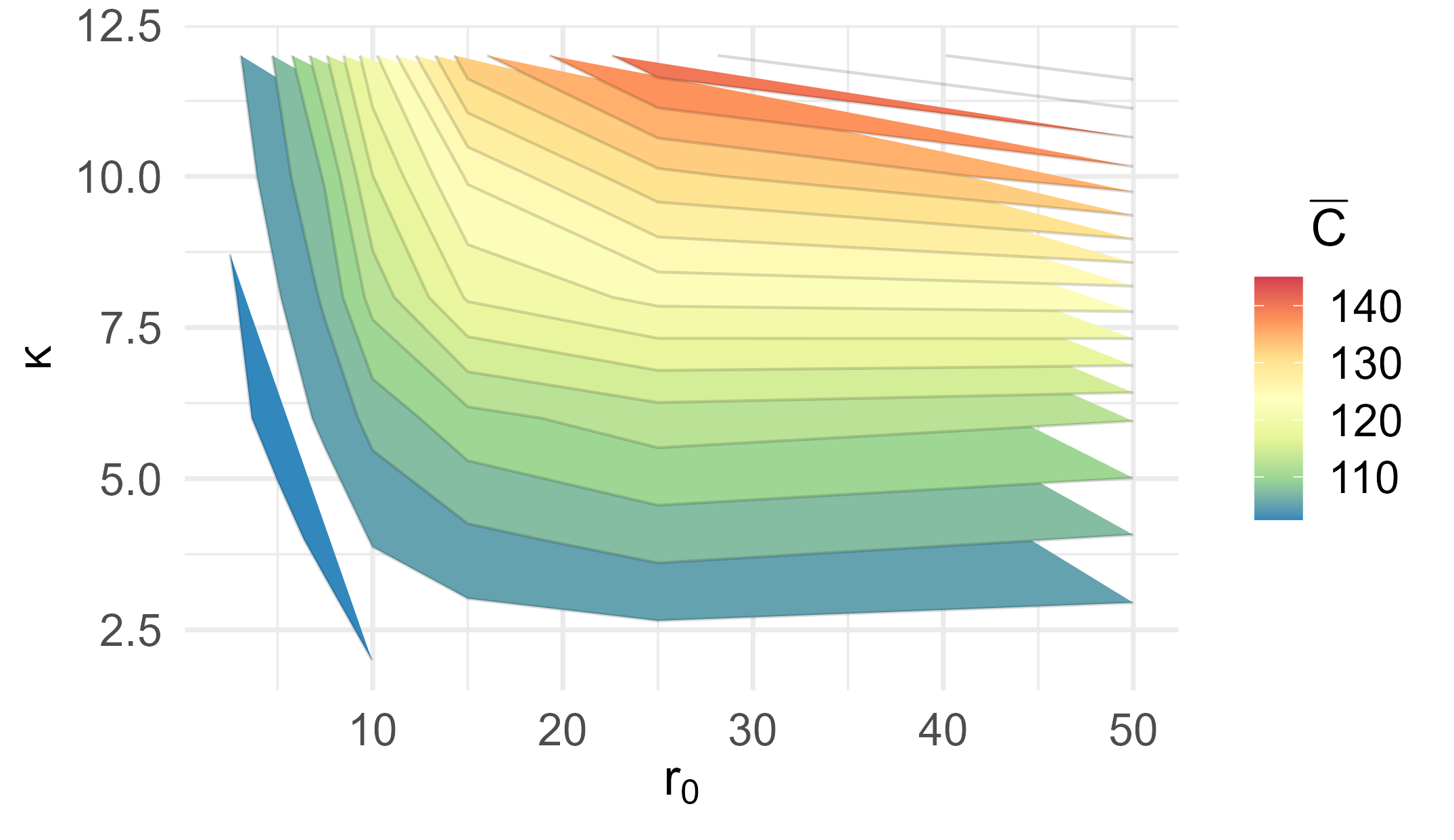}
\caption{}
\end{subfigure}
\end{tabular}
\caption{(a) Interaction Plots with the mean traversal cost $\bar{C}$ (averaged over obstacle number levels)
versus cluster radius $r_0$ values plotted
for varying values of $\kappa$ under the Mat\'{e}rn$(\kappa, r_0, \mu)$ clustering pattern, 
in the false obstacle only (top), true obstacle only (middle), and mixed obstacles (bottom) cases.
(b) Filled contour plots of mean traversal cost $\bar{C}$ (averaged over all obstacle number levels) 
for $\kappa$ and $r_0$ values under the Mat\'{e}rn$(\kappa, r_0, \mu)$ clustering pattern 
for the corresponding three cases in the left column.}
\label{fig:L-vs-Matern}
\end{figure}

As in Section~\ref{sec:OOP-clut-unif2reg}, traversal cost $C$ exhibits strong right skew,  
motivating the use of a second-order robust linear regression model.  
We include all main effects and two-way interactions of the clustering parameters  
$\kappa$ and $r_0$ as predictors.  
During variable selection, the $\kappa^2$ term is eliminated,  
indicating a non-quadratic relationship between $\kappa$ and $C$.  
The residual standard error decreases from 6.88 (OLS) to 5.62 under the robust model,  
justifying its use.

Our estimated robust linear model takes the following form:
\begin{equation}
\label{eqn:rlm-false-cluster}
\widehat C = 95.996 + 0.65\,\kappa + 0.51\,r_0 - 0.01\,r_0^2 + 0.038\,\kappa\,r_0.
\end{equation}
The dominant trend in traversal cost $C$ is a positive, concave-down relationship with $r_0$,  
and an increasing effect of $\kappa$ that is amplified as $r_0$ increases.

We also fit a RF regression model with $C$ as the response  
and $\kappa$ and $r_0$ as predictors,  
following the approach in Section~\ref{sec:OOP-clut-unif2reg}.  
Variable importance rankings are $r_0$ followed by $\kappa$ (plot omitted).  
The RF model explains 47.82\% of the variance with a mean squared residual of 48.07.  
Hence, RF is not preferred for prediction in this setting.

For completeness, we also include the post-traversal covariate $N_{dis}$  
in a second robust regression model.  
This model can be used to estimate average traversal cost  
when $\kappa$, $r_0$, and $N_{dis}$ are known post hoc.  
We include main effects, squares, and two-way interactions among these variables,  
excluding only the insignificant $\kappa\,N_{dis}$ term during selection.  
Variable importance in the corresponding RF model (not shown)  
ranks $N_{dis}$ as most important, followed by $r_0$ and then $\kappa$.

\begin{multline}
\label{eqn:rlm-false-cluster-Ndis}
\widehat C = 97.82 + 0.544\,\kappa + 0.31\,r_0 + 6.75\,N_{dis} - 0.0056\,r_0^2 - 0.209\,N_{dis}^2 \\
+ 0.016\,\kappa\,r_0 - 0.038\,r_0\,N_{dis}.
\end{multline}

The positive $\kappa\,r_0$ interaction indicates that the effect of each variable  
on traversal cost intensifies with larger values of the other.  
Likewise, the negative $r_0\,N_{dis}$ interaction suggests  
that the cost contribution of disambiguations is dampened  
when obstacle clusters are more dispersed.

Due to the high proportion of zero disambiguations and the presence of overdispersion,  
we model $N_{dis}$ as the response and $\kappa$, $r_0$, and $n_F$ as predictors  
using a ZINB regression \citep{Zeileis2008}.  
The count component includes $\kappa$ and $r_0$,  
while the zero-inflation (logit) component uses $n_F$.

All predictors are statistically significant,  
indicating a significantly improved fit over an intercept-only model.  
In the count portion, the coefficients for $\kappa$ and $r_0$ are 0.198 and 0.042, respectively,  
implying that the expected change in $\log N_{dis}$ for a one-unit increase in $\kappa$  
is approximately 0.20, and for a one-unit increase in $r_0$ is 0.042,  
holding other variables constant.

In the logit portion, the coefficient for $n_F$ is $-0.07$,  
meaning that the log-odds of observing an excessive zero  
decrease with increasing false obstacle numbers.  
In other words, as the number of false obstacles increases,  
the likelihood of observing zero disambiguations declines.

Thus, disambiguation counts tend to rise with both $\kappa$ and $r_0$,  
consistent with the observed increase in traversal cost.  
Additionally, denser false obstacles makes disambiguations more likely.

\subsubsection{True Obstacles from Uniform to Clustered Patterns}
\label{sec:OOP-obs-unif2clust}

We consider the true obstacle only case under the same setting as in Section~\ref{sec:OOP-clut-unif2clust},  
using the same $\kappa$ and $r_0$ values,  
with the number of true obstacles $n_T$ set to $10\kappa$ on average.

Figure~\ref{fig:L-vs-Matern}(a) (middle) shows the mean traversal cost versus $r_0$ for each $\kappa$ value,  
averaged over the corresponding $n_T$ levels.  
The trends closely mirror those observed in the false-only case:  
traversal cost increases with both $\kappa$ and $r_0$,  
exhibiting a concave-down quadratic relationship with $r_0$.  
As before, higher $\kappa$ values result in greater obstruction  
due to wider spatial spread of obstacles.  
The contour plot in Figure~\ref{fig:L-vs-Matern}(b) (middle) confirms this pattern,  
with elevated cost levels across the $(\kappa, r_0)$ plane.  
Notably, except for very small $r_0 \approx 2.5$,  
the mean traversal cost is higher in the true-obstacle setting,  
due to rerouting upon encountering true obstacles.

Using the same modeling strategy as in Section~\ref{sec:OOP-clut-unif2clust},  
we fit a robust second-order linear model for $C$ with $\kappa$, $r_0$, and their interaction.  
Switching from OLS to robust regression reduces residual standard error from 34.78 to 7.68.  
The selected model (excluding the $\kappa^2$ term) is:
\begin{equation}
\label{eqn:rlm-true-cluster}
\widehat C = 96.23 + 0.27\,\kappa + 0.39\,r_0 - 0.013\,r_0^2 + 0.132\,\kappa\,r_0.
\end{equation}
This model reflects a positive and concave-down relationship with $r_0$,  
and an increasing effect of $\kappa$ that intensifies with $r_0$.  
The corresponding RF regression identifies $r_0$ as the most important predictor,  
followed by $\kappa$, consistent with the linear model.

We also fit a robust regression using $\kappa$, $r_0$, $N_{dis}$,  
and their squares and interactions as predictors.  
After eliminating the main effect of $\kappa$, the final model is:
\begin{multline}
\label{eqn:rlm-true-cluster-Ndis}
\widehat C = 97.36 + 0.516\,r_0 + 25.11\,N_{dis} + 0.04\,\kappa^2 - 0.012\,r_0^2 - 0.345\,N_{dis}^2 + 0.06\,\kappa\,r_0 \\
+ 0.447\,\kappa\,N_{dis} - 0.0756\,r_0\,N_{dis}.
\end{multline}
Variable importance in the RF model ranks predictors as $N_{dis}$, $r_0$, and $\kappa$,  
again highlighting the dominant influence of post-traversal disambiguation counts.

Finally, we model $N_{dis}$ using a ZINB regression with $\kappa$ and $r_0$ as count predictors  
and $n_T$ in the zero-inflation (logit) component.  
Here, only $r_0$ is statistically significant in the count part (coefficient $= 0.069$),  
indicating a positive effect on expected disambiguations.  
The $\kappa$ effect is not significant after controlling for $r_0$ and $n_T$.  
In the logit part, $n_T$ has a significant negative coefficient ($-0.067$),  
meaning that more true obstacles reduce the odds of zero disambiguations.

In summary, disambiguations—and hence traversal cost—increase with greater $r_0$ and $n_T$,  
while $\kappa$ has a relatively weaker influence in the presence of other variables.

\subsubsection{Mixed Obstacles from Uniform to Clustered Patterns}
\label{sec:OOP-mix-unif2clust}

We consider the case of mixed obstacles being inserted by OPA into the study window  
under the same setting as in Section~\ref{sec:OOP-clut-unif2clust}.  
We let the clustering parameter $\kappa$ vary from 2 to 12 in steps of 2,  
and the clustering radius $r_0$ take values in $\{2.5, 5, 7.5, 10, 15, 25, 50\}$.  
The total number of obstacles is fixed at $n_o = 10\,\kappa$ for each $\kappa$ value,  
with obstacle type compositions $(n_T, n_F)$ ranging from 10 to 100 in steps of 10,  
such that $n_T + n_F = n_o$.

Figure~\ref{fig:L-vs-Matern}(a) (bottom) shows the plot of mean traversal cost versus $r_0$ for each $\kappa$,  
averaged over obstacle number levels.  
The trend mirrors those seen in the false-only and true-obstacle-only cases:  
traversal cost increases with both $\kappa$ and $r_0$,  
and exhibits a concave-down quadratic profile in $r_0$.  
The corresponding contour plot in Figure~\ref{fig:L-vs-Matern}(b) (bottom) reinforces this pattern.  
As in earlier settings, the highest traversal costs occur when both $\kappa$ and $r_0$ are large.  
Compared to the false-only case, mean costs are generally higher across most $(\kappa, r_0)$ pairs,  
but remain lower than in the true-only case—except for very small $r_0$ values  
(e.g., $r_0 \approx 2.5$), where cluster compactness creates wider traversable gaps.

We fit a second-order robust linear model for traversal cost $C$,  
using $\kappa$, $r_0$, and $n_F$ (along with their squares and two-way interactions) as predictors.  
The robust model reduces residual standard error from 34.78 (OLS) to 7.68.  
No variables are eliminated during selection:
\begin{multline}
\label{eqn:rlm-mixed-cluster}
\widehat C = 97.165 - 0.84\,\kappa + 0.64\,r_0 + 0.134\,n_F + 0.125\,\kappa^2 - 0.0155\,r_0^2 + 0.0009\,n_F^2 \\
+ 0.09\,\kappa\,r_0 - 0.02\,\kappa\,n_F - 0.005\,r_0\,n_F.
\end{multline}

A RF regression using $C$ as the response and $\kappa$, $r_0$, and $n_F$ as predictors  
ranks variables by importance as: $\kappa$, $r_0$, and $n_F$—consistent with the linear model structure.

We also include $N_{dis}$ as a predictor in a second robust regression  
with the same covariates and interactions.  
The variable selection procedure eliminates the main effect of $\kappa$:
\begin{multline}
\label{eqn:rlm-mixed-cluster-Ndis}
\widehat C = 101.8 + 0.37\,r_0 - 0.10\,n_F + 11.674\,N_{dis} - 0.009\,r_0^2 + 0.00055\,n_F^2 + 0.11\,N_{dis}^2 \\
+ 0.043\,\kappa\,r_0 + 0.0072\,\kappa\,n_F + 0.876\,\kappa\,N_{dis} - 0.0015\,r_0\,n_F - 0.052\,r_0\,N_{dis} - 0.194\,n_F\,N_{dis}.
\end{multline}
The corresponding RF model with predictors $\kappa$, $r_0$, $n_F$, and $N_{dis}$  
ranks variable importance as: $N_{dis}$, $n_F$, $r_0$, and $\kappa$,  
confirming the dominant role of disambiguation count in explaining traversal cost in the mixed obstacle setting.

We fit a ZINB model to the disambiguation count $N_{dis}$  
using $\kappa$, $r_0$, and $n_F$ as predictors in the count part,  
and $n_T$ in the zero-inflation (logit) part.  
As in the false-only and true-only settings,  
both parts of the model yield statistically significant results.  
In the count part, the predictors $r_0$ and $n_F$  
have positive and statistically significant effects on $N_{dis}$,  
with estimated coefficients $0.08$ and $0.018$, respectively.  
This indicates that disambiguation count increases  
as the cluster radius and the number of false obstacles increase.  
While $\kappa$ also has a positive effect,  
it is not statistically significant after accounting for $r_0$ and $n_F$.  
In the logit part, $n_T$ has a statistically significant negative coefficient of $-0.063$,  
suggesting that the probability of observing zero disambiguations  
decreases as the number of true obstacles increases.  
This aligns with intuition:  
more true obstacles lead to more post-traversal disambiguation.  
Taken together, the ZINB model shows that $N_{dis}$  
increases with cluster radius $r_0$ and the number of false obstacles $n_F$,  
while the odds of zero disambiguation events decrease with increasing true obstacle count $n_T$.  
The clustering parameter $\kappa$ has minimal influence in this setting  
once $r_0$ and obstacle type composition are controlled for.

\subsection{Summary of the Simulation Results}
\label{sec:simulation-summary}

Based on the empirical results in Sections~\ref{sec:OOP-unif2reg} and~\ref{sec:OOP-unif2clust},  
we find that traversal cost is substantially higher when true obstacles are present  
compared to false obstacles, all else being equal.  
Hence, from OPA’s perspective, inserting more true obstacles—when available—  
is generally more effective at increasing traversal cost.

\noindent
\textbf{Obstacle Pattern Changing from Uniformity to Regularity:}
\begin{itemize}
 \item[] \textbf{False Obstacle Only Case:}  
   For Strauss$(n, d, \gamma)$ patterns, traversal cost is maximized when $d$ is moderate  
   (typically between 6 and 8) and $\gamma$ is small ($\lesssim 0.1$),  
   producing strong regularity.  
   In this regime, obstacle spacing reduces navigable corridors  
   without rendering the field too sparse.  
   For large $d$ ($\gtrsim 2r$), increasing $\gamma$  
   (i.e., reducing regularity) becomes preferable to avoid excessive spacing.

 \item[] \textbf{True Obstacle Only Case:}  
   Trends in $d$ and $\gamma$ resemble the false-only case,  
   with the difference that traversal cost is consistently higher  
   due to forced rerouting upon encountering true obstacles.  
   The effect of $\gamma$ and $d$ is negligible for small $n_T$ ($\leq 50$),  
   but becomes pronounced as $n_T$ increases.  
   Again, moderate $d$ and low $\gamma$ lead to maximal obstruction.

 \item[] \textbf{Mixed Obstacle Case:}  
   The optimal configuration closely follows that of the false obstacle case,  
   but with a key recommendation: if feasible,  
   OPA should prioritize inserting more true obstacles than false ones.  
   This increases the likelihood of disambiguation and rerouting by NAVA,  
   thus amplifying traversal cost.
\end{itemize}

\noindent
\textbf{Obstacle Pattern Changing from Uniformity to Clustering:}
\begin{itemize}
  \item 
  For all three obstacle types (false-only, true-only, and mixed),  
  traversal cost increases with both cluster radius $r_0$  
  and the number of clusters $\kappa$, especially when $r_0$ is moderate to large.  
  Tight clusters (small $r_0$) create navigable corridors, reducing traversal cost.  
  Larger $\kappa$ values spread obstacles more widely, increasing obstruction.

 \item 
 In the mixed case, cost trends again fall between the false- and true-only cases.  
 To maximize traversal cost, OPA should prefer configurations with larger $r_0$,  
 higher $\kappa$, and greater numbers of true obstacles than false.
\end{itemize}

\begin{remark}
\label{rem:summary-model-use}
The regression models in this section primarily serve to quantify  
the influence of covariates on traversal cost.  
However, they can also support predictive applications if key parameters—  
such as obstacle counts and spatial pattern characteristics—  
are known or estimated from data.  
In practice, Strauss and Matérn process parameters  
can be fitted using functions like \texttt{ppm} or \texttt{clusterfit}  
in the \texttt{spatstat.model} package in \texttt{R} \citep{baddeley2010}.  
These fitted models can then be used to simulate plausible scenarios  
and forecast traversal costs under different operational settings.
\end{remark}

\section*{Data Availability Statement}
We have published the data along with the corresponding simulation and analysis code on Zenodo. 
The materials are publicly available at \url{https://doi.org/10.5281/zenodo.17074761}
    
\bibliographystyle{apalike}
\bibliography{References}
\end{document}